\definecolor{mapleinput}{rgb}{0.5,0.0,0.0}
\definecolor{maplemath}{rgb}{0.0,0.0,1.0}
\definecolor{maplewarning}{cmyk}{0.0,1.0,.0.0,0.0}
\ttfamily\color{mapleinput},
\ttfamily\color{mapleinput},
\ttfamily\color{maplemath},
\newenvironment{MapleMath}{%
\color{maplemath}\upshape\rmfamily%
\setlength{\abovedisplayskip}{0ex}%
\setlength{\abovedisplayshortskip}{\abovedisplayskip}%
\setlength{\belowdisplayskip}{\medskipamount}%
\setlength{\belowdisplayshortskip}{0ex}%
\csname gather*\endcsname}{\csname endgather*\endcsname%
{\hrule height 0pt}%
\ignorespacesafterend}
\newcommand{\Hlog}[2]{\operatorname{Hlog}\left( #1, \left[ #2 \right] \right)}
\newcommand{\eprintlink}[2]{\href{#1}{\color{blue}#2}}
\definecolor{links}{rgb}{0,0.3,0}
\newcommand{\td}[1][]{\mathrm{d}^{#1}} % total differential
\newcommand{\R}{\mathbb{R}} % real numbers
\newcommand{\C}{\mathbb{C}} % complex numbers
\newcommand{\N}{\mathbb{N}} % natural numbers
\newcommand{\Z}{\mathbb{Z}} % integers
\newcommand{\abs}[1]{\left\lvert #1 \right\rvert}
\newcommand{\iu}{\mathrm{{i}}} % imaginary unit
\DeclareMathOperator{\tr}{tr} % trace
\newcommand{\defas}{\mathrel{\mathop:}=} % defined as
\newcommand{\ImPart}{\mathop{\mathrm{Im}}}
\newcommand{\RePart}{\mathop{\mathrm{Re}}}
\newcommand{\intbar}{\fint}
\newcommand{\textintbar}{\intbar}
\newcommand{\HT}{\mathcal{H}} % Hilbert transform
\DeclareMathOperator{\Li}{Li} % polylogarithm
\DeclareMathOperator*{\arctanD}{arctan}
\newcommand{\cN}{\mathcal{N}}
\newcommand{\cZ}{\mathcal{Z}} % partition function
\newcommand{\bigO}{\mathcal{O}}
\newcommand{\HyperInt}{\href{http://bitbucket.org/PanzerErik/hyperint/}{\texttt{\textup{HyperInt}}}}
\newtheorem{Proposition}{Proposition}
\newtheorem{Theorem}[Proposition]{Theorem}
\newtheorem{Lemma}[Proposition]{Lemma}
\newtheorem{Conjecture}[Proposition]{Conjecture}
\newtheorem{Remark}[Proposition]{Remark}
\newtheorem{Corollary}[Proposition]{Corollary}
\newcommand{\CO}{\Lambda^2}
\begin{document}

\title{Lambert-W solves the noncommutative $\Phi^4$-model}

\author{Erik Panzer}

\address{All Souls College, University of Oxford, OX1 4AL Oxford, 
United Kingdom} 

\email{erik.panzer@all-souls.ox.ac.uk}

\author{Raimar Wulkenhaar}

\address{Mathematisches Institut der WWU, 
Einsteinstr.\ 62, 48149 M\"unster, Germany}

\email{raimar@math.uni-muenster.de}

\begin{abstract}
The closed Dyson-Schwinger equation for the 2-point 
function of the noncommutative $\lambda \phi^4_2$-model is
rearranged into the boundary value problem for a sectionally
holomorphic function in two variables. We prove an exact formula 
for a solution in terms of Lambert's $W$-function. This solution is 
holomorphic in $\lambda$ inside a domain which contains 
$(-1/\log 4,\infty)$.
Our methods include the Hilbert transform, perturbation series and Lagrange-B\"{u}rmann resummation.
\end{abstract}

\subjclass[2010]{45G05, %Singular nonlinear integral equations
     30E25, %Boundary value problems 
     30E20, %Integration, integrals of Cauchy type, integral representations of analytic functions
     30B40, %Analytic continuation
     40E99%Inversion theorems
}
\keywords{Nonlinear integral equations, series expansion and
  resummation, analytic continuation}

\maketitle

\section{Introduction}

In this paper we solve an integral equation that determines the
2-point function of a two-dimensional, noncommutative quantum field
theory. Our solution involves the Lambert function \cite{Lambert,
  Knuth} defined by $W(z) e^{W(z)}=z$ and is given in
\begin{Theorem}
	\label{thm:Lambert}%
The non-linear integral equation
for functions $G_\lambda\colon \R_+\! \times \R_+ \to \R$, 
\begin{align}
(1+a+b) G_\lambda(a,b)
= 1
&+ \lambda \int_0^{\infty} \!\!\! \td p
\;\Big(\frac{G_\lambda(p,b)-G_\lambda(a,b)}{p-a}
+ \frac{G_\lambda(a,b)}{1+p}\Big)
\nonumber%
\\
&+ \lambda \int_0^{\infty} \!\!\! \td q \;\Big(
\frac{G_\lambda(a,q)-G_\lambda(a,b)}{q-b}+ \frac{G_\lambda(a,b)}{1+q}\Big)
\label{NLIQ}\\
&- \lambda^2
\int_0^{\infty} \!\!\! \td p\int_0^{\infty} \!\!\! \td q\;
	\frac{G_\lambda(a,b)G_\lambda(p,q)-G_\lambda(a,q)G_\lambda(p,b)}{(p-a)(q-b)},
\nonumber%
\end{align}
admits for any real $\lambda>0$ a solution of the form 
\begin{align}
G_\lambda(a,b)&=G_\lambda(b,a)=\frac{
	(1+a+b)\exp(N_\lambda(a,b))
}{
	\big(a+\lambda W_0\big(\frac{1}{\lambda} e^{(1+b)/\lambda}\big)\big)
	\big(b+\lambda W_0\big(\frac{1}{\lambda} e^{(1+a)/\lambda}\big)\big)
}\;,\quad\text{where}
\label{Gab-finalnew}%
\\
N_\lambda(a,b)& \defas \frac{1}{2\pi\iu}\int_{-\infty}^\infty \td t\;
\log\Big(1-\frac{\lambda\log(\frac{1}{2}{-}\iu t)}{a+\frac{1}{2}+\iu t}\Big)
\frac{\partial}{\partial t} 
\log\Big(1-\frac{\lambda\log(\frac{1}{2}{+} \iu t)}{b+\frac{1}{2}-\iu t}\Big)
\label{eq:Nlambda}%
\end{align}
and $W_0$ denotes the principal branch of the Lambert function. 
This solution and equation \eqref{NLIQ} extend analytically
to a domain of complex $\lambda$
which contains the real interval $(-\frac{1}{\log 4},\infty)$.
In particular, $G_{\lambda}(a,b)$ from \eqref{Gab-finalnew} is the unique solution of \eqref{NLIQ} that is analytic at $\lambda=0$, and its radius of convergence in $\lambda$ is $1/\log 4 \approx 0.72$.
\end{Theorem}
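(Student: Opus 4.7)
My plan is to convert \eqref{NLIQ} into a Riemann-Hilbert boundary-value problem using the Hilbert transform on $\R_+$, solve its linear part via Lambert's $W$-function, determine the nonlinear correction $\exp(N_\lambda)$ by a Cauchy integral along a vertical line, and finally control the analytic structure in $\lambda$ via perturbation theory and Lagrange-B\"urmann inversion.

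The first step is to interpret each of the two single integrals in \eqref{NLIQ} as (a constant times) the boundary-value jump on $\R_+$ of an auxiliary function sectionally holomorphic on $\C\setminus\R_+$; the tail $\int_0^\infty G_\lambda(a,b)/(1+p)\,\td p$ serves as an absolute-convergence regulator. By the Plemelj-Sokhotski formulas the linear part of \eqref{NLIQ} then translates into a multiplicative jump condition on an analytic continuation $z\mapsto G_\lambda(z,b)$ across $\R_+$, and symmetrically in the second variable. After taking logarithms this jump becomes additive and is solvable by an explicit Cauchy integral along the contour $\RePart z=\frac{1}{2}$.

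The crucial combinatorial observation for the quadratic term is that $G_\lambda(a,b)G_\lambda(p,q)-G_\lambda(a,q)G_\lambda(p,b)$ is the $2\times 2$ Pl\"ucker minor, which vanishes whenever $G_\lambda(a,b)$ has rank-one form $\alpha(a)\beta(b)$. The proposed formula \eqref{Gab-finalnew} has a rank-one denominator $\bigl(a+\lambda W_0(\Phi(b))\bigr)\bigl(b+\lambda W_0(\Phi(a))\bigr)$ with $\Phi(x)\defas\lambda^{-1}e^{(1+x)/\lambda}$, up to the non-product numerator correction $(1+a+b)e^{N_\lambda(a,b)}$. The zeros in $a$ of the first denominator factor solve the characteristic transcendental equation $1+a+b=\lambda\log(-a)$, whose principal solution $a=-\lambda W_0(\Phi(b))$ follows directly from $W(z)e^{W(z)}=z$; this is the inevitable source of the Lambert function. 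The correction $\exp(N_\lambda)$ is then pinned down by the residual jump of $\log\bigl(1-\lambda\log z/(a+z)\bigr)$ across $\RePart z=\frac{1}{2}$, which on parametrisation $z=\frac{1}{2}+\iu t$ yields exactly \eqref{eq:Nlambda}; its symmetric bilinear structure encodes the $a\leftrightarrow b$ symmetry of \eqref{NLIQ}. I would verify the complete ansatz by substitution into \eqref{NLIQ}: the two Hilbert integrals reproduce the logarithmic jumps, while the quadratic integrand, factored by Pl\"ucker, collapses to two contour integrals that are absorbed into $N_\lambda$, leaving precisely $(1+a+b)G_\lambda(a,b)-1$.

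The hard part lies in the analytic structure in $\lambda$. I would first establish uniqueness of a formal power series solution $G_\lambda(a,b)=\sum_{n\ge 0}\lambda^n G_n(a,b)$ from \eqref{NLIQ} itself, because its $\lambda^n$-coefficient expresses $G_n$ linearly in $G_0,\ldots,G_{n-1}$ with initial datum $G_0(a,b)=1/(1+a+b)$. I would then expand \eqref{Gab-finalnew} by Lagrange-B\"urmann inversion of $w+\log w=(1+x)/\lambda-\log\lambda$ for $w=W_0(\Phi(x))$, match the two series, and estimate the coefficients. The radius $1/\log 4$ should emerge as the nearest complex $\lambda$-singularity of this implicit function, transported from the branch point $W_0(-1/e)=-1$ through the map $\lambda\mapsto\Phi(0)$. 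Analytic continuation to real $\lambda\in(-1/\log 4,\infty)$ is then immediate, since $W_0$ is real-analytic on $(-1/e,\infty)$, the denominator zeros in \eqref{Gab-finalnew} remain complex for $a,b\ge 0$ and real positive $\lambda$, and the integrand of \eqref{eq:Nlambda} decays like $\log\lvert t\rvert/t^2$ at infinity.
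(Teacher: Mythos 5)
Your outline captures several genuine ingredients of the argument (a Plemelj/boundary-value reformulation, the Lambert function arising from $1+a+b=\lambda\log(-a)$, Lagrange--B\"urmann resummation, and uniqueness of the formal power series), but it has a gap exactly where the real work lies: the quadratic term. The observation that $G(a,b)G(p,q)-G(a,q)G(p,b)$ is a $2\times 2$ minor that vanishes for rank-one $G$ does not help, because the proposed solution is \emph{not} rank-one --- the numerator $(1+a+b)e^{N_\lambda(a,b)}$ does not factorise --- and the claim that the residual ``collapses to two contour integrals that are absorbed into $N_\lambda$'' is precisely the assertion that needs proof. The paper instead absorbs the double integral into a single sectionally holomorphic function $\Psi(z,w)$ of \emph{two} variables whose boundary values satisfy the multiplicative identity $\Psi^{++}\Psi^{--}=\Psi^{+-}\Psi^{-+}$; taking the modulus--phase decomposition converts this into a pair of Carleman singular integral equations for an angle function $\tau_a(b)$, and the entire problem reduces to one self-consistency condition \eqref{cottauba} on $I_\lambda(a)=\frac1\pi\int_0^\infty(\tau_a(p)-\frac{\lambda\pi}{1+p})\,\td p$. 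Verifying that the Lambert-W ansatz satisfies this fixed-point condition is \emph{not} a routine substitution: it requires the nontrivial integral identities \eqref{eq:L-Lambert-int}--\eqref{eq:K-Lambert-int} (Lemma~\ref{Lemma:J}), proved via contour deformation and a second application of Lagrange--B\"urmann. Your plan contains no analogue of this verification step, and ``verify by substitution into \eqref{NLIQ}'' hides exactly this computation.

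The second concrete error is the attribution of the radius of convergence. You propose that $1/\log 4$ is ``transported from the branch point $W_0(-1/e)=-1$ through the map $\lambda\mapsto\Phi(0)$.'' In fact the Lambert-W part $K(a,\lambda)=\lambda W_0(\frac1\lambda e^{(1+a)/\lambda})-1-a$ is jointly holomorphic for all $a\ge 0$ on a region $\Omega_K$ containing the full disk $\abs{\lambda}<1$ and the interval $(-1,\infty)$, so its branch points alone would give radius $\ge 1$. The tighter bound $1/\log 4\approx 0.72$ comes from the numerator integral \eqref{eq:Nlambda}: its integrand is singular where $a+\tfrac12+\iu t=\lambda\log(\tfrac12-\iu t)$, and at $a=0$, $t=0$ this gives $\lambda=-\tfrac{1}{2\log 2}=-\tfrac{1}{\log 4}$, the nearest point of the envelope $\mathcal{E}$ of these rays to the origin. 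Without separating the analyticity domains of the denominator ($K$, $L$) and of $N_\lambda$, your argument would land on the wrong radius.
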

As by-product we establish identities involving the Lambert function:
\begin{subequations}
\begin{gather}
\int_0^{\lambda} \frac{\td t}{t} 
	\frac{1}{1+W_0(\tfrac{1}{t} e^{1/t+a/\lambda})	}
= \log a - \log \left(
	\lambda W_0\big( \tfrac{1}{\lambda} e^{(1+a)/\lambda} \big)
	-1\right),
\label{eq:L-LambertInt}%
\\
\int_1^\infty \! \frac{\td u}{\pi}\;
\frac{\mbox{\small${\displaystyle\arctan\displaylimits_{[0,\pi]}  
\Big(\frac{\lambda\pi}{a{+}u -\lambda \log (u{-}1)}\Big)}$}}{u+z}
= \log \left( \frac{
	z+\lambda \log(1+z)-a}{
	1{+}z-\lambda W_0(\frac{1}{\lambda} e^{(1+a)/\lambda})
}\right),
\label{eq:L-Lambert-int}%
\\
\int_1^\infty  \frac{\td u}{\pi} \bigg( \!\!
\arctan\displaylimits_{[0,\pi]}
\Big(\frac{\lambda\pi}{a{+}u {-}\lambda \log (u{-}1)}\Big)
-\frac{\lambda\pi}{u}\bigg)
= \lambda W_0\big(\tfrac{1}{\lambda} e^{(1+a)/\lambda} \big)-1-a.
\label{eq:K-Lambert-int}%
\end{gather}
\end{subequations}
These are 
valid for all $a,\lambda \in \R_+$ and
$z\in \C\setminus {(-\infty,-1]}$. We also prove variants
of \eqref{eq:L-Lambert-int} and \eqref{eq:K-Lambert-int} for
$-1<\lambda<0$ which are not a simple continuation. 

We explain in section~\ref{sec:DSE} how the integral
equation~\eqref{NLIQ} arises from a quantum field theory model on a
noncommutative geometry. In sec.~\ref{sec:gakhov} we rewrite
\eqref{NLIQ} as a boundary value problem for a sectionally holomorphic
function in two variables which can partially be integrated to a
function $\tau(a)$ of a single variable. In
sec.~\ref{sec:perturbative} we determine the first terms of a formal
power series for $\tau(a)$ in $\lambda$. These terms are surprisingly simple and
allow us to guess the full formal power series in $\lambda$.  We resum
this series in sec.~\ref{sec:resummation}, prove that our guess is
correct and convert the result into the manifestly symmetric form
\eqref{Gab-finalnew}--\eqref{eq:Nlambda}. In
sec.~\ref{sec:continuation} we identify the holomorphic extension of
our solution and thereby finish the proof 
of Theorem~\ref{thm:Lambert}.

The integral \eqref{eq:Nlambda} is suitable for numeric evaluation of the function $N_\lambda(a,b)$. In sec.~\ref{sec:finalintegral} we discuss its perturbative expansion into Nielsen polylogarithms \cite{Nielsen:DilogarithmusVerallgemeinerungen}.
Some concluding remarks are collected in sec.~\ref{sec:discussion}.

\section{Dyson-Schwinger equation for the 2-point function}
\label{sec:DSE}

The $\lambda\phi^{\star 4}$-model with harmonic propagation 
on the $2$-dimensional Moyal plane is defined by the action functional \cite{GW:phi42matrixBase}
 \begin{equation}
	S(\phi)
	=\frac{1}{8\pi}
	\int_{\R^2} \td[2] x \left(
		\frac{1}{2}\,\phi\star \big(-\Delta +4\Omega^2 \abs{\Theta^{-1}x}^2 +\mu^2 \big)\phi
		+ \frac{\lambda}{4}\,\phi\star \phi\star \phi \star \phi
	\right),
	\label{eq:action}%
 \end{equation}
 where the mass $\mu$, coupling constant $\lambda$ and oscillator frequency $\Omega$ are real numbers.
The star denotes the Moyal product with deformation matrix 
 $\Theta=\left(\begin{smallmatrix}
	 0       & \theta  \\ 
	 -\theta & 0       \\
   \end{smallmatrix}\right)$
for a parameter $\theta>0$, defined by the oscillatory integral
\begin{equation*}
	(\phi\star \psi)(x)
	=
	\int_{\R^2\times \R^2} \frac{\td[2] y\, \td[2] k}{(2\pi)^2} 
	\;\phi(x+\tfrac{1}{2}\Theta k) \psi(x+y) 
	\,e^{\iu \langle y,k\rangle}\;.
\end{equation*}
There exists a family of ``matrix basis functions'' $f_{mn}$ on $\R^2$, indexed by $m,n \in \N_0$, which satisfy 
$(f_{mn} \star f_{kl})(x) = \delta_{nk} f_{ml}(x)$,
$\overline{f_{mn}(x)} =f_{nm}(x)$ and 
$\int_{\R^2} \td[2] x\;f_{mn}(x)=2\pi \theta \delta_{mn}$. 
For details see \cite{GraciaBondia:1987kw} or \cite[Appendix~A]{GW:phi42matrixBase}. The resulting correspondence
\begin{equation}
	\phi(x)
	=\sum_{m,n=0}^\infty \Phi_{mn} f_{mn}(x)
	\qquad \Leftrightarrow \qquad 
	\Phi_{kl}
	= \frac{1}{2\pi\theta} \int_{\R^2} \td[2] x \; (\phi \star f_{lk})(x)
	\label{eq:Frechet}%
\end{equation}
 defines an isomorphism of Fr\'echet algebras between Schwartz 
 functions with Moyal product and infinite matrices with rapidly 
 decaying entries. 
 This isomorphism extends to Moyal products between other classes of functions. 
 Real functions $\phi$ are represented by self-adjoint matrices $\Phi$. 

At critical frequency $\Omega=1$, the matrix basis functions satisfy 
\begin{equation*}
	(-\Delta + 4 \abs{\Theta^{-1}x}^2 ) f_{mn}
	= \tfrac{4}{\theta}(m+n+1) f_{mn}\;.
\end{equation*}
Therefore, at $\Omega=1$ the isomorphism \eqref{eq:Frechet} leads to
\begin{equation}
	S(\phi)
	\equiv S(\Phi)
	\stackrel{\Omega=1}{=} V\cdot \tr \left(E \Phi^2+\frac{\lambda}{4}\Phi^4 \right)\;,
	\label{action-matrix}%
\end{equation}
where $V=\frac{\theta}{4}$ and $E_{mn}=E_n \delta_{mn}$ is the diagonal matrix with
$E_n=\frac{\mu^2}{2}+\frac{n+1/2}{V}$. For the next steps matrix 
 sizes are restricted to $m,n\leq \cN$. Now the (Fourier
 transform of the) partition function is well-defined:
\begin{equation*}
	\cZ(J) 
	\defas \int D\Phi\;\exp \Big(-S(\Phi)+\iu V \tr (J\Phi) \Big)\;,
\end{equation*}
 where $D\Phi$ is the Lebesgue measure on
 $\R^{(\cN+1)^2}$ and $J$ another matrix with rapidly decaying entries for
 $\cN\to \infty$. As usual for matrix
 models, the logarithm of $\cZ(J)$ has an expansion into
 boundary cycles $(p^\beta_{N_\beta+1}\equiv p^\beta_{1}$),
\begin{equation*}
	\log \frac{\cZ(J)}{\cZ(0)} 
	=\sum_{B=1}^\infty
	\sum_{N_B\geq\ldots\geq N_1\geq 1} \frac{V^{2-B}}{S_{N_1\dots N_B}} 
\!\sum_{p^1_1,\dots,p^B_{N_B}} \!\!\!\!\!
	G_{|p^1_1\dots p^1_{N_1}|\dots|p^B_1\dots p^B_{N_B}|}
	\prod_{\beta=1}^B \Big(
		\prod_{j_\beta=1}^{N_\beta} \iu
		J_{p^\beta_{j_\beta}p^\beta_{j_\beta+1}}
	\Big).
\end{equation*}

The following Dyson-Schwinger equations for the 2- and 4-point functions 
were derived in \cite[equations~(3.4) and (3.7)]{GW:Phi44nonnon}, here 
with $\bigO(\tfrac{1}{V})$-terms 
suppressed:\footnote{A behaviour $\sum_n \sim V$ is assumed so that the sums 
in \eqref{Gab-old} are kept.}
\begin{align}
G_{|ab|}
&= \frac{1}{E_a+E_b}+\frac{(-\lambda)}{E_a+E_b} \Big(
	\frac{1}{V} \sum_{m=0}^{\cN} G_{|ab|}G_{|am|}
	-\frac{1}{V}\sum_{a\neq n=0}^{\cN} \frac{G_{|ab|}-G_{|nb|}}{E_a-E_n}
\Big)
\;,
\label{Gab-old}%
\\
G_{|abcd|}&= (-\lambda) \frac{
	G_{|ab|}G_{|cd|}-G_{|ad|}G_{|cb|}
}{
	(E_a-E_c)(E_b-E_d)
}\;.
\label{Gabcd}%
\end{align}
These equations rely on a Ward identity discovered in
\cite{Disertori:2006nq}. By the same techniques one can derive another
Dyson-Schwinger equation for the 2-point function (again with
$\bigO(\frac{1}{V})$-terms suppressed):
\begin{equation}
G_{|ab|}
= \frac{1}{E_a+E_b}
+ \frac{(-\lambda)}{E_a+E_b} \Big(
	\frac{1}{V^2} \sum_{m,n=0}^{\cN} G_{|bamn|}
	+\frac{1}{V} \sum_{n=0}^{\cN} G_{|ab|}(G_{|an|}+G_{|nb|})
\Big)\;.
\label{Gab-SD}%
\end{equation}
This Dyson-Schwinger equation has an obvious graphical interpretation. 
The proof combines \cite[eqs.~(3.2)~and~(3.3)]{GW:Phi44nonnon} in our conventions and for $a\neq b$ to 
\begin{equation*}
	G_{|ab|}
	= \frac{1}{E_a+E_b}
	+ \frac{(-\lambda)}{(E_a+E_b)V^3} 
	\sum_{m,n=0}^{\cN} \left.
		\frac{
			\partial^4(\exp(\log \frac{\cZ(J)}{\cZ(0)}))
		}{
			\partial J_{ba}\partial J_{am}
			\partial J_{mn}\partial J_{nb}
		}
	\right|_{J=0}\;.
\end{equation*}
Generically the $J$-differentiations yield the 4-point function 
$G_{|bamn|}$ to be summed over $m,n$. But there are also the cases 
$m=b$ or $n=a$ where a disconnected product of 2-point functions
contributes in $\exp(\log(\cZ(J)))$, producing the last terms in 
\eqref{Gab-SD}. Other contributions such as $m=n=a$ and $m=n=b$ are
$\bigO(\frac{1}{V})$-suppressed. 

We eliminate $\sum_n G_{|an|},\sum_n G_{|nb|}$ in \eqref{Gab-SD} via 
\eqref{Gab-old} and express $G_{|bamn|}$ in \eqref{Gab-SD} via 
\eqref{Gabcd}. The sums can safely exclude $m=b$ and $n=a$
because these contribute with an exceeding $\frac{1}{V}$-factor which
is anyway ignored. We have thus proved:

\begin{Lemma}
The 2-point function of the $\lambda\phi^{\star4}$-model with harmonic
propagation on 2-dimensional Moyal space satisfies in matrix
representation (with cut-off $\cN$, up 
to $\bigO(\frac{1}{V})$-corrections) 
the following closed Dyson-Schwinger equation: 
\begin{multline*}
G_{|ab|}= \frac{1}{E_a+E_b}\bigg\{
1 
-\frac{(-\lambda)^2}{V^2} 
\sum_{a\neq n=0}^{\cN} \sum_{b\neq m=0}^{\cN} 
\frac{
	G_{|ab|}G_{|nm|}-G_{|am|}G_{|nb|}
}{
	(E_a-E_n)(E_b-E_m)
}
\\
-\frac{(-\lambda)}{V}\sum_{a\neq n=0}^{\cN} 
\frac{G_{|ab|}-G_{|nb|}}{E_a-E_n}
- \frac{(-\lambda)}{V}\sum_{b\neq m=0}^{\cN} 
\frac{G_{|ab|}-G_{|am|}}{E_b-E_m}
\bigg\}\;.
\end{multline*}
\end{Lemma}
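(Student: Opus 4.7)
The plan is to derive the closed Dyson-Schwinger equation directly by combining the three preceding identities \eqref{Gab-old}, \eqref{Gabcd} and \eqref{Gab-SD} and tracking which substitutions introduce only $\bigO(\tfrac{1}{V})$-corrections. Concretely, I start from \eqref{Gab-SD} and substitute the 4-point function $G_{|bamn|}$ on the right-hand side using the Ward-identity consequence \eqref{Gabcd}. This replaces the double sum over $m,n$ with a sum of 2-point products divided by the energy denominators $(E_a-E_n)(E_b-E_m)$, at the cost of restricting to $a\neq n$ and $b\neq m$ where the denominators are defined.

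Next, I would eliminate the single sums $\sum_n (G_{|an|}+G_{|nb|})$ appearing in \eqref{Gab-SD}. The trick is that \eqref{Gab-old} can be rearranged to solve for these sums in terms of $G_{|ab|}$ and the single sum $\sum_n (G_{|ab|}-G_{|nb|})/(E_a-E_n)$ over telescoped differences. After the rearrangement, substituting back into \eqref{Gab-SD} produces exactly the two single sums of the stated closed equation and cancels any leftover reference to $\sum_n G_{|an|}$ or $\sum_n G_{|nb|}$.

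The one point requiring care is the range of summation. The sum \eqref{Gabcd} is only meaningful for $a\neq n$ and $b\neq m$, and similarly \eqref{Gab-old} excludes $a=n$. Reinstating the excluded indices costs at most finitely many bounded summands, each carrying an additional $\tfrac{1}{V}$ relative to the retained terms (as the paper notes for $m=n=a$ and $m=n=b$), so these contributions fit inside the $\bigO(\tfrac{1}{V})$ remainder that the lemma already absorbs. The diagonal cases $m=b$ or $n=a$ producing disconnected 2-point products in the $J$-derivatives of $\exp(\log \cZ(J)/\cZ(0))$ have already been accounted for in \eqref{Gab-SD}, so no double counting arises.

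I do not anticipate any serious obstacle: the argument is algebraic manipulation together with a $\tfrac{1}{V}$-power count. The main thing to be vigilant about is bookkeeping of signs (note the $(-\lambda)$ conventions throughout \eqref{Gab-old}, \eqref{Gabcd} and \eqref{Gab-SD}) and the placement of the overall prefactor $1/(E_a+E_b)$, which must be factored out after the substitutions to reach the displayed form. Once these are tracked, the stated closed Dyson-Schwinger equation drops out directly.
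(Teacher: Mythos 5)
Your proposal follows exactly the route the paper takes: substitute the Ward-identity expression \eqref{Gabcd} for $G_{|bamn|}$ into \eqref{Gab-SD}, eliminate the products $G_{|ab|}\sum_n G_{|an|}$ and $G_{|ab|}\sum_n G_{|nb|}$ via \eqref{Gab-old} (and its $a\leftrightarrow b$ counterpart), and absorb the excluded diagonal indices $n=a$, $m=b$ into the $\bigO(\tfrac{1}{V})$ remainder. This is correct and matches the paper's argument.
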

\noindent
Compared with \eqref{Gab-old} this equation is manifestly symmetric in 
$a,b$ and contains the $G$-quadratic terms in a more regular way.

As in \cite{GW:Phi44nonnon} we take a combined limit $\cN,V\to \infty$ with
$\frac{\cN}{V}= \CO$ fixed.
In this limit, $E_n\mapsto \frac{\mu^2}{2}+p$ with 
$p=\frac{n}{V}=\frac{n}{\cN}\frac{\cN}{V}\in [0,\CO]$. The 2-point function 
becomes a function $G(a,b)$ of real arguments $a,b\in [0,\CO]$, 
and the densitised sums converge to principal value Riemann 
integrals\footnote{%
	For H\"older-continuous $G(a,b)$, this is in fact just the ordinary integral. But since we will pull the numerators apart later, we write principal values already here.} over $[0,\CO]$:
\begin{multline}
(a+b+\mu^2) G(a,b)
= 1
+ \lambda \intbar_0^{\CO} \!\!\! \td p\;\frac{G(p,b)-G(a,b)}{p-a}
+ \lambda \intbar_0^{\CO} \!\!\! \td q\; \frac{G(a,q)-G(a,b)}{q-b}
\\
- \lambda^2
\intbar_0^{\CO} \!\!\! \td p
\intbar_0^{\CO} \!\!\! \td q\;
\frac{G(a,b)G(p,q)-G(a,q)G(p,b)}{(p-a)(q-b)}\;.
\label{Gab-integral}%
\end{multline}
This equation is exact: the previously ignored
$\bigO(\frac{1}{V})$-terms are strictly absent.

\section{A boundary value problem \`a la Gakhov}
\label{sec:gakhov}

We employ a method from Gakhov's book
\cite{Gakhov} on \emph{boundary value problems}:\footnote{%
	RW would like to thank Alexander Hock for pointing out this reference.
}
\begin{Proposition}\label{prop:Gakhov}%
The closed integral equation~\eqref{Gab-integral} for the 2-point 
function of the $\lambda\phi^{\star4}$-model with harmonic
propagation on 2-dimensional Moyal space is in the scaling limit 
$\cN\to \infty$, keeping $\frac{\cN}{V}=\CO$
fixed, equivalent to the following boundary value problem: 
Define a holomorphic function on $(\C\setminus [0,\CO])^2$ by
\begin{multline}
\Psi(z,w) \defas
	\mu^2+z+w
	+\lambda \log \frac{\CO-z}{(-z)}
	+\lambda \log \frac{\CO-w}{(-w)}
	\\
	\quad + 
	\lambda^2 \int_{0}^{\CO} \!\!\! \td p\int_{0}^{\CO} \!\!\! \td q \;
	\frac{G(p,q)}{(p-z)(q-w)}.
	\label{eq:Psi}%
      \end{multline}
  Let $\Psi^{\alpha\beta}(a,b) \defas
\lim_{\epsilon\to 0} 
\Psi(a+\iu\alpha\epsilon,b+\iu\beta\epsilon)$
denote its boundary values for the signs $\alpha,\beta = \pm 1$ of the infinitesimal imaginary parts. 
Then for any $a,b \in (0,\CO)$,
\begin{equation}
\Psi^{++}(a,b)\Psi^{--}(a,b) = \Psi^{+-}(a,b)\Psi^{-+}(a,b)\;.
	\label{Q}%
\end{equation}
\end{Proposition}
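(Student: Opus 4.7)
The plan is to apply the Sokhotski--Plemelj formulas separately in each variable of $\Psi$. First I would rewrite the logarithmic terms as Cauchy integrals, $\lambda \log \frac{\CO-z}{-z} = \lambda \int_0^{\CO} \frac{\td p}{p-z}$, so that $\Psi(z,w)$ becomes the sum of an affine function in $(z,w)$ and Cauchy transforms supported on $[0,\CO]$ and $[0,\CO]^2$. Plemelj applied in each variable then yields the decomposition
\begin{equation*}
\Psi^{\alpha\beta}(a,b) = R(a,b) + \iu\pi\alpha A(a,b) + \iu\pi\beta B(a,b) - \pi^2\alpha\beta\, \lambda^2 G(a,b), \qquad (a,b) \in (0,\CO)^2,
\end{equation*}
where $R$ collects all principal-value integrals together with $\mu^2+a+b+\lambda \log\tfrac{\CO-a}{a}+\lambda\log\tfrac{\CO-b}{b}$, while $A(a,b) = \lambda + \lambda^2 \intbar \frac{G(a,q)\,\td q}{q-b}$ and $B(a,b) = \lambda + \lambda^2 \intbar \frac{G(p,b)\,\td p}{p-a}$. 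The cross-term $-\pi^2\alpha\beta\,\lambda^2 G(a,b)$ arises as the product $(\iu\pi\alpha)(\iu\pi\beta)$ of the two Plemelj residues in the double Cauchy integral, where both variables cross the cut simultaneously.

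With $\alpha,\beta\in\{\pm 1\}$, a short expansion gives $\Psi^{++}\Psi^{--} = (R-\pi^2\lambda^2 G)^2 + \pi^2(A+B)^2$ and $\Psi^{+-}\Psi^{-+} = (R+\pi^2\lambda^2 G)^2 + \pi^2(A-B)^2$, so that the Gakhov condition \eqref{Q} collapses to the single real scalar identity
\begin{equation*}
R(a,b) \cdot \lambda^2 G(a,b) = A(a,b) \cdot B(a,b) \qquad \text{on } (0,\CO)^2.
\end{equation*}

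To conclude, I would verify that this identity is precisely \eqref{Gab-integral}. Expanding $A\cdot B$, dividing by $\lambda^2$, and using the separability
\begin{equation*}
\intbar_0^{\CO}\!\intbar_0^{\CO} \frac{G(a,q)G(p,b)}{(p-a)(q-b)}\,\td p\,\td q
= \Big(\intbar_0^{\CO} \frac{G(p,b)\,\td p}{p-a}\Big)\Big(\intbar_0^{\CO} \frac{G(a,q)\,\td q}{q-b}\Big),
\end{equation*}
which holds because the integrand factors, together with $\lambda \intbar \frac{G(p,b)-G(a,b)}{p-a}\,\td p = \lambda \intbar \frac{G(p,b)\,\td p}{p-a} - \lambda G(a,b)\log\tfrac{\CO-a}{a}$ and its $q$-symmetric analogue, one recovers \eqref{Gab-integral} term by term. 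The main obstacle is the bookkeeping of the two nested Plemelj residues in the double Cauchy integral and ensuring $G$ is regular enough (e.g.\ H\"older continuous on $(0,\CO)^2$) for all iterated boundary-value limits to exist and for the separability identity to hold as principal values.
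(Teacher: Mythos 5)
Your proposal is correct and follows essentially the same route as the paper: apply the Sokhotski--Plemelj formulae in each variable (including the iterated residue $-\pi^2\alpha\beta\lambda^2 G$ from the double Cauchy integral), insert the four boundary values into \eqref{Q}, and observe that the resulting identity $A\cdot B = \lambda^2 R\, G$, i.e.\ $R\,G=(1+\lambda \textintbar\frac{G(p,b)\,\td p}{p-a})(1+\lambda \textintbar\frac{G(a,q)\,\td q}{q-b})$, rearranges term by term into \eqref{Gab-integral}. The paper states the rearrangement is "easily" done; your write-up simply makes that bookkeeping explicit, with the same decomposition and the same use of $\log\frac{\CO-z}{-z}=\int_0^{\CO}\frac{\td p}{p-z}$.
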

\begin{proof}
The boundary values on $w,z\rightarrow (0,\CO)$ of the 
integral in \eqref{eq:Psi},
\begin{equation*}
	Q(z,w)
	\defas \frac{1}{\pi^2}\int_{0}^{\CO} \!\!\! \td p
	\int_{0}^{\CO} \!\!\! \td q \; \frac{G(p,q)}{(p-z)(q-w)}\;,
\end{equation*}
are given by the Sokhotski-Plemelj formulae
\begin{multline}
Q^{\alpha \beta}(a,b)
=\frac{1}{\pi^2}\intbar_{0}^{\CO} \!\!\! \frac{\td p}{p-a}
\intbar_{0}^{\CO} \!\!\! \frac{\td q}{q-b} \;G(p,q)
\\
+
\frac{\iu \alpha}{\pi}\intbar_{0}^{\CO} \!\!\! \td q \;
\frac{G(a,q)}{q-b}
+
\frac{\iu \beta}{\pi}\intbar_{0}^{\CO} \!\!\! \td p \;
\frac{G(p,b)}{p-a}
-\alpha\beta G(a,b).
\label{eq:Plemelj}%
\end{multline}
Inserting these boundary values and those of the logarithm
\begin{equation*}
	\lim_{\epsilon\rightarrow 0}\;
\lambda \log \left. \frac{\CO-z}{(-z)} \right|_{z=a+\iu\alpha\epsilon}
= \lambda \log \frac{\CO-a}{a}+\alpha\iu\pi \lambda
\end{equation*}
into \eqref{Q} gives a formula which is easily rearranged into 
\eqref{Gab-integral}.
\end{proof}

We find it remarkable that an interacting quantum field theory,
which when expanded into Feynman graphs evaluates to 
Nielsen polylogarithms (see later),
admits such a simple presentation.
Unfortunately we are not aware of a solution theory for
such boundary value problems, and therefore develop an ad hoc approach
in the sequel.

First, assuming that $G(a,b)$ falls off asymptotically 
like in the four-dimensional case \cite{GrosseWulkenhaar:OnTheFixed}, 
we achieve a well-defined limit $\lim_{\CO\to \infty} \Psi$ upon 
choosing the bare mass 
\begin{equation*}
\mu^2=1-2\lambda \log(1+\CO)
=1-\lambda\int_0^{\CO} \frac{\td p}{1+p}
-\lambda\int_0^{\CO} \frac{\td q}{1+q}.
\end{equation*}
This choice was made in \eqref{NLIQ}. 
Next, we require $\lambda$ and consequently $G(p,q)$ to be real, which
implies $\overline{\Psi(z,w)} = \Psi(\bar{z},\bar{w})$. Now
\eqref{Q} can be written as $|\Psi^{++}|=|\Psi^{+-}|$. Hence there is
a continuous real 
function $\tau_a(b)$, not expected to be symmetric under 
$a\leftrightarrow b$, with
\begin{equation}
	\Psi^{++}(a,b) e^{-\iu \tau_a(b)}=\Psi^{+-}(a,b) e^{\iu \tau_a(b)}\;.
	\label{eq:Gakhov-real}%
\end{equation}
We denote the \emph{finite Hilbert transform} of a function $f$ in 
one or two variables as
\begin{align*}
	\HT_{a}^\Lambda[f(\bullet)] &\defas
\frac{1}{\pi} \;
\intbar_{0}^{\CO} \!\! \frac{\td p}{p-a}
f(p)
	\quad\text{and} \\
\HT_{a,b}^\Lambda[f(\bullet,\bullet)] &\defas
\frac{1}{\pi^2} \;
\intbar_{0}^{\CO} \!\! \frac{\td p}{p-a}
\intbar_{0}^{\CO} \!\! \frac{\td q}{q-b} \; f(p,q).
\end{align*}
The substitutions \eqref{eq:Plemelj} and \eqref{eq:Psi} then turn \eqref{eq:Gakhov-real} into
\begin{align*}
& 
\big(1+a+b+\lambda\log\tfrac{\CO-a}{(1+\CO)a}
+\lambda\log\tfrac{\CO-b}{(1+\CO)b}
+2\pi\lambda\iu
\\
&\qquad\quad +\lambda^2\pi^2 \big(\HT^\Lambda_{a,b}[G(\bullet,\bullet)]
-G(a,b)+\iu \HT^\Lambda_b[G(a,\bullet)]
+\iu \HT^\Lambda_a[G(\bullet,b)]\big)
\big)e^{-\iu \tau_a(b)}
\\
&= \big(1+a+b+\lambda\log\tfrac{\CO-a}{(1+\CO)a}
+\lambda\log\tfrac{\CO-b}{(1+\CO)b}
\\
&\qquad\quad
+ \lambda^2\pi^2 \big(\HT^\Lambda_{a,b}[G(\bullet,\bullet)]
+G(a,b)+\iu \HT^\Lambda_b[G(a,\bullet)]
-\iu \HT^\Lambda_a[G(\bullet,b)]\big)\big)
e^{\iu \tau_a(b)}.
\end{align*}
Comparing the real- and imaginary parts identifies this relation with the system
\begin{subequations}\begin{align}
G(a,b) \cot (\tau_a(b))- \HT_b^\Lambda[G(a,\bullet)]  
	&= \frac{1}{\lambda\pi} \;,
	\label{Carl1}%
\\
\mathcal{G}(a,b) \cot (\tau_a(b)) - \HT_b^\Lambda [\mathcal{G}(a,\bullet)] 
	&= \frac{
		1{+}a{+}b+\lambda\log\frac{\CO-a}{a(1+\CO)}
-\lambda\log(1{+}\CO)
	}{
		\lambda^2 \pi^2
	},\label{Carl2}%
\end{align}\end{subequations}
where $\mathcal{G}(a,b) \defas
\frac{1}{\lambda\pi}+\HT_a^\Lambda[G(\bullet,b)]$. These equations are
Carleman-type singular integral equations for which a solution theory
is developed e.g.\ in \cite[\S4.4]{Tricomi}. The method is sensitive
to the sign of $\lambda$. We restrict ourselves to 
$\lambda>0$; only in the very end we will pass via analytic continuation to 
some complex and negative $\lambda$.

The starting point of the solution theory for Carleman equations is
\begin{Proposition}
	For any H\"{o}lder continuous function
$\tau\colon [0,\CO]\longrightarrow \R$, one has
\begin{subequations}
\begin{align}
 \HT_b^{\Lambda}\big[e^{\HT^\Lambda_\bullet[\tau]} \sin \tau(\bullet)\big] 
&= e^{\HT^\Lambda_b[\tau]} \cos \tau(b) -1,
& & \text{if $0<b<\CO$, and}
\label{Tricomi-18}%
\\
\int_0^{\CO} \!\!\frac{\td p}{\pi} \;
\frac{e^{\HT^\Lambda_p[\tau]} \sin \tau(p)}{p-b}
&= \exp\Big(\int_0^{\CO} \!\! \frac{\td p}{\pi}
\;\frac{\tau(p)}{p-b} 
\Big) -1
& & \text{for $b<0$ or $b>\CO$.}
\label{Tricomi-outside}%
\end{align}
We also have the corollary
\begin{equation}
\int_0^{\CO} \!\!\td p\;
e^{\pm \HT^\Lambda_p[\tau]} \sin \tau(p) 
= \int_0^{\CO} \!\! \td p 
\;\tau(p)\;.
\label{eq:tauidentity}%
\end{equation}
\end{subequations}
\end{Proposition}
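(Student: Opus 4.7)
The natural strategy is to package all three identities into the analytic structure of a single sectionally holomorphic function built from $\tau$. Define
\begin{equation*}
F(z) \defas \exp\Big( \frac{1}{\pi} \int_0^{\CO} \frac{\tau(p)}{p-z}\,\td p\Big),
\qquad z\in\C\setminus[0,\CO].
\end{equation*}
Since $\tau$ is H\"older continuous, the Sokhotski-Plemelj formulae give $\frac{1}{\pi}\int_0^\CO \frac{\tau(p)}{p-z}\td p \to \HT^\Lambda_b[\tau] \pm \iu\,\tau(b)$ as $z\to b\pm\iu 0$ for $0<b<\CO$, so
\begin{equation*}
F(b\pm\iu 0) = e^{\HT^\Lambda_b[\tau]}\big(\cos\tau(b)\pm \iu\sin\tau(b)\big).
\end{equation*}
In particular $F-1$ is holomorphic on $\C\setminus[0,\CO]$, decays to $0$ at $\infty$, has H\"older continuous boundary values, and its jump across the cut equals $F(b+\iu 0)-F(b-\iu 0) = 2\iu\, e^{\HT^\Lambda_b[\tau]}\sin\tau(b)$.

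The key step is to recover $F-1$ from its jump via the Cauchy integral. Applying the Cauchy formula on a dogbone contour hugging the cut, and sending the outer radius to infinity (where $F-1\to 0$ suffices for the large arc to vanish), yields
\begin{equation}\label{eq:mainplan}
F(z) - 1 = \frac{1}{\pi}\int_0^{\CO} \frac{e^{\HT^\Lambda_p[\tau]}\sin\tau(p)}{p-z}\,\td p,
\qquad z\in\C\setminus[0,\CO].
\end{equation}
Formula \eqref{Tricomi-outside} is precisely \eqref{eq:mainplan} evaluated at a real $z=b\notin[0,\CO]$, after writing $F(b)$ as its defining exponential.

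Finally, I would deduce \eqref{Tricomi-18} by comparing boundary values from inside the cut: the Sokhotski-Plemelj formula applied to the right-hand side of \eqref{eq:mainplan} at $z=b+\iu 0$ gives
\begin{equation*}
\HT^\Lambda_b\big[e^{\HT^\Lambda_\bullet[\tau]}\sin\tau(\bullet)\big] + \iu\, e^{\HT^\Lambda_b[\tau]}\sin\tau(b),
\end{equation*}
which must agree with $F(b+\iu 0)-1 = e^{\HT^\Lambda_b[\tau]}(\cos\tau(b)+\iu\sin\tau(b))-1$; the real parts give \eqref{Tricomi-18}, and the imaginary parts are a tautology. The corollary \eqref{eq:tauidentity} comes from the asymptotic expansion of both sides of \eqref{eq:mainplan} as $z\to\infty$: the coefficient of $-1/(\pi z)$ on the left is $\int_0^\CO\tau(p)\,\td p$ (from expanding the exponent) and on the right is $\int_0^\CO e^{\HT^\Lambda_p[\tau]}\sin\tau(p)\,\td p$, giving the $+$ sign; the $-$ sign follows by applying the result to $-\tau$ in place of $\tau$.

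The only real obstacle is justifying \eqref{eq:mainplan} rigorously, i.e.\ checking that the large-circle contribution in the dogbone contour vanishes and that H\"older regularity propagates through the exponential so that the Plemelj representation theorem applies. Both are standard once one observes that $\tau$ H\"older continuous implies $\HT^\Lambda_\bullet[\tau]$ H\"older continuous on compact subsets of $(0,\CO)$, hence $e^{\HT^\Lambda_p[\tau]}\sin\tau(p)$ is H\"older continuous, and that $F(z)=1+O(1/z)$ at infinity.
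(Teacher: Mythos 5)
Your proposal is correct and follows essentially the same route as the paper: both construct the sectionally holomorphic function $\exp\big(\frac{1}{\pi}\int_0^{\CO}\frac{\tau(p)}{p-z}\,\td p\big)-1$, identify its jump across $[0,\CO]$ via Sokhotski--Plemelj, and read off \eqref{Tricomi-18} and \eqref{Tricomi-outside} from its Cauchy/Hilbert representation. Your derivation of \eqref{eq:tauidentity} from the $1/z$ coefficient at infinity is the same computation the paper performs by multiplying \eqref{Tricomi-outside} for $\pm\tau$ by $b$ and letting $b\to\infty$.
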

\begin{proof}
Consider the function 
$
	\varphi(z)
	\defas 
	\exp\big( 
		\frac{1}{\pi}
		\int_0^{\CO} 
		\td p \;\frac{\tau(p)}{p-z}
	\big)
	-1
$.
It is holomorphic on $\C\setminus [0,\CO]$ and it decays at least like $\abs{z}^{-1}$ at infinity.
The fundamental property
\begin{equation*}
	\frac{1}{\pi} \intbar_{-\infty}^\infty \frac{\td p}{p-b}  \; 
	\ImPart (\varphi(p+\iu \epsilon))
	= \RePart (\varphi(b+\iu \epsilon))
	\qquad\text{(where $b\in\R$)}
\end{equation*}
of the Hilbert transform over $\R$ implies \eqref{Tricomi-18} and \eqref{Tricomi-outside}, because
$\ImPart (\varphi(p+\iu \epsilon))
=e^{\HT^\Lambda_p[\tau]} \sin \tau(p)$
inside the support $[0,\CO]$, and  
$\ImPart (\varphi(p+\iu \epsilon))=0$
for $p \in \R\setminus [0,\CO]$. 
Multiply \eqref{Tricomi-outside} for $\pm\tau$ with $b>\CO$ and take the limit $b\to \infty$ to obtain \eqref{eq:tauidentity}.
\end{proof}
With \eqref{Tricomi-18} we can immediately solve \eqref{Carl1} in terms of the angle function to 
\begin{equation}
G(a,b)=e^{\HT_b^\Lambda[\tau_a(\bullet)]} \frac{\sin \tau_a(b)}{\lambda\pi}\;.
\label{Gab-real}%
\end{equation}
In order to describe a meaningful two-point function $G(a,b)>0$ it
remains to verify that $\tau_a(b)\in [0,\pi]$ is consistent with
continuity of $\tau$. To create the term linear in $b$ on the 
right-hand side of \eqref{Carl2}, observe that
\begin{equation*}
\HT_b^\Lambda [\bullet \cdot G(a,\bullet)]
=
\frac{1}{\pi} \intbar_0^{\CO}\!\!\! \left( \frac{b\; \td p}{p-b} +\td p \!\right) G(a,p)
= b \HT_b^\Lambda [ G(a,\bullet) ] 
+ \frac{1}{\lambda\pi^2} \int_0^{\CO} \!\!\td p\;\tau_a(p)
\end{equation*}
by virtue of \eqref{Gab-real} and \eqref{eq:tauidentity}.
It is then easily checked with \eqref{Carl1} that
\begin{equation*}
\mathcal{G}(a,b) \defas \frac{
	1+a+b+\lambda\log\frac{\CO-a}{a(1+\CO)}
	+\frac{1}{\pi} \int_0^{\CO} \td p \;\tau_a(p)
	-\lambda \log(1{+}\CO)	
}{
	\lambda \pi
} G(a,b)
\end{equation*}
solves \eqref{Carl2}. However, this is not necessarily the only
solution because (see \cite{Tricomi})
\begin{equation}
\frac{e^{\HT^\Lambda_b[\tau_a]}\cos
  \tau_a(b)}{\CO-b}-
\HT^\Lambda_b\Big[
	\frac{e^{\HT^\Lambda_\bullet[\tau_a]}\sin
	\tau_a(\bullet)}{\CO-\bullet}
  \Big]=0\;.
\label{eq:Carl-hom}
 \end{equation}
Therefore, also a shift $\mathcal{G}(a,b) \mapsto \mathcal{G}(a,b) 
+\frac{\CO h(a)}{\lambda\pi(\CO-b)} G(a,b)$ solves 
\eqref{Carl2} for any function $h(a)$. We will not exploit this
freedom and consider the solution arising from setting $h(a)=  0$. 
We will prove that it gives rise to an analytic solution $G_\lambda(a,b)$ of 
\eqref{NLIQ} in a neighbourhood of $\lambda=0$, but there might be 
other, non-analytic solutions.

Recall that we had 
$\mathcal{G}(a,b) = \frac{1}{\lambda\pi}+\HT_a^\Lambda[G(\bullet,b)]$, 
which is equal to $G(a,b) \cot(\tau_b(a))$ by \eqref{Carl1} for symmetric $G(a,b)=G(b,a)$. We thus need
\begin{equation}
\label{cottauba-co}%
\cot \tau_b(a)
= \frac{1}{\lambda\pi}\bigg\{
	1+a+b+\lambda\log\frac{\CO-a}{a(1+\CO)}
	+ \frac{1}{\pi}\int_0^{\CO} \!\!\! \td p\; 
\Big(\tau_a(p)	-\frac{\lambda\pi}{1+p}\Big)\bigg\}\;. 
\end{equation}
For every angle function subject to this constraint, we have just constructed a solution \eqref{Gab-real} of the non-linear integral equation \eqref{Gab-integral}.
Analogous identities for the $\lambda\phi^{\star 4}_4$-model were derived in \cite{GW:Phi44nonnon}, by a completely different strategy. Note that \eqref{cottauba-co} implies in particular that $\cot \tau_b(a)= \frac{b}{\lambda\pi} +\cot \tau_0(a)$.

Taking the limit $\CO\to \infty$, we obtain a solution of our initial problem:
\begin{Proposition}
\label{Prop:Ilambda}%
The integral equation \eqref{NLIQ} is solved for $\lambda>0$ by
\begin{equation*}
	G_\lambda(a,b) \defas 
	\frac{\sin \tau_b(a)}{\lambda\pi}
	e^{\HT_a[\tau_b(\bullet)]}, \quad\text{where}\quad
	\HT_a[f(\bullet)]
	\defas
	\frac{1}{\pi}\! \textintbar_{0}^{\infty} \!\! \frac{\td p}{p-a} f(p)
\end{equation*}
is the one-sided Hilbert transform,
provided that there is a  continuous solution of
\begin{equation}\label{cottauba}\begin{split}
\tau_b(a)
	&\phantom{:}=\arctan\displaylimits_{[0,\pi]} 
\Big(\frac{\lambda\pi}{1+a+b-\lambda\log a + I_\lambda(a)}\Big)\;,
	\quad\text{where}
\\
	I_\lambda(a)
	& \defas \frac{1}{\pi}\int_0^\infty \!\! \td p\; \Big(
		\tau_a(p)-\frac{\lambda\pi}{1+p}	
	\Big)\;.
\end{split}\end{equation}
\end{Proposition}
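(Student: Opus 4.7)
My plan is to trace the derivation of the preceding subsection in reverse and to take the $\CO \to \infty$ limit of that finite-cutoff construction. Assume a continuous $\tau_b(a)$ solves \eqref{cottauba} and define $G_\lambda(a,b)$ by the stated formula; I want to show that this $G_\lambda$ solves \eqref{NLIQ}.

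The verification proceeds in three steps. First, I pass to $\CO \to \infty$ in \eqref{cottauba-co}: the combination $\lambda \log\frac{\CO-a}{a(1+\CO)}$ tends to $-\lambda\log a$, and the divergent boundary term $-\lambda\log(1+\CO)$ is absorbed by the counterterm $-\lambda\pi/(1+p)$ appearing in $I_\lambda$, since $\lambda\int_0^{\CO} \frac{\td p}{1+p} = \lambda\log(1+\CO)$. This matches the renormalisation $\mu^2 = 1-2\lambda\log(1+\CO)$ chosen earlier, and the limit of \eqref{cottauba-co} is exactly \eqref{cottauba}, provided $I_\lambda(a)$ converges. Second, substitute the formula for $G_\lambda$ into the half-line analogues of the Carleman equations \eqref{Carl1} and \eqref{Carl2}: the former reduces tautologically to the $\CO = \infty$ version of \eqref{Tricomi-18} applied to $\tau_b$, while \eqref{Carl2} rearranges to exactly \eqref{cottauba} after using the limit of \eqref{eq:tauidentity} to convert $\HT_a[\bullet\, G_\lambda(\bullet,b)]$ into $a\HT_a[G_\lambda(\bullet,b)]$ plus an integral of $\tau_b$. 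Third, the two Carleman equations together with reality and symmetry of $G_\lambda$ are equivalent to the Gakhov boundary condition $\lvert\Psi^{++}\rvert = \lvert\Psi^{+-}\rvert$, which by Proposition~\ref{prop:Gakhov} (taken in the limit $\CO \to \infty$ with the renormalised mass) is equivalent to \eqref{NLIQ}.

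The main obstacle is to justify the $\CO\to\infty$ limit rigorously: the identities \eqref{Tricomi-18} and \eqref{eq:tauidentity} were proved for the compact interval and require H\"older regularity of the angle together with adequate control of its tail. On the half-line one needs roughly $\tau_a(p) - \lambda\pi/(1+p) = O(p^{-2})$ as $p\to\infty$ so that $I_\lambda(a)$ converges absolutely and dominated convergence legitimises every interchange of limit, integral and Hilbert transform. This decay is not apparent from continuity alone and is therefore assumed here at the level of an existence hypothesis; it will be confirmed a posteriori from the explicit Lambert-$W$ solution constructed in the subsequent sections, which closes the argument.
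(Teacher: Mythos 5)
Your proposal reconstructs exactly the paper's route: the Proposition is stated there without a separate proof precisely because it is the $\CO\to\infty$ limit of the chain Gakhov boundary condition $\Rightarrow$ Carleman system \eqref{Carl1}--\eqref{Carl2} $\Rightarrow$ Tricomi identities \eqref{Tricomi-18}, \eqref{eq:tauidentity} $\Rightarrow$ \eqref{Gab-real} and \eqref{cottauba-co}, with Proposition~\ref{prop:Gakhov} closing the loop back to \eqref{NLIQ}. Your caveat about justifying the limit matches the paper's own treatment, which assumes the decay of $G$ up front and verifies it a posteriori from the explicit Lambert-$W$ solution.
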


\section{Perturbative solution}
\label{sec:perturbative}

We try to solve \eqref{cottauba} as a formal power series in
$\lambda$. This strategy leads surprisingly far. The solution 
clearly starts with 
$\tau_b(a)=\frac{\lambda\pi}{1+a+b}+\bigO(\lambda^2)$
which gives the $2$-point function $G_\lambda(a,b)=\frac{1}{1+a+b}+\bigO(\lambda)$.
It follows then that
\begin{align*}
I_\lambda(a) 
	&= \lambda \int_0^{\infty} \!\! \td p\left(
		\frac{1}{1+p+a}-\frac{1}{1+p}
	\right) + \bigO(\lambda^2)
	 =-\lambda \log (1+a)+\bigO(\lambda^2)\;,
\\
\tau_b(a)
	&=\frac{\lambda\pi}{1+a+b}
	+\frac{\lambda^2\pi}{(1+a+b)^2}\big(\log a +\log(1+a)\big)
	+\bigO(\lambda^3)\;.
\end{align*}
By elementary techniques, we continued these integrations and obtained $I_\lambda(a)$ up to corrections of order $\bigO(\lambda^5)$. The result is strikingly simple and structured:
\begin{align}
I_\lambda(a) &= ({-}\lambda) \log(1{+}a)
+\tfrac{({-}\lambda)^2}{1{+}a} ((1{+}a){+}a) \tfrac{\log(1{+}a)}{a}
\nonumber
\\
&+\tfrac{({-}\lambda)^3}{(1{+}a)^2} 
\big(((1{+}a){+}a)\tfrac{\log(1{+}a)}{a}
- ((1{+}a)^2{+}a^2) \tfrac{(\log(1{+}a))^2}{2a^2}
\big)
\nonumber
\\
& + 
\tfrac{({-}\lambda)^4}{(1{+}a)^3}\big( 
((1{+}a){+}a) \tfrac{\log(1{+}a)}{a} 
- \big(2((1{+}a)^2{+}a^2) +a((1{+}a){+}a)\big)\tfrac{(\log(1{+}a))^2}{2a^2}
\nonumber
\\
&\qquad\qquad + ((1{+}a)^3{+}a^3) \tfrac{(\log(1{+}a))^3}{3a^3}\big)
+\bigO(\lambda^5)\;.
\label{eq:Ilambda4}
\end{align}
Interestingly, the Hilbert transform of this simple angle function
generates higher transcendental functions, for example polylogarithms
$\Li_n(z) = \sum_{k>0} z^k/k^n$, which show up in the $2$-point
function. To second order, we find
\begin{equation}\begin{split}
\label{Gab-pert}%
	G_\lambda(a,b)
	&=\frac{1}{1+a+b}
	+\frac{\lambda}{(1+a+b)^2} \Big(\log(1+a)+\log(1+b) \Big)
\\ &
	-\frac{\lambda^2}{(1+a+b)^2}\Big(
		\frac{1+2a}{a(1+a)}\log(1+a)
		+\frac{1+2b}{b(1+b)}\log(1+b)
	\Big)
\\ &
	+\frac{\lambda^2}{(1+a+b)^3}\Big(
		\log^2(1+a)
		+\log(1+a)\log(1+b)
		+\log^2(1+b)
\\&\qquad\qquad\qquad\quad
		+\zeta(2)-\Li_2(-a)-\Li_2(-b)
	\Big)
	+\bigO(\lambda^3).
\end{split}\end{equation}

\subsection*{Higher orders}
As illustrated above, the perturbative calculation leads to
expressions that are rational linear combinations of logarithms.
Furthermore, they have only very simple singularities, confined to the
hyperplanes
\begin{equation*}
	a=0, \quad \quad b=0, \quad a+1=0, \quad b+1=0 \quad\text{and}\quad 1+a+b=0.
\end{equation*}
The integration theory on such hyperplane complements\footnote{%
  Our case is isomorphic to the moduli space
  $\mathfrak{M}_{0,5}$ of genus zero curves with $5$ marked points. } is completely understood
\cite{Brown:PhD,Panzer:PhD} in terms of iterated integrals, and
computer implementations are available
\cite{Bogner:MPL,Panzer:HyperInt}.  We note that there is also an
alternative approach based on the toolbox of holonomic recurrences
\cite{AblingerBluemleinSchneider:GeneralizedHarmonicSumsAndPolylogarithms,Schneider:ModernSummation}.

Using {\HyperInt} \cite{Panzer:HyperInt}, it is straightforward to compute higher orders of $I_{\lambda}(a)$.\footnote{%
	{\HyperInt} can also find \eqref{Gab-pert} by integrating directly the perturbative expansion of \eqref{NLIQ}. Furthermore, {\HyperInt} computes $H=\int_0^{\infty} \frac{\td p}{p-a} \tau_b(p)$ as an integral over real $p$, adding an imaginary part $\iu\epsilon \delta_a$ to $a$. The Hilbert transform of $\tau_b(a)$ is thus the real part of $H$ (i.e.\ drop the $\delta_a$-term).
}
To give an illustration, note that the 2nd order of $I_\lambda(a)$ in \eqref{eq:Ilambda4} contributes, among several others, the term 
$\frac{2 \pi\lambda^3 \log (1+p) \log p}{(1 + a + p)^3}$
to the 3rd order of $\tau_a(p)$.
With
\begin{MapleInput}
read "HyperInt.mpl":
tau3 := 2*Pi*log(1+p)*log(p)/(1+p+a)^3;
I3 := hyperInt(tau3/Pi, p=0..infinity);
\end{MapleInput}
one computes its contribution to the $\lambda^3$-coefficient of
$I_\lambda(a)$. The command
\begin{MapleInput}
fibrationBasis(I3,[a]);
\end{MapleInput}
\begin{MapleMath}%
\frac{\Hlog{a}{0, -1}}{(1+a)^2} +\frac{\zeta_2}{(1+a)^2}
+\frac{\Hlog{a}{-1, -1}}{a^2}-\frac{2 \Hlog{a}{-1}}{a(1+a)}
\end{MapleMath}
expresses the result in terms of hyperlogarithms
$\Hlog{a}{\sigma,\tau}=\int_0^a \frac{\td z}{z-\sigma} \int_0^z \frac
{\td u}{u-\tau}$.
In this particular case, they are just polylogarithms:
$ \Hlog{a}{0,-1}= -\Li_2(-a)$, 
$ \Hlog{a}{-1,-1}= \frac{1}{2}(\log(1+a))^2$ and
$ \Hlog{a}{-1}= \log(1+a)=-\Li_1(-a)$.
Most strikingly, in the final result for $I_{\lambda}(a)$, the hyperlogarithms cancel almost completely and collapse to mere powers of the logarithm $\log(1+a)$.

In this way,\footnote{%
	At first we were unaware of \eqref{eq:tauidentity} and calculated the much harder $\int_0^\infty \td p \;e^{-\HT_p[\tau_a]}\sin \tau_a(p)$.
}
we computed all coefficients of $\lambda^{\leq 10}$ in $I_{\lambda}(a)$.
The results are of such striking simplicity and structure that we could
obtain an explicit formula:
\begin{Conjecture}
	With Stirling numbers $s_{n,k}$ of the first kind and sign $(-1)^{n-k}$,
\begin{equation}\begin{split}
\label{HyperInt}%
	I_\lambda(a)
	&= -\lambda \log(1+a) + 
	\sum_{n=1}^\infty  \lambda^{n+1} \left(
		\frac{(\log (1+a))^{n}}{na^{n}} 
		+\frac{(\log (1+a))^{n}}{n(1+a)^{n}}
	\right)
\\ & 
	+ \! \sum_{n=1}^\infty \frac{(n{-}1)! \lambda^{n+1}}{(1{+}a)^{n}} 
	\sum_{j=1}^{n-1} \sum_{k=0}^{n}
		({-}1)^{j}
		\frac{s_{j,n-k}}{k! j!} 
		\Big(\Big(\frac{1{+}a}{a}\Big)^{n-j} \!\! +1\Big)
		\big(\log(1{+}a)\big)^k
	.
\end{split}\end{equation}
\end{Conjecture}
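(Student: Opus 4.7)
The plan is to prove \eqref{HyperInt} by identifying its right-hand side as the Lagrange-B\"urmann expansion of a closed-form expression built from the principal branch of the Lambert $W$ function. The signed Stirling numbers of the first kind appearing in the coefficients strongly hint at such an inversion structure, and the announced form of Theorem~\ref{thm:Lambert} singles out $u(\lambda,a) \defas \lambda W_0\!\big(\tfrac{1}{\lambda}\,e^{(1+a)/\lambda}\big)$ as a natural building block. By $W(z)e^{W(z)}=z$, this $u$ satisfies the implicit relation $u + \lambda\log u = 1+a$, reducing to $u_0 = 1+a$ at $\lambda = 0$.

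First I would apply the Lagrange-B\"urmann inversion formula to $u = (1+a) - \lambda\log u$: for any analytic test function $F$,
\begin{equation*}
	F(u(\lambda, a)) = F(1+a) + \sum_{n\ge 1}\frac{\lambda^n}{n}\,\big[t^{n-1}\big]\, F'(1+a+t)\,\big({-}\log(1+a+t)\big)^n.
\end{equation*}
Expanding $({-}\log(1+a+t))^n$ via the classical identity $(\log(1+x))^k/k! = \sum_{j\ge k} s_{j,k}\,x^j/j!$ injects Stirling numbers $s_{j,k}$ together with powers of $\log(1+a)$ and $1/(1+a)$. A partial summation in \eqref{HyperInt} already reveals the telescoping $\sum_{n\ge 1}\lambda^{n+1}(\log(1+a))^n/(n(1+a)^n) = -\lambda\log\!\big(1 - \tfrac{\lambda\log(1+a)}{1+a}\big)$, with an analogous identity for the $1/a^n$ chain. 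This strongly suggests that the full closed form for $I_\lambda(a)$ combines a Lambert piece centred at $u_0 = 1+a$ (coming from our $u(\lambda,a)$) with a companion piece from the analogous inversion $\tilde u + \lambda\log\tilde u = a$, reflecting the boundary singularity $p=0$ of the Hilbert integral defining $I_\lambda$ in Proposition~\ref{Prop:Ilambda}.

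With a candidate $\widehat I_\lambda(a)$ in hand, the formal verification reduces to an algebraic identity among Stirling numbers, obtainable from standard recurrences; the factor $((1+a)/a)^{n-j} + 1$ in \eqref{HyperInt} is expected to encode the sum of contributions from the two base points $u_0 = 1+a$ and $\tilde u_0 = a$. An independent analytic check can be carried out by substituting $\widehat I_\lambda(a)$ into the self-consistency equation \eqref{cottauba} and invoking the Lambert-type Hilbert transform identity \eqref{eq:K-Lambert-int}, already announced in the introduction.

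The main obstacle will be the combinatorial alignment: matching the sign $(-1)^j$ and the summation limits $j\in[1,n-1]$, $k\in[0,n]$ of \eqref{HyperInt} against the Lagrange-B\"urmann derivative structure (with the Stirling sign convention $(-1)^{n-k}$) requires careful bookkeeping. As a computational fallback that avoids guessing the exact closed form, one can proceed by strong induction on the order in $\lambda$: assume \eqref{HyperInt} valid modulo $\lambda^{N+1}$, substitute into \eqref{cottauba}, perform the Hilbert integration using \HyperInt, and check that the resulting hyperlogarithm combinations collapse into pure powers of $\log(1+a)$ matching the claim at order $\lambda^{N+1}$ --- the striking cancellation pattern already observed empirically through $\lambda^{10}$.
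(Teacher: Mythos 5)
Your overall strategy --- resum the series by Lagrange--B\"urmann into a Lambert-$W$ closed form and then verify that closed form against the self-consistency equation \eqref{cottauba} --- is indeed the route the paper takes, but as written the proposal has two genuine gaps. First, you never actually produce the closed form, and the structure you guess is not quite right: the companion piece is \emph{not} a second Lambert-type inversion centred at $a$. The paper's resummation uses the Stirling generating function $(1+z)^u=\sum_{n,k}\frac{z^n}{n!}u^k s_{n,k}$ to absorb the $k$-sum into an $n$-fold $u$-derivative, then trades $a^{-(n-j)}$ and $(1+a)^{-(n-j)}$ for derivatives of $1/a$ and $1/(1+a)$; this collapses \eqref{HyperInt} into the two derivative series \eqref{eq:Ilambda-diff}, which Theorem~\ref{thm:Lagrange-inversion} identifies as $K(a,\lambda)=\lambda W_0(\frac{1}{\lambda}e^{(1+a)/\lambda})-1-a$ (Lagrange formula for $\phi(w)=-\log(1+a+w)$) and $L(a,\lambda)=\log(1+K(a,\lambda)/a)$ (B\"urmann formula for $H(w)=\log(1+w/a)$ applied to the \emph{same} inverse function). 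Your partial summation of the first line of \eqref{HyperInt} into $-\lambda\log\bigl(1-\tfrac{\lambda\log(1+a)}{1+a}\bigr)$ is a correct formal identity but a dead end: those terms are precisely the missing $j=0$ contributions of the second line and must be merged with it rather than summed on their own, and the factor $((1+a)/a)^{n-j}+1$ encodes the two kernels $1/a$ and $1/(1+a)$ inside a single inversion, not two base points $1+a$ and $a$.

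Second, and more seriously, your ``independent analytic check'' invokes \eqref{eq:K-Lambert-int} as if it were already available, but that identity --- together with \eqref{eq:L-Lambert-int} at $z=0$, which you would also need --- is itself the technical heart of the proof. It is only established in Lemma~\ref{Lemma:J}, by writing the $\arctan_{[0,\pi]}$ integrand as the imaginary part of a complex logarithm, deforming to a contour $\gamma_\epsilon^+$ around $\R_+$, expanding in $\lambda$, applying the residue theorem, and recognising another Lagrange--B\"urmann series; without proving these identities your verification of \eqref{cottauba} is circular. The \HyperInt{} induction fallback is not a proof either: it integrates fixed orders in $\lambda$ one at a time, you give no mechanism for carrying out the inductive step at a general order $N$, and order-by-order verification through $\lambda^{10}$ is exactly what left the statement a Conjecture in the first place. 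You should also state explicitly that \eqref{cottauba} determines its solution uniquely as a formal power series in $\lambda$, so that exhibiting one closed-form solution whose Taylor coefficients are the conjectured ones settles the claim.
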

In the next section, we will first find a closed expression for the sum \eqref{HyperInt} using the Lambert-$W$ function, and then prove the conjecture.

\section{Resummation and solution}
\label{sec:resummation}

\subsection{Resummation}

The Stirling numbers of first kind have generating function
\begin{align*}
	(1+z)^u
	=\sum_{n=0}^\infty \sum_{k=0}^n \frac{z^n}{n!} u^k s_{n,k}\;,
\quad 
	(-1)^j s_{j,n-k}
	=\frac{1}{(n{-}k)!}
	\left.
		\frac{\td[n-k]}{\td u^{n-k}} 
		\frac{\Gamma(j{-}u)}{\Gamma(-u)}
	\right|_{u=0}\;.
\end{align*}
Let \eqref{HyperInt}$_2$ be the 2nd line of
\eqref{HyperInt}. Writing also $(\log(1+a))^k=\frac{\td[k]}{\td u^k}
(1+a)^u \big|_{u=0}$, this line takes the form
\begin{equation}\begin{split}
\label{HyperInt-2}%
\eqref{HyperInt}_2&= 
\sum_{n=1}^\infty \frac{\lambda^{n+1}}{n(1+a)^{n}} 
\sum_{j=1}^{n-1} 
\Big(\Big(\frac{1+a}{a}\Big)^{n-j}
+1\Big)
\\
& \qquad \times 
\sum_{k=0}^{n} \binom{n}{k} \left(\frac{\td[n-k]}{\td u^{n-k}} 
\frac{\Gamma(j-u)}{j! \Gamma(-u)}\right)
\left. \left( \frac{\td[k]}{\td u^k} (1+a)^u \right)\right|_{u=0}\;.
\end{split}\end{equation}
The summation over $k$ gives for the 2nd line of \eqref{HyperInt-2}
\begin{align*}
	\frac{\td[n]}{\td u^{n}} 
\left.\left(\frac{\Gamma(j-u)}{j! \Gamma(-u)}(1+a)^u\right)
\right|_{u=0}
&= \frac{\td[n]}{\td u^{n}} 
\left(\frac{(-1)^j}{j!} (1+a)^j \frac{\td[j]}{\td a^j} (1+a)^u \right)
\bigg|_{u=0}
\\
&= \frac{(-1)^j}{j!} (1+a)^j \frac{\td[j]}{\td a^j} (\log(1+a))^{n}\;.
\end{align*}
This is inserted back into \eqref{HyperInt-2} and the 2nd line of \eqref{HyperInt}.
Now the first line of \eqref{HyperInt} is the missing case $j=0$ to extend $I_\lambda(a)$ to 
\begin{multline*}
I_\lambda(a)= -\lambda \log(1+a)
\\
+	\sum_{n=1}^\infty \frac{\lambda^{n+1}}{n} 
	\sum_{j=0}^{n-1} \left(
		\frac{1}{(1+a)^{n-j}}
		+\frac{1}{a^{n-j}}
	\right)
	\frac{(-1)^j}{j!} \frac{\td[j]}{\td a^j} (\log(1+a))^{n}\;.
\end{multline*}
Writing 
$
	\frac{1}{a^{n-j}}
	= \frac{(-1)^{n-1-j}}{(n-1-j)!}
	  \frac{\td[n-1-j]}{\td a^{n-1-j}}
	  \frac{1}{a}
$ 
and similarly for $\frac{1}{(1+a)^{n-j}}$, we thus arrive at
\begin{align}
	I_\lambda(a)
	&= -\lambda \log(1{+}a) 
	 + \sum_{n=1}^\infty \frac{({-}\lambda)^{n+1}}{n!} 
	   \frac{\td[n-1]}{\td a^{n-1}} \left(
	   	\frac{(\log(1+a))^{n}}{1+a}
		+\frac{(\log(1+a))^{n}}{a}
	\right)\!\!\!
\nonumber\\
	&= \sum_{n=1}^\infty \frac{\lambda^n}{n!} 
	   \frac{\td[n-1]}{\td a^{n-1}}
	   (-\log(1+a))^{n}
	 -\lambda \sum_{n=1}^\infty \frac{\lambda^n}{n!} 
	   \frac{\td[n-1]}{\td a^{n-1}}
	   \frac{(-\log(1+a))^{n}}{a}\;.
\label{eq:Ilambda-diff}%
\end{align}

There are several ways to sum these series. The most efficient
approach seems to be the Lagrange-B\"{u}rmann inversion formula
\cite{Lagrange, Buermann}:
\begin{Theorem}
	\label{thm:Lagrange-inversion}%
Let $\phi(w)$ be analytic at $w=0$ with $\phi(0)\neq 0$ and 
$f(w) \defas \frac{w}{\phi(w)}$. Then the inverse $g(z)$ of $f(w)$
with $z=f(g(z))$ is analytic at $z=0$ and given by
	\begin{equation}
		g(z) = \sum_{n=1}^{\infty} \frac{z^n}{n!} 
\left.\frac{\td[n-1]}{\td w^{n-1}}\right|_{w=0} \phi(w)^n
		.
		\label{eq:Lagrange}%
	\end{equation}
More generally, if $H(z)$ is an arbitrary analytic function with $H(0)=0$, then
	\begin{equation}
		H(g(z)) = \sum_{n=1}^{\infty} \frac{z^n}{n!} \left.\frac{\td[n-1]}{\td w^{n-1}}\right|_{w=0} \Big( H'(w) \phi(w)^n \Big)
		.
		\label{eq:Buermann}%
	\end{equation}
\end{Theorem}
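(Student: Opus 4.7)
My plan is to establish \eqref{eq:Buermann} first and then deduce \eqref{eq:Lagrange} by taking $H(w)=w$. The hypothesis $\phi(0)\neq 0$ gives $f'(0)=1/\phi(0)\neq 0$, so the inverse function theorem produces an analytic local inverse $g$ with $g(0)=0$ on a small disc around $z=0$. Since $H(g(0))=H(0)=0$, the Taylor series of $H\circ g$ starts at order $z^{1}$, and the Cauchy integral formula encodes its coefficients as
\begin{equation*}
a_n \defas \frac{1}{n!}\frac{\td[n]}{\td z^{n}}\bigg|_{z=0} H(g(z))
= \frac{1}{2\pi\iu}\oint_\gamma \frac{H(g(z))}{z^{n+1}}\,\td z
\end{equation*}
for a small positively oriented loop $\gamma$ about $0$ inside the disc of analyticity of $g$.

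The decisive step is the biholomorphic change of variable $z=f(w)$: the loop $\gamma$ pulls back to a small positively oriented loop about $0$ in the $w$-plane, $H(g(z))$ becomes $H(w)$, and $\td z=f'(w)\,\td w$, yielding
\begin{equation*}
a_n = \frac{1}{2\pi\iu}\oint \frac{H(w)\,f'(w)}{f(w)^{n+1}}\,\td w.
\end{equation*}
Integration by parts with $u=H(w)$ and $\td v = f'(w) f(w)^{-n-1}\,\td w$, whose antiderivative is $v=-\frac{1}{n f(w)^{n}}$, kills the boundary term because the contour is closed. Substituting $f(w)=w/\phi(w)$ leaves
\begin{equation*}
a_n = \frac{1}{2\pi\iu\, n}\oint \frac{H'(w)\,\phi(w)^n}{w^n}\,\td w,
\end{equation*}
which is $\tfrac{1}{n}$ times the coefficient of $w^{n-1}$ in the holomorphic function $H'(w)\phi(w)^n$. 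That coefficient equals $\tfrac{1}{(n-1)!}\left.\tfrac{\td[n-1]}{\td w^{n-1}}\right|_{w=0}\big(H'(w)\phi(w)^n\big)$, and the remaining factor $\tfrac{1}{n}$ combines with it to produce exactly the $\tfrac{1}{n!}$ in \eqref{eq:Buermann}.

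I do not foresee any substantive obstacle: the statement is classical and the argument reduces to the one-line residue calculation above. The only point requiring mild care is choosing $\gamma$ small enough that $g$ is analytic on a neighbourhood of its image, which is automatic from the inverse function theorem. As an alternative one could work purely with the formal residue operator $[w^{-1}]$ on formal Laurent series over $\C$; the manipulations are identical and convergence issues evaporate entirely, which is convenient if the formula is later applied to formal power series in $\lambda$.
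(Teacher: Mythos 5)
Your proof is correct and complete. Note, however, that the paper does not actually prove Theorem~\ref{thm:Lagrange-inversion}: it is quoted as a classical result with references to Lagrange and B\"urmann, so there is no in-text argument to compare against line by line. The closest the paper comes is the Remark following \eqref{eq:L-Lambert}, which sketches a different derivation: express the derivatives $\frac{\td[n-1]}{\td w^{n-1}}\big|_{w=0}\phi(w)^n$ by Cauchy's formula, insert $\frac{1}{n}=\int_0^1 t^{n-1}\,\td t$, resum the resulting geometric series in $n$ under the contour integral, and evaluate a single residue at the solution of $w=tz\,\phi(w)$. Your route is the standard one --- Cauchy coefficients of $H\circ g$, the biholomorphic substitution $z=f(w)$, and one integration by parts --- and it is cleaner in that it never interchanges an infinite sum with the contour integral and so needs no convergence estimate for the resummation; the paper's sketch, by contrast, starts from the series side and is tailored to the specific instance \eqref{eq:Ilambda-diff}. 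All the delicate points in your argument check out: $f'(0)=1/\phi(0)\neq 0$ justifies the holomorphic inverse, the pullback of a small positively oriented circle under the biholomorphism $f$ winds once around $w=0$, and the boundary term in the integration by parts vanishes because $H(w)/f(w)^n$ is single-valued meromorphic near $0$. Your closing observation that the identity also holds at the level of formal power series is worth keeping, since the paper applies \eqref{eq:Lagrange} and \eqref{eq:Buermann} to the formal series \eqref{eq:K-def} and \eqref{eq:L-def} before their convergence is established.
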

By virtue of \eqref{eq:Lagrange}, we see upon setting 
$z=\lambda$ and $\phi(w)=-\log(1+a+w)$ that the first summand 
in \eqref{eq:Ilambda-diff},
\begin{equation}
	K(a,\lambda)
	\defas
	\sum_{n=1}^\infty \frac{\lambda^n}{n!} 
	   \frac{\td[n-1]}{\td a^{n-1}}
	   (-\log(1+a))^{n}
	,
	\label{eq:K-def}%
\end{equation}
is the inverse of the function $\lambda(w)=-\frac{w}{\log(1+a+w)}$:
\begin{equation}
	K(a,\lambda)=-\lambda \log(1+a+K(a,\lambda))\;.
	\label{eq:K-FEQ}%
\end{equation}
This functional equation is easily solved in terms of Lambert-W \cite{Knuth},
\begin{equation} 
	K(a,\lambda)
	= \lambda \,W\left(
		\frac{1}{\lambda}
		e^{\frac{1+a}{\lambda}}
	\right)-1-a\;.
	\label{eq:K-Lambert}%
\end{equation}
For any $\lambda>0$ and $a\geq 0$, the solution is given by the
standard real branch $W_0$. We discuss in sec.~\ref{sec:continuation}
the extension of 
\eqref{eq:K-FEQ} and \eqref{eq:K-Lambert} to (certain) 
$a,\lambda\in \C$.

Let us now turn to the second summand (up to a 
factor $-\lambda$) in \eqref{eq:Ilambda-diff},
\begin{equation}
	L(a,\lambda)
	\defas
	\sum_{n=1}^\infty \frac{\lambda^n}{n!}
		\frac{\td[n-1]}{\td a^{n-1}}
		\frac{(-\log(1+a))^n}{a}
	.
	\label{eq:L-def}%
\end{equation}
This can directly be recognised as \eqref{eq:Buermann} with $H(w)=\log(1+w/a)$, such that
\begin{equation}
	L(a,\lambda)
	= \log\left( 1+ \frac{K(a,\lambda)}{a} \right)
	= \log \frac{\lambda W\left( \frac{1}{\lambda}
            e^{(1+a)/\lambda} 
\right)-1}{a}
	.
	\label{eq:L-Lambert}%
\end{equation}

\begin{Remark}[Alternative solution of \eqref{eq:Ilambda-diff}]
Express the multiple derivatives in \eqref{eq:Ilambda-diff} via
Cauchy's formula and insert $\frac{1}{n}=\int_0^1
\frac{\td t}{t} t^n$. This gives rise to a geometric series in $n$ and results in a
single residue located at the solution of $1{+}a=1{+}z+\lambda t
\log(1{+}z)$. In fact, this extends to a proof of the Lagrange-B\"urmann formula.
\end{Remark}

\begin{Remark}[Alternative solution II of \eqref{eq:Ilambda-diff}]
Starting from the series expansion
\begin{equation*}
	\frac{\log(1+a)^{n}}{n!}
	=\sum_{m=n}^\infty s_{m,n} \frac{a^m}{m!}
	\quad\text{valid for $\abs{a}<1$,}
\end{equation*}
again with Stirling numbers $s_{m,n}$ of first kind, we 
can expand \eqref{eq:K-def} and \eqref{eq:L-def} as\footnote{%
	It is well-known that the expansion of Lambert-W at infinity 
is related to Stirling numbers, see \cite{Knuth}. However, we did not 
find the precise form we obtain here in the literature.
}
\begin{equation*}
	K(a,\lambda) 
	= \sum_{n=1}^\infty\sum_{m=1}^\infty s_{m+n-1,n}
	(-\lambda)^n \frac{a^m}{m!}
	,\quad
	L(a,\lambda)
	= \sum_{n=1}^\infty\sum_{m=0}^\infty \frac{s_{m+n,n}}{m+n} 
	(-\lambda)^n  \frac{a^m}{m!}.
\end{equation*}
The recursion relation $n s_{n,k}= s_{n,k-1}-s_{n+1,k}$ then proves the differential equations
\begin{equation}\begin{split}
\label{PDE}%
	\left[
		(1+a+\lambda) \frac{\partial}{\partial a} 
		+\lambda\frac{\partial}{\partial\lambda}
	\right] K(a,\lambda)
	&= K(a,\lambda)-\lambda
	\quad\text{and}
\\
	\left[ 
		a \frac{\partial}{\partial a} 
		+\lambda\frac{\partial}{\partial\lambda}
	\right] L(a,\lambda)
	&= \frac{\partial K(a,\lambda)}{\partial a}\;.
\end{split}\end{equation}
It is straightforward to check that the solutions \eqref{eq:K-Lambert}
and \eqref{eq:L-Lambert} solve this system. In fact, the solution
\eqref{eq:K-Lambert} of the first equation in \eqref{PDE} is found by Maple \cite{Maple}
and fixed through the boundary conditions $K(0,\lambda)=0$ and
$K(a,0)=0$.  

Changing variables $(a,\lambda) \mapsto
(v=\frac{a}{\lambda},\lambda)$, the 2nd equation in \eqref{PDE}
becomes an ordinary differential equation with respect to $\lambda$,
in which $v$ is merely a parameter:
\begin{equation*}
	\lambda\frac{\td}{\td\lambda} L(v,\lambda)
	=-1+ W'(\tfrac{1}{\lambda} e^{v+\frac{1}{\lambda}})
	 \tfrac{1}{\lambda} e^{v+\frac{1}{\lambda}}
	=-\frac{1}{1+W(\tfrac{1}{\lambda} e^{v+\frac{1}{\lambda}})}
	.
\end{equation*}
Again, it is easily checked that \eqref{eq:L-Lambert} solves this equation. However, quadrature from the boundary value $L(v,0)=0$ provides the solution in a different form,
\begin{equation}
	L(a,\lambda)
	= - \int_0^{\lambda} \frac{\td t}{t} 
	\frac{1}{
		1+W(\tfrac{1}{t} e^{1/t+a/\lambda})
	}\;.
\end{equation}
We conclude the non-trivial identity \eqref{eq:L-LambertInt} given in the
beginning.  
\end{Remark}

\subsection{Proof of the solution}

The proof that our guess \eqref{HyperInt} is correct relies on
\begin{Lemma}\label{Lemma:J}%
For any $a,\lambda\geq 0$ and 
$z \in \C\setminus {(-\infty,-1]}$ one has\footnote{These formulae 
are reproduced as \eqref{eq:K-Lambert-int} and
\eqref{eq:L-Lambert-int} in the beginning.} 
\begin{subequations}
\begin{gather}
\int_0^\infty \! \frac{\td u}{\pi}\; 
\frac{\mbox{\small{$\displaystyle 
\arctan\displaylimits_{[0,\pi]} 
\Big(\frac{\lambda\pi}{1+a+u -\lambda \log u}\Big)$}}}{
1+u+z}
= 
\log \left( \frac{z+\lambda \log(1{+}z)-a}{
1{+}z{-}\lambda
  W_{0}(\frac{1}{\lambda} e^{(1+a)/\lambda})}\right)\;,
\label{eq:J1}%
\\
\int_0^\infty \! \frac{\td u}{\pi}\; 
\Big(\!
\arctan\displaylimits_{[0,\pi]}  
\Big(\frac{\lambda\pi}{1{+}a{+}u {-}\lambda \log u}\Big)
-\frac{\lambda\pi}{1{+}u}\Big)
= \lambda W_0\Big(\frac{1}{\lambda} e^{(1+a)/\lambda}\Big) -1-a\;.
\label{eq:J2}%
\end{gather}
\end{subequations}
\end{Lemma}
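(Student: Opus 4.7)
The strategy is to realise the angle $\theta(u):=\arctanD_{[0,\pi]}\bigl(\lambda\pi/(1+a+u-\lambda\log u)\bigr)$ as the boundary imaginary part of a concrete holomorphic function on $\C\setminus[0,\infty)$, and then extract both identities from a single application of Cauchy's integral formula around the cut $[0,\infty)$.

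Set $\psi(w):=1+a+w-\lambda\operatorname{Log}(-w)$, holomorphic on $\C\setminus[0,\infty)$, with $\operatorname{Log}$ the principal branch. A direct calculation of the boundary values $\psi(u\pm\iu 0)=(1+a+u-\lambda\log u)\pm \iu\lambda\pi$ for $u>0$ gives $\theta(u)=\arg\psi(u+\iu 0)\in(0,\pi)$. On the negative real axis $\psi$ is real with $\psi'=1+\lambda/|w|>0$, so it is strictly increasing from $-\infty$ to $+\infty$ and has exactly one real zero; the defining equation of Lambert's $W$ identifies it as $z_*:=-\lambda W_0\bigl(\tfrac{1}{\lambda}e^{(1+a)/\lambda}\bigr)$. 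Off the real axis the only candidates are $-\lambda W_k\bigl(\tfrac{1}{\lambda}e^{(1+a)/\lambda}\bigr)$ with $k\neq 0$, but these fail the principal-branch compatibility condition $\arg W_k+\operatorname{Im}W_k=0$ because $|\operatorname{Im}W_k(y)|>\pi$ for all $k\neq 0$ and $y>0$; hence $z_*$ is the unique zero of $\psi$ in $\C\setminus[0,\infty)$.

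Define $h(w):=\operatorname{Log}\bigl(\psi(w)/(w-z_*)\bigr)$ on the simply connected domain $\C\setminus[0,\infty)$, with the analytic branch fixed by $h\to 0$ as $w\to\infty$ (the quotient tends to $1$ and is nowhere zero). Then $\operatorname{Im}h(u+\iu 0)=\theta(u)$ on the cut, and $h(u-\iu 0)=\overline{h(u+\iu 0)}$ by Schwarz reflection across $(-\infty,0)\setminus\{z_*\}$ where $h$ is real-valued. Applying Cauchy's integral formula to $h(w)/(w-z)$ on the boundary of $\{|w|<R\}\setminus[0,R]$, oriented so the domain lies on the left, and letting $R\to\infty$ so that the outer circle vanishes, the two sides of the cut combine into
\[
h(z)=\frac{1}{\pi}\int_{0}^{\infty}\frac{\theta(u)}{u-z}\,\td u\qquad(z\in\C\setminus[0,\infty)).
\]
Specialising $z=-1-z_0$ for $z_0\in\C\setminus(-\infty,-1]$ and simplifying $\psi(-1-z_0)=-(z_0+\lambda\log(1+z_0)-a)$ and $-1-z_0-z_*=-(1+z_0-\lambda W_0(\tfrac{1}{\lambda}e^{(1+a)/\lambda}))$, the two overall minus signs cancel inside the logarithm to yield \eqref{eq:J1}.

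Identity \eqref{eq:J2} is then derived from \eqref{eq:J1} by subtracting the elementary evaluation $\tfrac{1}{\pi}\int_{0}^{\infty}\tfrac{\lambda\pi\,\td u}{(1+u)(1+u+z)}=\lambda\log(1+z)/z$ on both sides, which makes the integrand $O((\log u)/u^{2})$ and absolutely integrable; multiplying by $z$ and sending $z\to+\infty$, dominated convergence produces $\tfrac{1}{\pi}\int_{0}^{\infty}[\theta(u)-\lambda\pi/(1+u)]\,\td u$ on the left, while the Taylor expansion of the right-hand side yields $\lambda\log(1+z)-1-a-z_*-\lambda\log(1+z)+o(1)\to -1-a-z_*=\lambda W_0\bigl(\tfrac{1}{\lambda}e^{(1+a)/\lambda}\bigr)-1-a$. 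The principal obstacle is the uniqueness of $z_*$ as a zero of $\psi$: any additional complex zero would endow $h$ with extra logarithmic singularities and spoil the contour deformation, so the non-principal branches of Lambert's $W$ must be rigorously excluded, relying on the (standard but nontrivial) estimate on $\operatorname{Im}W_k$ recalled above; once this is secured, the remainder of the argument is careful bookkeeping of principal-branch choices.
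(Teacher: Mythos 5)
Your proof is correct, and it reaches \eqref{eq:J1}--\eqref{eq:J2} by a genuinely different mechanism than the paper's. Both arguments share the opening move --- the $[0,\pi]$-valued arctangent is the boundary imaginary part of a complex logarithm, captured by a contour hugging the cut $[0,\infty)$ --- but from there the paper expands $\log\big(1-\frac{\lambda\log(-w)}{1+a+w}\big)$ as a power series in $\lambda$, evaluates each coefficient by residues (an order-$n$ pole at $w=-1-a$ plus a simple pole at $w=-1-z$), recognises the sum as the Lagrange--B\"urmann formula \eqref{eq:Buermann}, and then removes the smallness restriction on $\lambda$ by real-analytic continuation; for \eqref{eq:J2} it repeats the contour computation, the residue at $w=-1-a$ reproducing $K(a,\lambda)$ from \eqref{eq:K-def}. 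You avoid the expansion entirely: you locate the unique zero $z_*=-\lambda W_0(\tfrac1\lambda e^{(1+a)/\lambda})$ of $\psi(w)=1+a+w-\lambda\operatorname{Log}(-w)$ in the slit plane, divide it out to obtain a single-valued, decaying logarithm $h$, and apply the Cauchy representation in one stroke; \eqref{eq:J2} then follows from \eqref{eq:J1} by dominated convergence as $z\to+\infty$ rather than by a second contour integral. Your uniqueness step is sound --- writing a putative zero as $w=-\lambda v$ reduces it to $\ImPart(v+\operatorname{Log}v)=\abs{v}\sin(\operatorname{Arg}v)+\operatorname{Arg}v=0$, whose two terms share the sign of $\operatorname{Arg}v$, which excludes $W_{k}$ for $k\neq 0$ even more directly than the $\abs{\ImPart W_k}>\pi$ estimate you invoke. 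What the paper's route buys is coherence with the rest of the argument, since Lagrange--B\"urmann is the engine of the whole resummation in section~\ref{sec:resummation}; what yours buys is a non-perturbative proof with no radius-of-convergence bookkeeping and a transparent explanation of why precisely the branch $W_0$ appears. It also localises exactly what breaks for $-1<\lambda<0$: there $\psi$ acquires a second relevant zero, which is the source of the extra $W_{-1}$-terms in Lemma~\ref{Lemma:Jneg}. The only points to tighten are routine: the indentation of the contour around $w=0$ contributes $O(\delta\log\log\delta^{-1})$ and so vanishes, and at $\lambda=0$ both sides of the identities are to be read as limits.
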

\begin{proof}
Note that $u\mapsto \frac{1+a+u -\lambda \log u}{\lambda\pi}$ is a
convex continuous function so that its reciprocal cannot vanish 
in $(0,\infty)$. This makes $u\mapsto \arctan\displaylimits_{[0,\pi]} 
\big(\frac{\lambda\pi}{1+a+u -\lambda \log u}\big)$ a continuous
function which can be written as imaginary part of a complex logarithm.
Let $J_\lambda(z,a)$ be the integral in \eqref{eq:J1} and 
$\gamma_\epsilon^+$ be the curve in the complex plane which encircles 
the positive reals clockwise at distance $\epsilon$. 
We choose $\epsilon$ such that $\gamma_\epsilon^+$ 
separates $-1-z$ from 
$\R_+$ and rewrite the integral as
\begin{align*}
J_\lambda(z,a) &=\frac{1}{2\pi\iu }\int_{\gamma_\epsilon^+} \!\! \td w\; 
\frac{\log\big(1-\frac{\lambda \log (-w)}{1+a+w}\big)}{
w+1+z}+o(\epsilon)\;.
\end{align*}
The outer logarithm expands into the power series 
\begin{align*}
J_\lambda(z,a) &=
-\sum_{n=1}^\infty \frac{\lambda^n}{n} 
\frac{1}{2\pi\iu }\int_{\gamma_\epsilon^+} \!\! \td w\; 
\frac{(\log (-w))^n}{(1+a+w)^n (w+1+z)}
\end{align*}
in $\lambda$ with radius of convergence at
least $\frac{1-\epsilon}{\sqrt{\pi^2+|\log \epsilon|^2}}$.
We close $\gamma^+_\epsilon$ by a large circle, which does not
contribute to the integral, and temporarily assume $z\neq a$. 
The residue theorem picks up the pole
at $w=-a-1$ of order $n$ and the simple pole at $w=-z-1$:
\begin{align*}
J_\lambda(z,a) 
&=-\sum_{n=1}^\infty \frac{\lambda^n}{n!} \frac{\td[n-1]}{\td w^{n-1}}
\frac{(\log (-w))^n}{(w+1+z)}\Big|_{w=-1-a}
- \sum_{n=1}^\infty \frac{\lambda^n}{n} 
\frac{(\log (-w))^n}{(1+a+w)^n}\Big|_{w=-1-z}
\\
&=\sum_{n=1}^\infty \frac{\lambda^n}{n!} \frac{\td[n-1]}{\td w^{n-1}}
\frac{(-\log (1+a+w))^n}{(1+z-(1+a+w))}\Big|_{w=0}
+ \log\Big(1-\frac{\lambda \log(1+z)}{a-z}\Big)\;.
\end{align*}
The rhs is independent of $\epsilon$ so that the equality holds 
exactly, but depending on $a-z,1+z,1+a$ for a 
possibly smaller, but still non-zero, radius of convergence in $\lambda$. 
The sum is again a Lagrange-B\"urmann formula \eqref{eq:Buermann} 
for $H(w)= \log \frac{a-z}{a+w-z}$ and $w\mapsto 
\lambda W_0(\frac{1}{\lambda}e^{(1+a)/\lambda})-1-a$. 
The result rearranges into \eqref{eq:J1} which is thus proved 
for $a\neq z$ and sufficiently small $\lambda$. Since both sides are
real-analytic in $\lambda>0$ and holomorphic in $z$,
\eqref{eq:J1} extends to any $\lambda>0$ and 
$z \in \C\setminus {(-\infty,-1]}$.

Following the same steps as above, the integral 
in \eqref{eq:J2} can be written as
\begin{align*}
	&\lim_{\epsilon\to0}
	\frac{1}{2\pi\iu}\int_{\gamma^+_\epsilon} \!\!
\td w \;\Big\{ 
\log \Big(1-\frac{\lambda\log (-w)}{1+a+w}\Big)
+ \frac{\lambda \log (-w)}{1+w}
\Big\}
\\
&=-\sum_{n=1}^\infty \frac{\lambda^n}{n!} \frac{\td[n-1]}{\td w^{n-1}}
(\log (-w))^n\Big|_{w=-1-a}
=K(a,\lambda)
\end{align*}
from \eqref{eq:K-def}. We conclude with \eqref{eq:K-Lambert}.
\end{proof}
\begin{Proposition}
\label{prop:Ilambda}%
The pair of equations \eqref{cottauba} is for any $a,\lambda \geq 0$ 
solved by the resummation $I_\lambda(a)=K(a,\lambda)-\lambda L(a,\lambda)$ from \eqref{eq:K-Lambert} and \eqref{eq:L-Lambert}, that is
\begin{equation}
\tau_a(p)=\arctan\displaylimits_{[0,\pi]} \bigg(\frac{\lambda\pi}{a+ 
 \lambda \,W_0\big(\frac{1}{\lambda} e^{\frac{1+p}{\lambda}}\big)
-\lambda \log \big( \lambda W_0\big( \frac{1}{\lambda} 
        e^{\frac{1+p}{\lambda}} \big)-1\big)}\bigg)\;.
\label{eq:solution-tau}%
\end{equation}
\end{Proposition}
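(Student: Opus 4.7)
The plan is to verify self-consistency: starting from the candidate formula \eqref{eq:solution-tau} for $\tau_a(p)$, substitute into the integral defining $I_\lambda(a)$ in the second line of \eqref{cottauba}, verify that the result equals $K(a,\lambda) - \lambda L(a,\lambda)$, and then check that the first line of \eqref{cottauba} with this value of $I_\lambda(a)$ reproduces \eqref{eq:solution-tau}. Lemma~\ref{Lemma:J} does essentially all the analytic work.

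The crucial first move is the change of variables $u \defas \lambda W_0\!\big(\tfrac{1}{\lambda} e^{(1+p)/\lambda}\big) - 1$ in the $p$-integral. The Lambert relation $We^W = \tfrac{1}{\lambda} e^{(1+p)/\lambda}$ gives $\lambda W + \lambda\log W = 1+p$, equivalently $p = u + \lambda\log(1+u)$, with $\td p = \big(1 + \tfrac{\lambda}{1+u}\big)\td u$, and the map is a smooth bijection $[0,\infty) \to [0,\infty)$ sending $0$ to $0$. Under it, the denominator $a + \lambda W - \lambda\log(\lambda W - 1)$ appearing in the candidate \eqref{eq:solution-tau} collapses to $1+a+u-\lambda\log u$, so that $\tau_a(p)$ becomes precisely the $\arctan$ controlled by Lemma~\ref{Lemma:J}.

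Introducing a cutoff $R$ with $u_R$ defined by $R = u_R + \lambda\log(1+u_R)$, so that $\lambda\pi\big(\log(1+R) - \log(1+u_R)\big) \to 0$ as $R \to \infty$, I would use $\td p = \td u + \tfrac{\lambda\,\td u}{1+u}$ together with $\lambda\pi\log(1+u_R) = \int_0^{u_R}\tfrac{\lambda\pi}{1+u}\,\td u$ to write
\begin{equation*}
\int_0^R \Big(\tau_a(p) - \tfrac{\lambda\pi}{1+p}\Big)\td p
= \int_0^{u_R} \Big(\arctan_{[0,\pi]}(\cdots) - \tfrac{\lambda\pi}{1+u}\Big) \td u
+ \lambda \int_0^{u_R} \frac{\arctan_{[0,\pi]}(\cdots)}{1+u}\,\td u + o(1).
\end{equation*}
The first integral converges to $\pi K(a,\lambda)$ by \eqref{eq:J2}, while the second tends to the $z=0$ specialisation of \eqref{eq:J1}, which by \eqref{eq:L-Lambert} equals $-\lambda\pi L(a,\lambda)$. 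Dividing by $\pi$ yields $I_\lambda(a) = K(a,\lambda) - \lambda L(a,\lambda)$.

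With this value in hand, a direct simplification using \eqref{eq:K-Lambert} and \eqref{eq:L-Lambert} shows that
$1+a+b-\lambda\log a + I_\lambda(a) = b + \lambda W_0\!\big(\tfrac{1}{\lambda} e^{(1+a)/\lambda}\big) - \lambda\log\!\big(\lambda W_0(\cdot) - 1\big)$,
so that the first line of \eqref{cottauba} reproduces \eqref{eq:solution-tau} verbatim after relabelling $(a,b)\leftrightarrow(p,a)$. The main technical concern is the convergence bookkeeping in the change of variables: neither $\tau_a(p)$ nor $\lambda\pi/(1+p)$ is individually integrable on $[0,\infty)$, so the asymptotic comparison of the two cutoffs $R$ and $u_R$ is what allows the clean split into one piece handled by \eqref{eq:J2} and one piece by \eqref{eq:J1}.
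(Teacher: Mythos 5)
Your proposal is correct and follows essentially the same route as the paper: the substitution $p=u+\lambda\log(1+u)$, the cutoff comparison that lets the non-integrable pieces be regrouped, and the evaluation of the two resulting integrals via \eqref{eq:J2} and \eqref{eq:J1} at $z=0$ are exactly the paper's steps, with your cutoff $R\leftrightarrow u_R$ being the same bookkeeping as the paper's $\CO+\lambda\log(1+\CO)\leftrightarrow\CO$. (Minor slip: the intermediate identity should read $\lambda W+\lambda\log(\lambda W)=1+p$, not $\lambda W+\lambda\log W=1+p$, but your conclusion $p=u+\lambda\log(1+u)$ is the correct one.)
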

\begin{proof}
The assertion \eqref{eq:solution-tau} solves \eqref{cottauba} precisely if
\begin{equation}
\int_0^\infty \!\!\! \td p\: \Big(\frac{\tau_a(p)}{\pi}
-\frac{\lambda}{1{+}p}\Big)
=  \lambda \,W_0\Big(\frac{e^{\frac{1+a}{\lambda}}}{\lambda}\Big)
    -1-a
-\lambda \log \frac{\lambda W_0\big( \frac{
	e^{\frac{1+a}{\lambda}}}{\lambda} \big)-1}{a}
	.
\label{eq:Ilambda-final}%
\end{equation}
Introduce a cut-off at $\CO+\lambda\log(1+\CO)$ and substitute
$p=u+\lambda\log(1+u)$ with inverse 
$u=\lambda W(\frac{1}{\lambda}e^{(1+p)/\lambda})-1$ into the lhs of 
\eqref{eq:Ilambda-final}. 
It then becomes
\begin{equation*}
\int_0^{\CO} \!\!\! \td u \left[
\Big(1+\frac{\lambda}{1+u}\Big) 
\frac{\arctan\displaylimits_{[0,\pi]}\big( 
	\frac{\lambda\pi}{a+1+u-\lambda \log u}\big)}{\pi}
-\frac{\lambda}{1{+}u}
\right]
- \lambda\int_{\CO}^{\CO+\lambda\log(1+\CO)}\!\! \frac{\td u}{1{+}u}
\;.
\end{equation*}
The second term vanishes in the limit $\Lambda\to \infty$, and the
first term evaluates with \eqref{eq:J1} at $z=0$ and 
\eqref{eq:J2} to the rhs of 
\eqref{eq:Ilambda-final}.
\end{proof}

\subsection{Proof of Theorem 1}
\label{sec:simplify}

With the solution of \eqref{cottauba} established in
\eqref{eq:solution-tau} and with \eqref{Gab-real} we have also
achieved the proof that
\begin{align}
&G_\lambda(a,b) \label{Gab-final}
\\
&=\frac{\displaystyle \exp
\bigg[
\frac{1}{\pi} \intbar_0^\infty \!\!\! 
\frac{\td p}{p{-}b}\arctan\displaylimits_{[0,\pi]}
\bigg( 
\frac{\lambda\pi}{a+
\lambda W_0\big(\tfrac{e^{(1+p)/\lambda}}{\lambda}\big)
- \lambda \log\big(\lambda W_0\big(
\tfrac{e^{(1+p)/\lambda}}{\lambda}\big){-}1\big)}
\bigg)
\bigg]
}{\displaystyle \sqrt{(\lambda\pi)^2+ 
\big[a+
\lambda W_0\big(\tfrac{e^{(1+b)/\lambda}}{\lambda}\big)
- \lambda \log\big(\lambda W_0\big(
\tfrac{e^{(1+b)/\lambda}}{\lambda}\big){-}1\big)\big]^2}
}
\nonumber%
\end{align}
solves the original integral equation \eqref{NLIQ} for $\lambda>0$.
It remains to prove that the
Hilbert transform in \eqref{Gab-final} simplifies to the form given in 
Theorem~\ref{thm:Lambert}.
\begin{Lemma}
For all $a,b\geq 0$ and $\lambda>0$ the following identities hold:
\begin{subequations}
\begin{align}
\HT_b\!\left[\arctan\displaylimits_{[0,\pi]} \!\left(\!\frac{\lambda\pi}{
 1{+}a{+}{\bullet}-\lambda \log (\bullet)}\right)\right]
  	&= 
	\log \!\left(\!\frac{
		\sqrt{(1{+}a{+}b{-}\lambda \log b)^2 +(\lambda \pi)^2}
	}{
		b + \lambda W_0\big(\frac{1}{\lambda} e^{(1+a)/\lambda}\big)
	}\right),
\label{eq:HTArctanLog}%
\\
	\!\frac{1}{2\pi\iu}
	\int_{\gamma_{\epsilon}^+}  \frac{\td w}{w{-}b}
	\log\!\left(1-\frac{\lambda \log({-}w)}{1{+}a{+}w}\right)
  &= 
	\log \left(\frac{
		1+a+b
	}{
		b + \lambda W_0\big(\frac{1}{\lambda} e^{(1+a)/\lambda}\big)
	}\right).
\label{eq:logW0-path-integral}%
\end{align}
\end{subequations}
\end{Lemma}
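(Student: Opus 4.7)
The plan is to derive \eqref{eq:HTArctanLog} directly from Lemma~\ref{Lemma:J} by passing to the boundary value $z=-1-b+\iu 0^+$, and then to obtain \eqref{eq:logW0-path-integral} by computing the keyhole integral on its left-hand side as the jump of its integrand across $\R_+$.

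For \eqref{eq:HTArctanLog}, I insert $z=-1-b+\iu 0^+$ into \eqref{eq:J1}. Although $-1-b$ sits on the excluded cut $(-\infty,-1]$, boundary values from the upper half-plane of both sides of \eqref{eq:J1} exist. On the integral side, Sokhotski--Plemelj applied to $\frac{1}{u-b+\iu 0^+}=\mathrm{P.V.}\frac{1}{u-b}-\iu\pi\delta(u-b)$ gives $J_\lambda(-1-b+\iu 0^+,a)=\HT_b[\arctan\displaylimits_{[0,\pi]}(\cdots)]-\iu\,\arctan\displaylimits_{[0,\pi]}\big(\frac{\lambda\pi}{1+a+b-\lambda\log b}\big)$. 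On the formula side of \eqref{eq:J1}, $\log(1+z)\to\log b+\iu\pi$, so its numerator becomes $-(1+a+b-\lambda\log b)+\iu\lambda\pi$ and its denominator becomes $-(b+\lambda W_0(\frac{1}{\lambda}e^{(1+a)/\lambda}))$. After cancelling the common minus signs, the principal logarithm of the resulting quotient has real part $\log\frac{\sqrt{(1+a+b-\lambda\log b)^2+(\lambda\pi)^2}}{b+\lambda W_0(\frac{1}{\lambda}e^{(1+a)/\lambda})}$ and imaginary part $-\arctan\displaylimits_{[0,\pi]}\big(\frac{\lambda\pi}{1+a+b-\lambda\log b}\big)$. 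Equating real parts gives \eqref{eq:HTArctanLog}; the imaginary parts match automatically.

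For \eqref{eq:logW0-path-integral}, I compute the keyhole integral by collapsing $\gamma_\epsilon^+$ onto $\R_+$ and collecting the discontinuity of $F(w)=\log(1-\frac{\lambda\log(-w)}{1+a+w})/(w-b)$ across the cut. The outer logarithm contributes the jump $2\iu\,\arctan\displaylimits_{[0,\pi]}\big(\frac{\lambda\pi}{1+a+u-\lambda\log u}\big)$, which after division by $u-b$ (principal value) and integration yields $2\iu\pi\,\HT_b[\arctan\displaylimits_{[0,\pi]}(\cdots)]$. The simple pole of $1/(w-b)$ at $u=b\in\R_+$, lying inside the keyhole, adds via Sokhotski--Plemelj an additional $-2\iu\pi\,\delta(u-b)$ weighted by the real part of the outer logarithm at $u=b$, namely $\log\frac{\sqrt{(1+a+b-\lambda\log b)^2+(\lambda\pi)^2}}{1+a+b}$. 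Adding the two pieces, dividing by $2\pi\iu$, and substituting the value of $\HT_b[\arctan\displaylimits_{[0,\pi]}(\cdots)]$ just obtained in \eqref{eq:HTArctanLog}, the square-root factor cancels and we are left with $\log\frac{1+a+b}{b+\lambda W_0(\frac{1}{\lambda}e^{(1+a)/\lambda})}$, which is \eqref{eq:logW0-path-integral}.

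The main obstacle is careful branch and sign bookkeeping. In the proof of \eqref{eq:HTArctanLog} one must pick the principal branch of $\log(1+z)$ consistently with the upper-half-plane boundary value, and verify that after cancelling the common minus signs the magnitude assembles exactly into the claimed $\sqrt{(1+a+b-\lambda\log b)^2+(\lambda\pi)^2}$. In the proof of \eqref{eq:logW0-path-integral} one must fix the orientation of $\gamma_\epsilon^+$ consistently with Lemma~\ref{Lemma:J} and combine the principal-value integral with the Sokhotski--Plemelj $\delta$-contribution from the moving pole $w=b$ inside the contour, so that the square-root factors cancel and only the rational expression remains. Once these conventions are fixed, both identities follow from Lemma~\ref{Lemma:J} together with a single Plemelj decomposition at $u=b$.
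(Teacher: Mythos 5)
Your proof is correct. For \eqref{eq:HTArctanLog} you do exactly what the paper does: take the boundary value of \eqref{eq:J1} at $z=-1-b$ (the paper approaches from below, you from above; the real parts agree) and read off the real part, with the Sokhotski--Plemelj delta on the left matching the $-\arctan_{[0,\pi]}$ in the argument of the right-hand side. For \eqref{eq:logW0-path-integral}, however, you take a genuinely different route. The paper re-runs the machinery of Lemma~\ref{Lemma:J}: it expands the outer logarithm in powers of $\lambda$, closes $\gamma_\epsilon^+$ by a large circle, observes that the pole at $w=b$ now lies on the same side as $\R_+$ so that only the order-$n$ residue at $w=-1-a$ survives, and recognises the resulting series as the Lagrange--B\"urmann formula \eqref{eq:Buermann} for $H(w)=\log\frac{1+a+b}{1+a+b+w}$. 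You instead collapse $\gamma_\epsilon^+$ onto $\R_+$ and compute the total jump of $\log\bigl(1-\frac{\lambda\log(-w)}{1+a+w}\bigr)/(w-b)$: the imaginary jump of the logarithm against the principal value produces $2\pi\iu\,\HT_b[\arctan_{[0,\pi]}(\cdots)]$, the jump of $(w-b)^{-1}$ against the (continuous) real part of the logarithm produces $-2\pi\iu\log\frac{\sqrt{(1+a+b-\lambda\log b)^2+(\lambda\pi)^2}}{1+a+b}$, and the two cross terms cancel; inserting \eqref{eq:HTArctanLog} then cancels the square root. This is more elementary --- no series expansion, no resummation, and no convergence bookkeeping in $\lambda$ --- and it makes transparent that \eqref{eq:logW0-path-integral} differs from \eqref{eq:HTArctanLog} precisely by the on-cut value of $\RePart\log\bigl(1-\frac{\lambda\log(-w)}{1+a+w}\bigr)$ at $w=b$; the price is that (b) is no longer proved independently but is logically downstream of (a), whereas the paper's residue argument gives a self-contained second derivation. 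Your own caveats (orientation of $\gamma_\epsilon^+$ so that the jump is ``above minus below'', and the branch of $\log(1+z)$ on the upper edge of the cut) are exactly the points that need care, and with the paper's conventions they work out as you state.
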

\begin{proof}
The Hilbert transform \eqref{eq:HTArctanLog} is the real part of \eqref{eq:J1} at
$z=-1-b-\iu\epsilon$. 
The proof of \eqref{eq:logW0-path-integral} follows the
same strategy; the difference is that $w=b$ and $\R_+$ in the proof of
Lemma~\ref{Lemma:J} are
both on the same side of $\gamma_\epsilon^+$. Therefore, after series
expansion only the residue at $w=-1-a$ contributes:
\[
\frac{1}{2\pi\iu}
\int_{\gamma_{\epsilon}^+}  \frac{\td w}{w{-}b}
\log\left(1-\frac{\lambda \log({-}w)}{1{+}a{+}w}\right)
= -\sum_{n=1}^\infty \frac{\lambda^n}{n!} \frac{\td[n-1]}{\td w^{n-1}}
\frac{(\log (-w))^n}{w-b}\Big|_{w=-1-a}\;.
\]
This is a Lagrange-B\"urmann formula \eqref{eq:Buermann} for 
$H(w)=\log \frac{1+a+b}{1+a+b+w}$. 
\end{proof}

We can finally express $\HT_b[\tau_a(\bullet)]$ through the symmetric
integral given in \eqref{eq:Nlambda}:
\begin{Proposition}
For any $\lambda>0$ the Hilbert transform of \eqref{eq:solution-tau} 
evaluates to
\begin{align*}
\HT_b[\tau_a(\bullet)]
&=  \log \sqrt{(\lambda\pi)^2+\Big(a+\lambda W_0(\tfrac{1}{\lambda}e^{(1+b)/\lambda})
-\lambda \log\big(\lambda W_0(\tfrac{1}{\lambda}e^{(1+b)/\lambda})-1\big)\Big)^2}
\\
& + \log\left( \frac{(1+a+b) \exp(N_\lambda(a,b))}{
(b+\lambda W_0(\tfrac{1}{\lambda}e^{(1+a)/\lambda}))
(a+\lambda W_0(\tfrac{1}{\lambda}e^{(1+b)/\lambda}))}\right),
\end{align*}
where $N_\lambda(a,b)=N_\lambda(b,a)$ is given by
\begin{equation}
N_\lambda(a,b)
= \frac{1}{2\pi\iu} \int_{\gamma_\epsilon^+} \!\! 
\td w \log\Big(1-\frac{\lambda\log(-w)}{1+a+w}\Big)
\frac{\partial}{\partial w} 
\log\Big(1-\frac{\lambda\log(1{+}w)}{1{+}b-(1{+}w)}\Big)
\label{eq:Nlambda-1}%
\end{equation}
or equivalently by \eqref{eq:Nlambda}.
In particular, formula \eqref{Gab-finalnew} of 
Theorem~\ref{thm:Lambert} follow. 
\end{Proposition}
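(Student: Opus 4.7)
The plan is to compute $\HT_b[\tau_a(\bullet)]$ via the substitution $p = u + \lambda\log(1+u)$, the inverse of $u = \lambda W_0(\tfrac{1}{\lambda} e^{(1+p)/\lambda})-1$. Under this change of variable, \eqref{eq:solution-tau} simplifies dramatically to
\[
\tau_a(p(u)) = \arctan_{[0,\pi]}\Big(\frac{\lambda\pi}{1+a+u-\lambda\log u}\Big) = \ImPart F_a(u+\iu 0),\qquad F_a(w) \defas \log\Big(1-\tfrac{\lambda\log(-w)}{1+a+w}\Big),
\]
while $\tfrac{dp}{p-b} = G(u)\,du$ with $G(w) \defas \tfrac{\partial}{\partial w}\log(w+\lambda\log(1+w)-b) = \tfrac{(1+w+\lambda)/(1+w)}{w+\lambda\log(1+w)-b}$. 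The function $G$ has a single simple pole, of residue $1$, at $w = u_b \defas V_b - 1$, where $V_b \defas \lambda W_0(\tfrac{1}{\lambda} e^{(1+b)/\lambda})$; this corresponds to the original principal-value pole at $p = b$.

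The crucial algebraic identity is the decomposition
\[
G(w) = \frac{1}{w-b} + \frac{\partial F_b(w)}{\partial w},\qquad F_b(w) \defas \log\Big(1-\tfrac{\lambda\log(1+w)}{b-w}\Big),
\]
in which the apparent $\tfrac{1}{w-b}$ pole of the second summand cancels. A Sokhotski--Plemelj computation, adapted to the fact that the pole $u_b$ of $G$ lies on the branch cut of $F_a$, converts the principal-value integral to
\[
\HT_b[\tau_a(\bullet)] = \frac{1}{2\pi\iu}\int_{\gamma_\epsilon^+} F_a(w)\,G(w)\,dw + \RePart F_a(u_b+\iu 0),
\]
where the boundary term captures the residue contribution at $u_b$. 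Splitting $G$ as above, the contour integral is the sum of $\log\tfrac{1+a+b}{b+V_a}$ (from \eqref{eq:logW0-path-integral}) and exactly $N_\lambda(a,b)$ (by the definition \eqref{eq:Nlambda-1}). The real part $\RePart F_a(u_b+\iu 0)$ evaluates to $\log\sqrt{(a+V_b-\lambda\log(V_b-1))^2 + (\lambda\pi)^2} - \log(a+V_b)$, after substituting $1+a+u_b = a+V_b$ and $\log u_b = \log(V_b-1)$. Summing the three logarithms yields the formula stated in the proposition. Inserting it into \eqref{Gab-final}, the factor $\sqrt{(\lambda\pi)^2+(a+V_b-\lambda\log(V_b-1))^2}$ cancels between numerator and denominator, giving \eqref{Gab-finalnew}.

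The equivalence of the original definition \eqref{eq:Nlambda-1} with the manifestly real form \eqref{eq:Nlambda} is obtained by deforming $\gamma_\epsilon^+$ to the vertical line $\{w = -\tfrac{1}{2} + \iu t : t \in \R\}$. Under this parametrization $-w = \tfrac{1}{2} - \iu t$, $1+w = \tfrac{1}{2} + \iu t$, $1+a+w = a+\tfrac{1}{2}+\iu t$, $b-w = b+\tfrac{1}{2}-\iu t$, while $dw = \iu\,dt$ and $\partial_w = -\iu\,\partial_t$ together transform the integrand of \eqref{eq:Nlambda-1} into that of \eqref{eq:Nlambda}. The deformation is valid because the branch cuts of the integrand ($\R_+$ from $\log(-w)$ and $(-\infty,-1]$ from $\log(1+w)$) both lie outside the strip $-1 < \RePart w < 0$ swept by the homotopy, and the integrand decays like $(\log|w|)^2/|w|^2$ at infinity so that large-arc contributions vanish when the line is closed. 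The symmetry $N_\lambda(a,b) = N_\lambda(b,a)$ then follows from \eqref{eq:Nlambda} by the substitution $t \mapsto -t$ combined with an integration by parts.

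The main technical subtlety I anticipate is the Sokhotski--Plemelj step above, because the simple pole $u_b$ of $G$ sits precisely on the branch cut of $F_a$, so which ``branch'' of $F_a$ should weight the residue is \emph{a priori} unclear. The correct prescription is the symmetric mean $\tfrac{1}{2}(F_a(u_b+\iu 0) + F_a(u_b-\iu 0)) = \RePart F_a(u_b+\iu 0)$; this arises naturally from indenting $\gamma_\epsilon^+$ with a pair of semicircular detours around $u_b$, one on each side of the cut, and letting them shrink symmetrically. A consistency check with \eqref{eq:logW0-path-integral}, whose integrand $\tfrac{F_a(w)}{w-b}$ has its pole at $b \in \R_+$ sitting analogously on the branch cut of $F_a$, confirms the prescription.
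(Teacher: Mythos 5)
Your proposal is correct and follows essentially the same route as the paper: the substitution $p=u+\lambda\log(1+u)$, conversion to a contour integral over $\gamma_\epsilon^+$ with the boundary term $\RePart F_a(u_b+\iu 0)$ extracted at the pole $u_b=\lambda W_0(\tfrac{1}{\lambda}e^{(1+b)/\lambda})-1$, the split of the kernel into $\tfrac{1}{w-b}$ plus $\partial_w F_b$ so that \eqref{eq:logW0-path-integral} and \eqref{eq:Nlambda-1} appear, and the deformation to $\RePart w=-\tfrac12$ with integration by parts for the symmetry. The only (equivalent) difference is technical: the paper handles the on-axis pole by shifting $b\mapsto b+\iu\epsilon$ and comparing contours $\gamma_{\epsilon/2}^+$ and $\gamma_{2\epsilon}^+$, whereas you use a symmetric Sokhotski--Plemelj indentation; both give the same $\RePart F_a(u_b+\iu 0)$.
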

\begin{proof}
The Hilbert transform $\HT_b[\tau_a(\bullet)]$ 
in \eqref{Gab-final} 
is the real part of the standard integral when shifting
$b\mapsto b+\iu \epsilon$. We 
substitute $p=u+\lambda\log(1+u)$ with inverse 
$u=\lambda W_0(\frac{1}{\lambda}e^{(1+p)/\lambda})-1$ and 
rewrite this as 
a contour integral of a complex logarithm over $\gamma_{\epsilon/2}^+$
(to separate $\R_+$ from $b+\iu \epsilon$):
\begin{equation*}
\HT_b[\tau_a(\bullet)]
= \RePart\bigg( \frac{1}{2\pi\iu } \int_{\gamma_{\epsilon/2}^+} 
\!\!\! \td w\;
\frac{(1{+}w{+}\lambda)\;
\log \big(1-\frac{\lambda \log(- w)}{a{+}1{+}w} \big)}{(1{+}w)
(w{+}\lambda\log(1{+}w)-(b+\iu \epsilon))}\bigg)\;.
\end{equation*}
We subtract and add the
same integral over the curve  $\gamma_{2\epsilon}^+$. The difference
retracts to the residue at the solution of $
w+\lambda\log(1+w)=b$ which gives back a 
Lambert function. In the remaining integral we can shift $b+\iu
\epsilon\to b$ and obtain
\begin{equation*}\begin{split}
\HT_b[\tau_a(\bullet)]
&= \frac{1}{2\pi\iu} \int_{\gamma_\epsilon^+} \td w\;
\frac{\displaystyle (1{+}w{+}\lambda)\;
\log \Big(1-\frac{\lambda \log(- w)}{a{+}1{+}w} \Big)}{(1{+}w)
(w{+}\lambda\log(1{+}w)-b)}
\\
&+ \RePart \Big(\log 
 \Big(1-\frac{\lambda \log(- \lambda
   W_0(\frac{1}{\lambda}e^{(1+b)/\lambda})+1-\iu \epsilon')}{
a+ \lambda W_0(\frac{1}{\lambda}e^{(1+b)/\lambda})} \Big)\Big)\;.
\end{split}\end{equation*}
Adding $0=\text{rhs}-\text{lhs}$ of \eqref{eq:logW0-path-integral} identifies $N_\lambda(a,b)$ in 
\eqref{eq:Nlambda-1}. 

The integrand of \eqref{eq:Nlambda-1} is for all $a,b\geq
0$ holomorphic in $\lambda \in \C$ away from the two branch cuts $[0,\infty)$ and 
$(-\infty,-1]$. Therefore, $\gamma_\epsilon^+$ can be deformed into the 
line $-\frac{1}{2}+\iu t$. The result \eqref{eq:Nlambda}
exposes the symmetry $N_\lambda(a,b)=N_\lambda(b,a)$ through integration by parts. 
\end{proof}

Asymptotic formulae \cite{Knuth} of Lambert-W
give $\lambda W_0(\frac{1}{\lambda} e^{(1+a)/\lambda})\geq 
C_0 (1+a)^{1-\delta}$ for some $C_0 > 0$ and $0<\delta<\frac{1}{2}$. 
The function $N_\lambda(a,b)$ tends to $0$ for large $a,b$ which implies 
a bound $N_\lambda(a,b) \leq \log (C_1)$ for some $C_1>0$. 
We thus conclude from \eqref{Gab-finalnew} the bound 
$0<G_\lambda(a,b)\leq \frac{C_1}{(1+a+b)^{1-2\delta}}
\leq \frac{C_1}{(1+a)^{\frac{1}{2}-\delta}
(1+b)^{\frac{1}{2}-\delta}}$. This bound a posteriori justifies 
our assumption that all integrals converge for $\CO\to \infty$. 

\begin{Remark}
Another presentation uses \eqref{eq:J1} to write 
\begin{align*}
\frac{\td}{\td w} 
\log \Big(1 -  \frac{\lambda \log(1{+}w)}{b-w}\Big)
&= -
\frac{1}{\pi}\int_0^\infty \!\! \td u\; 
\frac{\arctan\displaylimits_{[0,\pi]} 
\big(\frac{\lambda\pi}{1+b+u -\lambda \log u}\big)}{(1+u+w)^2}
\\[-1ex]
& +
\frac{1}{w{-}(\lambda  W_0(\frac{1}{\lambda} e^{(1+b)/\lambda})-1)}
-\frac{1}{w-b}\;,
\end{align*}
for $\lambda>0$,
$b\geq 0$ and $w\in \C\setminus (-\infty,-1]$.
Inserted into \eqref{eq:Nlambda-1} one obtains with \eqref{eq:logW0-path-integral} a purely real and
manifestly symmetric integral representation 
\begin{equation}
G_\lambda(a,b)=\frac{\exp\Bigg(\displaystyle
{-}\!\!
\int_0^\infty \!\!\!\! \td u  \!\! \int_0^\infty \!\!\!\!\td v  \;\frac{
	{\arctanD\limits_{[0,\pi]}}
	\big(\frac{\lambda\pi}{1{+}a{+}v{-}\lambda \log v}\big)
	{\arctanD\limits_{[0,\pi]}}
	\big(\frac{\lambda\pi}{1{+}b{+}u{-}\lambda \log u}\big)
}{\pi^2(1+u+v)^2}\Bigg)}{
\lambda 
W_0(\frac{1}{\lambda}e^{(1+a)/\lambda}) +\lambda 
W_0(\frac{1}{\lambda}e^{(1+b)/\lambda})-1}\;.
\label{Gab-alternative}%
\end{equation}
\end{Remark}

\section{Holomorphic extension}
\label{sec:continuation}

We established a solution $G_{\lambda}(a,b)$ for the integral equation
\eqref{NLIQ} for real values of $a,b\geq 0$ and $\lambda>0$. We will
now discuss its analytic continuation in two regimes:
\begin{enumerate}
\item For real $a,b \geq 0$, we determine the maximal domain of
  analyticity for complex values of $\lambda$ (as alluded to in
  Theorem~\ref{thm:Lambert}).
\item Keeping $\lambda$ real, we can extend to complex values of $a$
  and $b$.
\end{enumerate}
First we will consider the function $K(a,\lambda)$ from
\eqref{eq:K-Lambert} that appears in $I_{\lambda}(a)$ and the
denominator of \eqref{Gab-finalnew}, and then we turn to the function
$N_{\lambda}(a,b)$ in the numerator.

\subsection{Holomorphic extension of $K(a,\lambda)$ in $\lambda$}
\label{sec:hol-lambda}

We study the holomorphicity of the map
$\C\ni \lambda\mapsto 
K(a,\lambda)+1+a
= \lambda W_{k(\lambda)}(\frac{1}{\lambda}
e^{\frac{1+a}{\lambda}})
$ for fixed $a\geq 0$. Our conventions for the branches $W_k$ 
indexed by $k\in \Z$ follow \cite[section~4]{Knuth}; see in particular 
figure~5 therein. In polar coordinates 
\begin{equation}
\lambda=|\lambda|e^{\iu \varphi}, \qquad \text{such that} \qquad
\frac{1}{\lambda}e^{(1+a)/\lambda}=
\frac{1}{|\lambda|}e^{\frac{1+a}{|\lambda|}\cos \varphi - 
\iu \big(\varphi+\frac{1+a}{|\lambda|}\sin \varphi\big) },
\label{eq:polar}%
\end{equation}
branch cuts of Lambert-W correspond to solutions of
$\varphi+\frac{1+a}{|\lambda|}\sin \varphi \in (2\Z+1)\pi$. 
\begin{Proposition}
\label{prop:branches}
Let $a\geq 0$ and $\lambda=|\lambda|e^{\iu \varphi} 
\in \C\setminus (-\infty,0]$. 
For $0 < \varphi <  \pi$, define 
by $\varphi+\frac{1+a}{\lambda_k(\varphi)}\sin \varphi =(2k+1)\pi$
a sequence  $(\lambda_k(\varphi))_{k\in \N_0}$ of positive 
real numbers, and let $\lambda_{-k-1}(-\varphi) \defas
\lambda_{k}(\varphi)$ as well as $\lambda_0(0)\defas 0$.
Then the assignment
\begin{align}
K(a,\lambda)
:=\left\{\!\!
\begin{array}{cl}
\lambda W_{-k}(\frac{1}{\lambda}e^{(1+a)/\lambda})-a-1 &
\text{if $\varphi>0$ and 
$\lambda_{k}(\varphi)<|\lambda| \leq \lambda_{k-1}(\varphi)$,}
\\
\lambda W_{0}(\frac{1}{\lambda}e^{(1+a)/\lambda})-a-1 &
\text{if $\varphi \geq 0$ and $|\lambda| >\lambda_{0}(\varphi)$,}
\\
\lambda W_{0}(\frac{1}{\lambda}e^{(1+a)/\lambda})-a-1 &
\text{if $\varphi < 0$ and $|\lambda| \geq \lambda_{-1}(\varphi)$,}
\\
\lambda W_{k}(\frac{1}{\lambda}e^{(1+a)/\lambda})-a-1 &
\text{if $\varphi < 0$ and 
$\lambda_{-k-1}(\varphi)\leq |\lambda| < \lambda_{-k}(\varphi)$}
\end{array}
\right.
\label{eq:W-branches}
\end{align}
uniquely extends to $\lambda<0$, and 
the resulting function $\lambda \mapsto K(a,\lambda)$ is 
holomorphic on $\C\setminus 
\{-(1+a) \frac{\sin \alpha}{\alpha} e^{\iu\alpha}\;:~
\frac{\sin \alpha}{\alpha} e^{\alpha \cot \alpha} \geq \frac{e}{1+a}\;,~
-\pi <\alpha <\pi\}$.
\end{Proposition}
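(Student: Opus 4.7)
My plan is to prove holomorphy in two stages: (i) on each open region appearing in \eqref{eq:W-branches} the prescribed branch $W_{k}$ is composed with $z(\lambda)\defas\frac{1}{\lambda}e^{(1+a)/\lambda}$ at a point off its branch cut, so the composition is holomorphic; (ii) across every boundary curve $|\lambda|=\lambda_k(\varphi)$ the two neighbouring prescriptions agree, giving a continuous---and hence, by Morera's theorem, holomorphic---extension. The remaining singular set will then be identified with the preimage under $\lambda\mapsto z(\lambda)$ of the Lambert branch point $z=-1/e$, which is precisely where the gluing in (ii) breaks down.

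For (i), equation \eqref{eq:polar} gives the unreduced argument $\arg z(\lambda)=-\varphi-\tfrac{1+a}{|\lambda|}\sin\varphi$, which by the defining relation for $\lambda_k(\varphi)$ equals $-(2k+1)\pi$ precisely on $|\lambda|=\lambda_k(\varphi)$ for $\varphi>0$. Hence on the open strip $\lambda_k(\varphi)<|\lambda|<\lambda_{k-1}(\varphi)$ (with $k\geq 1$) this argument stays in $(-(2k+1)\pi,-(2k-1)\pi)$, $z(\lambda)$ lies on the sheet of $W_{-k}$ away from its branch cut, and $\lambda W_{-k}(z(\lambda))-1-a$ is holomorphic there. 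The unbounded region $|\lambda|>\lambda_0(\varphi)$ similarly lies on the sheet of $W_0$ (argument in $(-\pi,\pi)$), and the analysis for $\varphi<0$ is symmetric using $W_{+k}$. For (ii), the monodromy of Lambert's function \cite{Knuth} states that analytic continuation of $W_{-k}(z)$ once around $z=0$ (avoiding $z=-1/e$) produces $W_{-k-1}(z)$; this is precisely the identification we invoke when $\arg z(\lambda)$ crosses $-(2k+1)\pi$, so the two prescriptions match along $|\lambda|=\lambda_k(\varphi)$ and $K(a,\cdot)$ is continuous across. The convention $\lambda_{-k-1}(-\varphi)\defas\lambda_k(\varphi)$ together with the Schwarz symmetry $W_{-k}(\bar z)=\overline{W_k(z)}$ then matches up the upper- and lower-half-plane constructions across the negative real $\lambda$-axis.

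To locate the cut, I parametrise the locus $z(\lambda)\in\R_{\leq -1/e}$: setting $\lambda=-(1+a)\frac{\sin\alpha}{\alpha}e^{\iu\alpha}$ gives $(1+a)/\lambda=-\alpha\cot\alpha+\iu\alpha$, whence $z(\lambda)=-\frac{\alpha}{(1+a)\sin\alpha}e^{-\alpha\cot\alpha}\in\R_-$, and the bound $z(\lambda)\leq -1/e$ translates to $\frac{\sin\alpha}{\alpha}e^{\alpha\cot\alpha}\geq \frac{e}{1+a}$. On this arc, two adjacent Lambert branches appearing in the piecewise definition literally collide at $z=-1/e$ and cannot be analytically glued, so $K(a,\cdot)$ genuinely fails to extend across; off this arc the gluing of (ii) succeeds everywhere in $\C\setminus\{0\}$. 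Uniqueness of the extension to $\lambda<0$ then follows from the identity theorem on the connected complement.

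The main obstacle will be the careful bookkeeping across the countably many branch curves $|\lambda|=\lambda_k(\varphi)$ that accumulate at $\lambda=0$: one must verify that on every such curve $z(\lambda)\neq -1/e$ (otherwise a spurious cut segment would appear there), and that the piecewise rule \eqref{eq:W-branches} correctly tracks the winding of $\arg z(\lambda)$ as $\varphi$ varies. A clean way to check this is to observe that $\{z(\lambda)\in\R_-\}$ is a single real-analytic arc given by the parametrisation above, so its intersection with the locus $|z(\lambda)|=1/e$ occurs only at $z(\lambda)=-1/e$, and these intersections are exactly the endpoints of the claimed cut.
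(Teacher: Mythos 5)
There is a genuine error in the way you locate the branch cut, hidden by a compensating algebra mistake. You claim the cut of $K(a,\cdot)$ is the locus where $z(\lambda)\defas\frac{1}{\lambda}e^{(1+a)/\lambda}$ lands on $(-\infty,-1/e]$, and that ``$z(\lambda)\leq -1/e$ translates to $\frac{\sin\alpha}{\alpha}e^{\alpha\cot\alpha}\geq\frac{e}{1+a}$''. Both halves of this are wrong, and the two errors cancel to produce the correct final formula. From your own parametrisation $z(\lambda)=-\frac{\alpha}{(1+a)\sin\alpha}e^{-\alpha\cot\alpha}$ one gets
\begin{equation*}
z(\lambda)\leq -\tfrac{1}{e}
\iff \frac{\alpha}{\sin\alpha}\,e^{1-\alpha\cot\alpha}\geq 1+a
\iff \frac{\sin\alpha}{\alpha}\,e^{\alpha\cot\alpha}\leq\frac{e}{1+a},
\end{equation*}
i.e.\ the \emph{opposite} inequality to the one in the Proposition. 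The condition $\frac{\sin\alpha}{\alpha}e^{\alpha\cot\alpha}\geq\frac{e}{1+a}$ in the statement corresponds to $z(\lambda)\in[-1/e,0)$, and that is where the genuine discontinuity sits: along $(-\infty,-1/e]$ the branches $W_0$ and $W_{\mp 1}$ \emph{do} glue continuously (this is exactly the monodromy you invoke in step (ii)), so the piecewise switch across the outer cochleoid arc is harmless there. The failure occurs on the complementary portion, where $z(\lambda)\in(-1/e,0)$: there $W_0$ is regular and real, the branches $W_{\pm1}$ are separated from it (their cut $[-1/e,0)$ lies between them), and the prescribed jump from $W_0$ to $W_{\mp1}$ cannot be continuous. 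Your heuristic that ``two adjacent branches literally collide at $z=-1/e$ and cannot be glued'' gets the geometry backwards: the collision point $z=-1/e$ is the branch point (the endpoints $P,\bar P$ of the cut), and the obstruction to gluing lives on the side of it \emph{closer to the origin}, not farther. Had you carried out your algebra correctly under your stated premise, you would have obtained the complement of the correct cut.

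Two smaller gaps: (a) your gluing argument gives holomorphy on $\C\setminus\{0\}$ minus the cut, but the boundary curves $|\lambda|=\lambda_k(\varphi)$ accumulate at $\lambda=0$, which lies in the claimed domain; the paper settles this point separately by observing that $K(a,\lambda)=-\lambda\log(1+a)+\bigO(\lambda^2)$ is analytic at $0$ via \eqref{eq:K-def} and \eqref{eq:K-FEQ}, and you need some such argument. (b) Your worry that ``one must verify $z(\lambda)\neq-1/e$ on every inner curve'' is not the relevant check: for $k\geq 1$ the transitions $W_{-k}\to W_{-k-1}$ are continuous along the \emph{entire} negative real $z$-axis (their common cut is all of $(-\infty,0]$), so no condition on $|z|$ is needed there; only the $W_0\leftrightarrow W_{\mp1}$ transition is sensitive to the position of $z$ relative to $-1/e$. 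Apart from these points, your overall architecture (holomorphy on open sectors, continuity plus Morera/edge-of-the-wedge across the spirals, Schwarz symmetry across $\lambda<0$) matches the paper's proof.
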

\begin{proof}
  According to \cite{Knuth}, any branch $W_k$ is holomorphic on
  $\C\setminus (-\infty,0]$.  By \eqref{eq:polar} this means that
  $K(a,|\lambda|e^{\iu \varphi})$ is holomorpic in every open
  subregion of \eqref{eq:W-branches} excluding
  $|\lambda|=\lambda_k(\varphi)$.  First fix $0<\varphi<\pi$, $k\geq 1$ and
  consider $\lambda_\pm = |\lambda_\pm |e^{\iu \varphi}$ with
  $|\lambda_\pm|= \lambda_k(\varphi) \pm \epsilon$.  Then
  $\varphi+\frac{1+a}{|\lambda_\pm|}\sin \varphi = (2k+1)\mp \epsilon$
  and $\frac{1}{\lambda_\pm}e^{(1+a)/\lambda_\pm} =r_ke^{\mp
    \iu(\pi-\epsilon)}$ with $r_k:= 
  e^{\frac{(1+a)}{\lambda_k(\varphi)}\cos \varphi}/\lambda_k(\varphi) $. The assignment
  \eqref{eq:W-branches} gives $K(a,\lambda_+)=\lambda_+ W_{-k} (-r_k-
  \iu \epsilon)-a-1$ and $K(a,\lambda_-)= \lambda_- W_{-k-1} (-r_k+
  \iu \epsilon)-a-1$. The branch conventions of \cite{Knuth} imply
  $\lim_{\epsilon\to 0} K(a,\lambda_+)= \lim_{\epsilon\to 0}
  K(a,\lambda_-)$. The same continuity property holds for all $k\geq 1$ and 
$0<\varphi<0$. By the edge-of-the-wedge theorem, all
  branches $W_{-k}$ for $k\geq 1$ continue each other as the same
  holomorphic function.

Similarly, for $\varphi<0$ the passage from $W_{k}$ to 
$W_{k+1}$ is continuous for all $k\geq 1$, and  
all branches $W_{k}$ for $k\geq 1$ continue each other as the 
same holomorphic function. 

In terms of $\varphi=\pi-\alpha\in (0,\pi)$
or $\varphi=-\pi-\alpha \in (-\pi,0)$, the border 
between $W_0$ and $W_{\mp 1}$ is the curve 
$\mathcal{C}_a:= \{-(1+a) \frac{\sin \alpha}{\alpha} 
e^{\iu\alpha}\;:-\pi < \alpha < \pi\}$, when continued to 
$\alpha=0$.
The domain crossing is continuous if \eqref{eq:polar} falls into
$(-\infty, -\frac{1}{e}]$ at $\lambda_0(\varphi)e^{\pm \iu \varphi}$. 
This condition amounts to
$|\lambda|= (1+a)\frac{\sin \alpha}{\alpha} < r(\alpha)$ with 
$r(\alpha) \defas e^{1-\alpha \cot \alpha}$. It is satisfied for 
$|\alpha|\to \pi$, i.e.\ $\varphi\to 0$ and holds by continuity in an 
open interval $\varphi \in (-\varphi_c,\varphi_c)$. Again by the
edge-of-the-wedge theorem, the open domains of $W_0$ and 
$W_{\pm 1}$ continue each other as 
the same holomorphic function. 
The domain crossing at the part of $\mathcal{C}_a$ where
$(1+a)\frac{\sin \alpha}{\alpha}\geq e^{1-\alpha \cot \alpha}$ is 
not continuous and leads to a branch cut of $K(a,\lambda)$. 

The continuation to the negative real axis is determined by 
$\lim_{\varphi\to\pi} \lambda_0(\varphi)=1+a$ and
$\lim_{\varphi\to-\pi} \lambda_{-1}(\varphi)=1+a$. 
Therefore, \eqref{eq:W-branches} selects $W_0$ for 
$|\lambda|>1+a$ and $W_{\pm 1}$ for $|\lambda|<1+a$. 
According to
\eqref{eq:polar}, the argument of the Lambert function approaches
$-\frac{1}{|\lambda|} e^{-\frac{1}{|\lambda|}} 
e^{-\frac{a}{|\lambda|}} \in [-\frac{1}{e},0)$. For $t\in [-\frac{1}{e},0)$ one 
has $\lim_{\epsilon\to 0} W_0(t+\iu \epsilon)=
\lim_{\epsilon\to 0} W_0(t-\iu \epsilon)$ and 
$\lim_{\epsilon\to 0} W_{-1}(t+\iu \epsilon)=
\lim_{\epsilon\to 0} W_1(t-\iu \epsilon)$ so that the approach to the 
negative real axis is continuous, hence holomorphic. 
In standard conventions \cite{Knuth}, $\lambda \in (-1-a,0)$ is 
assigned to $W_{-1}$.

Finally, although every neighbourhood of $\lambda=0$ intersects 
all branches of
Lambert-W in $K(a,\lambda)$, we saw in \eqref{eq:K-FEQ} and
\eqref{eq:K-def} that $K(a,\lambda)=-\lambda\log(1+a) +
\bigO(\lambda^2)$ is holomorphic at $\lambda=0$. 
\end{proof}

We have seen that the domain boundaries of the Lambert function 
in $K(a,\lambda)$ are described by the curve
$\lambda_k(\varphi)e^{\iu \varphi}= 
(1+a)\frac{\sin((2k+1)\pi-\varphi)}{(2k+1)\pi-\varphi} 
e^{\iu \varphi}$. Taking the union over all $k$ and 
setting $\varphi-(2k+1)\mapsto \theta$, we obtain 
a part of the  \emph{cochleoid}\footnote{Cochleoid refers to 
`snail-shaped'. Its reciprocal $\frac{2b}{\pi} 
\frac{\theta}{\sin \theta} e^{\iu
  \theta}$ is the \emph{quadratrix of Hippias} used in
classical antiquity to trisect an angle or
to square a circle.}
$ \R \ni \theta \mapsto
	-(1+a) \frac{\sin \theta}{\theta} e^{\iu\theta}
$, see Figure~\ref{fig1}.
\begin{figure}[h]
\begin{picture}(60,72)
\put(0,0){\includegraphics[width=6cm]{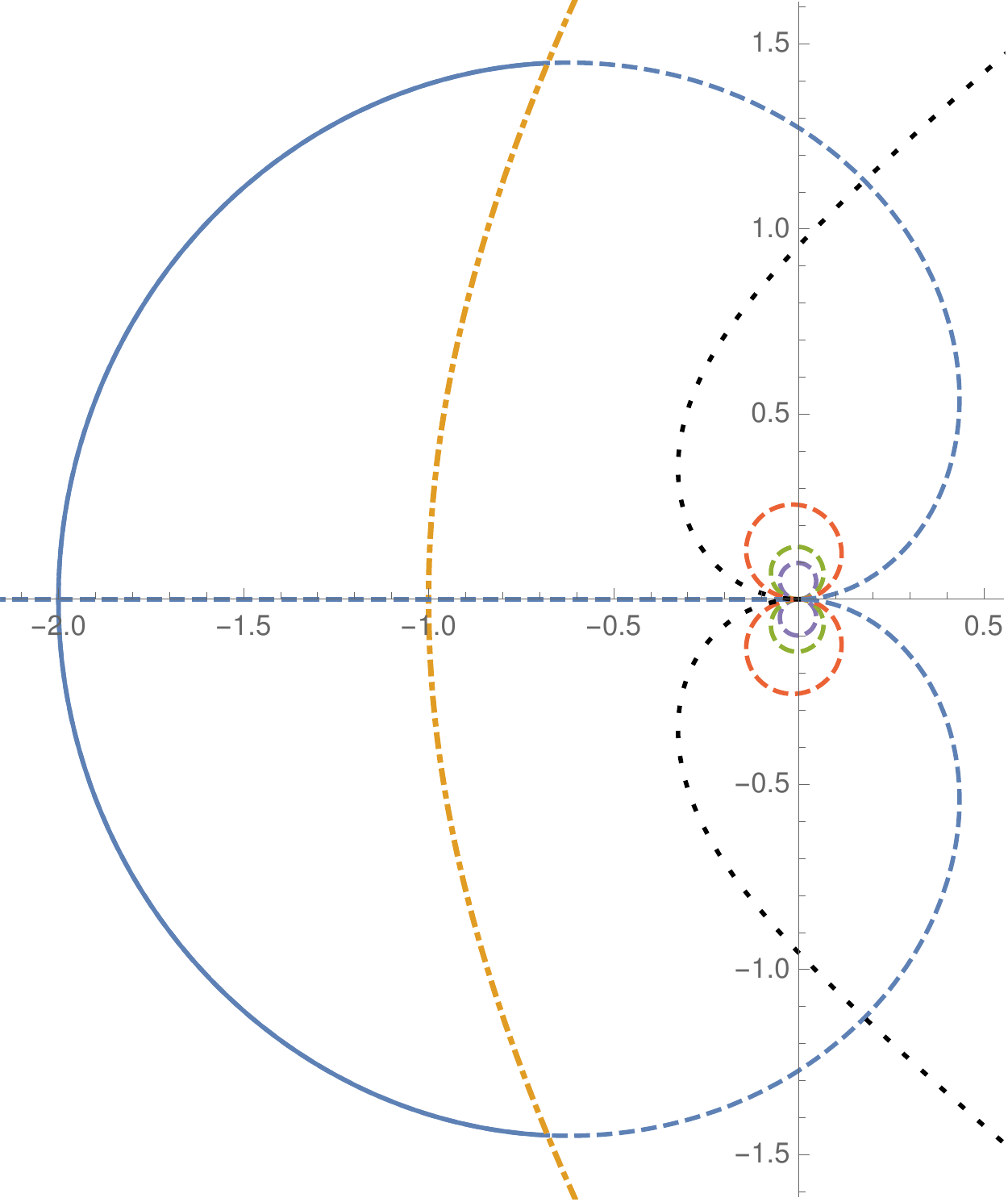}}
\put(65,50){\mbox{$W_0$}}
\put(65,20){\mbox{$W_0$}}
\put(-10,50){\mbox{$W_0$}}
\put(-10,20){\mbox{$W_0$}}
\put(33,55){\mbox{$W_{-1}$}}
\put(29.5,68){\mbox{$P$}}
\put(29,0){\mbox{$\bar{P}$}}
\put(33,15){\mbox{$W_1$}}
\put(23,25){\mbox{$\mathcal{C}$}}
\put(2,50){\mbox{$\mathcal{C}_a$}}
\put(60,68){\mbox{$\mathcal{L}_a^+$}}
\put(60,2){\mbox{$\mathcal{L}_a^-$}}
\put(50.5,24){\vector(-1,4){2}}
\put(50,21){\mbox{\scriptsize$W_2$}}
\put(50.5,47.5){\vector(-1,-4){2}}
\put(49,48){\mbox{\scriptsize$W_{{-}2}$}}
\put(62,34){\mbox{\tiny$\RePart(\lambda)$}}
\put(42,73){\mbox{\tiny$\ImPart(\lambda)$}}
\end{picture}
\caption{%
	Branch assignment of
	$
		\C\ni\lambda
		\mapsto 
		\lambda W_k(\frac{1}{\lambda}e^{(1+a)/\lambda})
	$ for $a=1$. 
	All dashed boundaries are holomorphically connected and form
        parts of a curve known as cochleoid. Only the solid part of
        the outer boundary $\mathcal{C}_a$ is a discontinuous branch cut.
	The conjugate pair of branch points $P$ and $\bar{P}$ traces 
the dash-dotted auxiliary curve $\mathcal{C}$ when we vary $a$. 
The region $\Omega_K$ to the right of $\mathcal{C}$ is the common
holomorphicity domain of $K(a,\lambda)$ and $L(a,\lambda)$ for all
$a\geq 0$.
Along the dotted curves $\mathcal{L}_a^\pm$ one has 
$\lambda W_l(\frac{1}{\lambda}e^{(1+a)/\lambda})<0$ for a different 
branch $W_l \neq  W_k$.
}%
\label{fig1}%
\end{figure}
The cochleoid has infinitely many spirals
which all pass through $\lambda=0$ tangent to the real axis. The 
domain boundaries of assignments of branches of Lambert-W 
to $K(a,\lambda)$ correspond to every second spiral of the
cochleoid; the missing spirals
$\varphi+\frac{1+a}{\lambda}\sin \varphi =\pm 2k\pi$ correspond to the
centre lines of the branches.

The outer boundary 
$\mathcal{C}_a= \{-(1+a) \frac{\sin \alpha}{\alpha} 
e^{\iu\alpha}\;:-\pi < \alpha < \pi\}$
of the cochleoid intersects the circle 
of radius $r(\alpha)= e^{1-\alpha \cot \alpha}$ in the two branch points 
$-e^{\pm \iu \alpha}r(\alpha)$. Varying $a$, these branch points  
trace the critical curve 
\begin{align}
\mathcal{C}=\left\{ 
	-e^{1-\alpha \cot \alpha +\iu \alpha}
	\colon
	-\pi<\alpha< \pi
\right\}
\label{eq:critical}%
\end{align}
shown as dash-dotted curve in figure~\ref{fig1}.  
By construction, all branch cuts of 
$K(a,\lambda)$, when varying $a$, lie to the left of $\mathcal{C}$. 
Therefore, the region $\Omega_K$ to the
right of $\mathcal{C}$ is precisely the common domain of 
holomorphicity for all $a\geq 0$. In particular, the
domain $\Omega_K$ includes the disk $|\lambda|<1$, and 
$K(a,\lambda)$ has radius of convergence $\geq 1$ in $\lambda$, 
for all values $a \geq 0$.

Observe that $\R_+$ (that is $\alpha=\pm\pi$) is the only infinite ray
in the $\lambda$-plane along which $K(a,\lambda)$ may be analytically
continued for all values of $a\geq 0$. Any other ray
$-e^{\iu\alpha}\R_+$ where $\abs{\alpha}<\pi$, hits a point on
$\mathcal{C}$.

\begin{Remark}[Holomorphicity domain]
The maximal domain of analytic continuation of $K(a,\lambda)$
into the complex $\lambda$-plane, for all $a\geq 0$, can also be derived directly from \eqref{eq:K-FEQ}.
A branch point of $K(a,\lambda)$ at $\lambda=\lambda^{\ast}$
corresponds to a zero at $K=K^{\ast} \defas K(a,\lambda^{\ast})$ of the derivative
\begin{equation*}
0 \stackrel{!}{=}
\frac{\td \lambda}{\td K}
= \frac{\td}{\td K} \frac{-K}{\log(1+a+K)}
= \frac{K-(1+a+K)\log(1+a+K)}{(1+a+K) \log^2(1+a+K)}.
\end{equation*}
Inserting \eqref{eq:K-FEQ} and \eqref{eq:K-Lambert} gives 
$W(\frac{1}{\lambda^*}e^{\frac{1+a}{\lambda^*}})=-1$ with solution
$\frac{1+a}{\lambda^*}= W_0(-\frac{1+a}{e} \pm \iu
\epsilon)$. This equation can be converted into 
the same critical curve \eqref{eq:critical}.
\end{Remark}
\begin{Remark}[Symmetry]\label{rem:symmetry}
	It follows from our above discussion and $W_k(\bar{z})=\overline{W_{-k}(z)}$ that $K(a,\bar{\lambda}) = \overline{K(a,\lambda)}$ for every $\lambda \in \Omega_K$ and all real $a\geq 0$.
\end{Remark}
\begin{Lemma}
	The function $L(a,\lambda) = \log\frac{a+K(a,\lambda)}{a}$ from \eqref{eq:L-Lambert}, where $\log$ denotes the principal branch, is for every $a> 0$ holomorphic in $\lambda$ on the domain $\Omega_K \ni \lambda$.
\end{Lemma}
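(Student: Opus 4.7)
The claim is that $\lambda \mapsto L(a,\lambda) = \mathrm{Log}\bigl((a+K(a,\lambda))/a\bigr)$ is holomorphic on $\Omega_K$, where $\mathrm{Log}$ is the principal branch. My plan is to verify that the rational expression $(a+K(a,\lambda))/a$, which is automatically holomorphic on $\Omega_K$ because $K$ is, takes values in $\mathbb{C}\setminus (-\infty,0]$; post-composition with $\mathrm{Log}$ then yields the result.

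The first step is to exclude $a+K=0$. If $a+K(a,\lambda_{\ast})=0$ at some $\lambda_{\ast}\in\Omega_K$, then $W:=1+a+K=1$, and the functional equation \eqref{eq:K-FEQ} (analytically continued throughout $\Omega_K$ from a neighbourhood of $\lambda=0$) gives $K=-\lambda_{\ast}\log 1=0$, forcing $a=0$ and contradicting $a>0$. Hence $(a+K)/a$ never vanishes on $\Omega_K$.

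The second step is to exclude $(a+K)/a\in(-\infty,0)$, i.e.\ $W\in(-\infty,1)$. Suppose $W(a,\lambda_{\ast})=s<1$ for some $\lambda_{\ast}\in\Omega_K$. The identity $W+\lambda\log W=1+a$ (the continuation of \eqref{eq:K-FEQ} from $W(a,0)=1+a$, starting with $\log W=\log(1+a)>0$) forces $\lambda_{\ast}=(1+a-s)/\log s$, the branch of $\log s$ being dictated by the analytic continuation along a path from $\lambda=0$ to $\lambda_{\ast}$. I would then dispatch the cases: for $s\in(0,1)$ with the principal branch this gives $\lambda_{\ast}<0$ real, contradicting Proposition~\ref{prop:branches}, which on the real segment of $\Omega_K$ prescribes the branch $W_{-1}$ and hence $W>1$; for $s\leq 0$, $\log s$ must be non-real, and a direct check shows the resulting $\lambda_{\ast}$ lies on or beyond $\mathcal{C}$, i.e.\ outside $\Omega_K$; non-principal choices of $\log s$ land $\lambda_{\ast}$ on the further spirals of the cochleoid that bound other branches, again outside $\Omega_K$.

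The main obstacle will be ruling out a non-real $\lambda_{\ast}\in\Omega_K$ at which the holomorphic $W(a,\cdot)$ happens to take a real value in $(-\infty,1]$. To handle this I would combine: (i) the conjugation symmetry $K(a,\bar\lambda)=\overline{K(a,\lambda)}$ from Remark~\ref{rem:symmetry}, so that real values of $W$ off the real axis occur in conjugate pairs $\lambda_{\ast},\bar\lambda_{\ast}\in\Omega_K$; (ii) connectedness of $\Omega_K$ together with $W>1$ on $\mathbb{R}\cap\Omega_K$; and (iii) the fact that the critical curve $\mathcal{C}$, on which $W=-1$ can first occur, lies outside $\Omega_K$. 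Together these prevent $W$ from continuously deforming into $(-\infty,1]$ without either crossing $W=1$ (ruled out in the first step) or hitting a branch point on $\mathcal{C}$ (ruled out because $\lambda_{\ast}\in\Omega_K$). With both cases excluded, $(a+K)/a$ maps $\Omega_K$ into $\mathbb{C}\setminus(-\infty,0]$, and $L(a,\lambda)$ is holomorphic there.
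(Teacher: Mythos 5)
Your overall strategy coincides with the paper's: show that $(a+K(a,\lambda))/a$ omits $(-\infty,0]$ on $\Omega_K$ and then compose with the principal logarithm. The first step (excluding $a+K=0$) and the case $s\in(0,1)$, which forces $\lambda_\ast$ real and is settled by the positivity $a+K(a,\lambda)>0$ on $\Omega_K\cap\R$, are both sound. The gap is in your treatment of $s\le 0$. Writing $s=-e^{\pi\cot\varphi}$ with $0<\varphi<\pi$, the principal branch gives $\log s=\tfrac{\pi}{\sin\varphi}e^{\pm\iu\varphi}$ and hence $\lambda_\ast=(1+a-s)/\log s=\tfrac{\sin\varphi}{\pi}\,(1+a+e^{\pi\cot\varphi})\,e^{\mp\iu\varphi}$. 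These are exactly the curves $\mathcal{L}_a^{\mp}$ of the paper, and your ``direct check'' fails: they are \emph{not} outside $\Omega_K$. For instance at $\varphi=\pi/4$ one gets $\lambda_\ast\approx 3.8\,(1\mp\iu)/\sqrt{2}\cdot\sqrt{2}\approx 3.8\mp 3.8\iu$ (for $a=1$), which lies far to the right of $\mathcal{C}$; indeed Figure~\ref{fig1} draws $\mathcal{L}_a^{\pm}$ running through the interior of $\Omega_K$ out to large positive real part. So there genuinely exist $\lambda_\ast\in\Omega_K$ at which the functional equation $u+\lambda\log u=1+a$ admits a negative real solution $u$ — just not on the branch that $K$ actually follows. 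That branch discrimination is the entire content of the paper's proof: one checks in the limits $\varphi\to 0$ and $\varphi\to\pi$, and then by continuity within each branch domain (with a single crossing at $\varphi_0$), that the Lambert branch $W_l$ satisfying $\lambda W_l(\tfrac1\lambda e^{(1+a)/\lambda})=-e^{\pi\cot\varphi}$ on $\mathcal{L}_a^{\pm}$ is precisely the \emph{opposite} of the branch $W_k$ assigned to $K$ by \eqref{eq:W-branches}. Without this bookkeeping the negative-real case is not excluded.

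Your third paragraph correctly senses that this is where the difficulty lies, but the proposed topological repair does not close it. Connectedness of $\Omega_K$ plus $u:=1+a+K>1$ on $\Omega_K\cap\R$ only tells you that the image $u(\Omega_K)$ is a connected open set containing an interval of $(1,\infty)$; it does not prevent that image from also meeting $(-\infty,1]$, since the image can reach negative reals by wrapping through non-real values without ever containing $1$ and without $\lambda$ ever touching a branch point on $\mathcal{C}$. (A level-set argument along $\{\ImPart u=0\}$ would need you to prove that the component through a putative $\lambda_\ast$ connects back to $\R$ or to $\mathcal{C}$, which you have not done.) A secondary caveat of the same nature affects your first step: $\log 1=0$ only if the analytically continued branch of the logarithm at $u=1$ is still principal, which again presupposes control over the winding of $u$ around $0$ — i.e.\ the same branch-tracking you would need for the main case.
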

\begin{proof}
	First note that $K(a,\lambda)=-1-a$ requires $\lambda=0$ in \eqref{eq:K-FEQ} and contradicts $K(a,0)=0$. For positive values of $\lambda$, the monotonicity of the branch $W_0$ implies
	\begin{equation*}
		K(a,\lambda)
		=\lambda W_0\left( \frac{e^{1/\lambda}}{\lambda} e^{a/\lambda} \right)-1-a
		\geq
		\lambda W_0\left( \frac{e^{1/\lambda}}{\lambda} \right)-1-a
		= -a
	\end{equation*}
	with equality if and only if $a=0$. The same holds for $\lambda \in(-1,0)$ and we conclude
	\begin{equation}
		a+K(a,\lambda)>0
		\quad\text{for all real $\lambda>-1$ and any $a>0$}.
		\label{eq:a+K>0}%
	\end{equation}
Now observe that \eqref{eq:K-FEQ} admits a solution
\[
\lambda^\pm_a(\varphi)\; W_k\Big( \frac{1}{\lambda^\pm_a(\varphi)}
e^{\frac{1+a}{\lambda^\pm_a(\varphi)}}\Big)
=-e^{\pi \cot \varphi} \mp \iu \epsilon 
{}~~ \Leftrightarrow ~~
\lambda^\pm_a(\varphi) \equiv 
\frac{\sin \varphi}{\pi}(1{+}a{+}e^{\pi \cot \varphi}) 
e^{\pm \iu \varphi},
\]
for $0<\varphi<\pi$, where indeed 
$\log(\lambda W_k( \frac{1}{\lambda}e^{(1+a)/\lambda}))$ is not 
defined for \emph{some} branch $W_k$. 
According to Proposition \ref{prop:branches}, the critical curve 
$\mathcal{L}_a^+:=\{ \lambda^+_a(\varphi)\;:~0<\varphi\leq \pi\}$ 
crosses domains of the Lambert function at  
$(2k+1)\pi=\varphi_k +\pi \frac{1+a}{1+a+e^{\pi \cot \varphi_k}}$. This equation
admits only a single solution $\varphi_0\in (0,\pi)$.
Similarly, its mirror 
$\mathcal{L}_a^-:=\{ \lambda^-_a(\varphi)\;:~0<\varphi\leq
\pi\}$ crosses domains of the Lambert function at the same 
angle $\varphi_0\in (0,\pi)$. See also Figure~\ref{fig1}.
Then \eqref{eq:W-branches} implies 
\begin{itemize}
\item 
$K(a,\lambda):= \lambda W_0 (\frac{1}{\lambda}e^{(1+a)/\lambda})-1-a$ 
for $\lambda\equiv\lambda_a^\pm (\varphi)$ and $0<\varphi<\varphi_0$,
 and 
\item $K(a,\lambda) :=\lambda W_{\mp 1} 
(\frac{1}{\lambda}e^{(1+a)/\lambda})-1-a$ 
for $\lambda\equiv\lambda_a^\pm (\varphi)$ and $\varphi_0<\varphi<\pi$.
\end{itemize}
It turns out that the spurious solution is precisely the opposite branch assignment: 
\begin{itemize}
\item $\lambda W_{\pm 1} (\frac{1}{\lambda}e^{(1+a)/\lambda})=-e^{\pi \cot \varphi}$
for $\lambda\equiv\lambda_a^\pm (\varphi)$ and $0<\varphi<\varphi_0$, and 
\item $\lambda W_{0} 
(\frac{1}{\lambda}e^{(1+a)/\lambda})=-e^{\pi \cot \varphi}$ 
for $\lambda\equiv\lambda_a^\pm (\varphi)$ and $\varphi_0<\varphi<\pi$.
\end{itemize}
Indeed, this assignment can be established in the limits $\varphi\to 0$ and 
$\varphi\to \pi$. By continuity inside every open branch domain, 
this assignment can only switch at the domain 
crossing $\varphi=\varphi_0$. 

In conclusion, the branch assignment \eqref{eq:W-branches} implies
$\mathrm{arg}(K(a,\lambda)+1+a) \in (-\pi,\pi)$ for all $\lambda\in
\Omega_K$ (at $\lambda^\pm_a(\varphi_0)$ by the mean value property of
holomorphic functions). 
Therefore, $\lambda=-K/\log(1+a+K)$ is a well-defined, single-valued 
inverse function on the image of $K$. In particular, for any 
fixed $a> 0$, the map $\lambda \mapsto K(a,\lambda)$ is a 
biholomorphic bijection of $\Omega_K$ onto its image. Similarly, 
for fixed $0\neq\lambda\in \Omega_K$, the map $a\mapsto K(a,\lambda)$ 
is injective with inverse $a=e^{-K/\lambda}-1-K$.

For the same reasons, also $\mathrm{arg}(K(a,\lambda)+a)
\in (-\pi,\pi)$, and in consequence $L(a,\lambda)$ is well-defined for
all $a>0$ and $\lambda \in \Omega_K$. In the limit $a\rightarrow 0$,
we find $\frac{a+K(a,\lambda)}{a} \rightarrow \frac{1}{1+\lambda}$ and
thus $L(a,\lambda)\rightarrow - \log(1+\lambda)$.
\end{proof}

\subsection{Complexification of $a$ and $b$ for real $\lambda$}
We discuss the cases of positive and negative $\lambda$ separately.

\begin{Lemma}
\label{Lemma:K-pos}
Let $\lambda>0$ and 
$B_\lambda^\pm \defas \{\pm \lambda\pi
\iu-t\colon 1+\lambda-\lambda\log\lambda\leq t<\infty\}$.
Then
\begin{equation} 
K(z,\lambda)\defas \lambda \,W_k\Big(
\frac{1}{\lambda}
e^{\frac{1+z}{\lambda}}
	\Big)-1-z\quad
\text{if } (2k-1)\pi\lambda  <
\ImPart(z) \leq (2k+1)\pi\lambda
\label{eq:K-Lambert-z}%
\end{equation}
is holomorphic on 
$z \in \C\setminus (B^+_\lambda \cup B^-_\lambda)$.
\end{Lemma}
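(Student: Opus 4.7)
The plan is to verify holomorphicity inside each open strip $S_k \defas \{z\in\C : (2k-1)\pi\lambda < \ImPart(z) < (2k+1)\pi\lambda\}$ and then to examine the gluing across the common boundary lines $\ImPart(z) = (2k+1)\pi\lambda$, where the prescribed branch $W_k$ is replaced by $W_{k+1}$. Write $u(z) \defas e^{(1+z)/\lambda}/\lambda$ and $f_k(z) \defas \lambda W_k(u(z)) - 1 - z$.

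Holomorphicity of $f_k$ on $S_k$ is immediate: $\arg u(z) = \ImPart(z)/\lambda$ stays away from all odd integer multiples of $\pi$ inside $S_k$, so $u(z) \notin (-\infty,0]$, and $W_k$ is holomorphic on $\C\setminus(-\infty,0]$ for $k\neq 0$ (resp.\ $\C\setminus(-\infty,-1/e]$ for $k=0$).

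The heart of the argument is the boundary line $\ImPart(z) = (2k+1)\pi\lambda$. For $z = x + \iu(2k+1)\pi\lambda$ we have $u(z) = -r$ with $r \defas e^{(1+x)/\lambda}/\lambda > 0$, which is approached from $S_k$ as $-r+\iu 0^+$ and from $S_{k+1}$ as $-r-\iu 0^+$. By the edge-of-the-wedge theorem, $f_k$ and $f_{k+1}$ combine to a single holomorphic function across this point precisely when
\[
W_k(-r+\iu 0^+) \;=\; W_{k+1}(-r-\iu 0^+).
\]
On the Riemann surface of $W$, whose only finite branch point lies at $u=-1/e$ (where $W_0$ and $W_{-1}$ meet), this identity holds for every $r>0$ whenever $k\notin\{0,-1\}$, and only for $r>1/e$ when $k\in\{0,-1\}$. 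It is checked via the asymptotic $W_k(u)\sim \log u + 2\pi\iu k$, which gives $\log r + \iu(2k+1)\pi$ on both sides, combined with the uniqueness of analytic continuation through the cut.

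Translating $r>1/e$ via $r=e^{(1+x)/\lambda}/\lambda$ yields $x > \lambda\log\lambda - \lambda - 1 = -(1+\lambda-\lambda\log\lambda)$. Consequently, at the boundary $\ImPart(z)=\pi\lambda$ (case $k=0$) the gluing fails exactly on $B_\lambda^+$, and symmetrically for $k=-1$ on $B_\lambda^-$, while all other strip boundaries extend holomorphically along the full horizontal line. Assembling these continuations delivers the single-valued holomorphic function $K(z,\lambda)$ on $\C\setminus(B_\lambda^+\cup B_\lambda^-)$. The main technical obstacle is justifying the case split in the matching identity, i.e.\ explaining why only the two strip boundaries adjacent to the real axis host genuine branch cuts; an alternative, cleaner route uses the functional inverse $z=\phi(K)\defas e^{-K/\lambda}-K-1$, whose critical values $z_k^* = -(1+\lambda-\lambda\log\lambda) + \iu(2k+1)\pi\lambda$ are exactly the tips of the potential cuts, with the strip assignment $W_k$ arranging that $K(z,\lambda)$ passes regularly through $z_k^*$ for $k\notin\{0,-1\}$ and leaves only $z_0^*, z_{-1}^*$ as the endpoints of $B_\lambda^\pm$.
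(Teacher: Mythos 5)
Your argument is correct and follows essentially the same route as the paper: holomorphicity of each branch $W_k$ on the open strip, matching of boundary values $W_k(-r+\iu 0)=W_{k+1}(-r-\iu 0)$ across the strip borders via the branch conventions of Corless et al., the edge-of-the-wedge theorem to glue, and the observation that the matching fails only for $k\in\{0,-1\}$ when $r\leq 1/e$, which translates exactly into $B_\lambda^{\pm}$. Your closing remark on the inverse $z=e^{-K/\lambda}-K-1$ and its critical values mirrors the paper's own ``Holomorphicity domain'' remark for the $\lambda$-continuation, and is a clean cross-check (only note that $u=0$ is also a logarithmic singularity of the Riemann surface of $W$, though it is never attained by $e^{(1+z)/\lambda}/\lambda$ and so does not affect the argument).
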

\begin{proof}
The logarithm in \eqref{eq:K-FEQ} is well-defined if $
W(\frac{1}{\lambda}e^{\frac{1+z}{\lambda}}) \notin (-\infty,0]$.
On a domain of $z$ satisfying this condition we thus have 
$\ImPart(\frac{1+z}{\lambda}- W(
\frac{1}{\lambda}e^{\frac{1+z}{\lambda}})) \in (-\pi,\pi)$. The 
branch assignment \eqref{eq:K-Lambert-z} is then a consequence 
of the branch conventions \cite{Knuth} of the Lambert
function. These branches are holomorphic in every open 
strip $(2k-1)\pi\lambda  < \ImPart(z) < (2k+1)\pi\lambda$.

Only $W_{-1}(w)$ and $W_0(w)$ reach negative reals \cite{Knuth},
for $w \in [-\frac{1}{e},0]$. Together with the branch assignment
\eqref{eq:K-Lambert-z} it follows that
$
W_0(\frac{1}{\lambda}e^{\frac{1+z}{\lambda}}) \in [-1,0]$
iff $z\in B_\lambda^+$ and
$
W_{-1}(\frac{1}{\lambda}e^{\frac{1+z}{\lambda}}) \in (-\infty,-1]$
iff $z\in B_\lambda^-$. In the same way as in 
Proposition~\ref{prop:branches} one proves that all other domain crossings 
from $W_k$ to $W_{k+1}$ at $\ImPart(z) = (2k+1)\pi\lambda$
are continuous, including the border between $W_0$ and
$W_{\pm 1}$ at $(\R\pm\lambda\pi \iu)\setminus B^\pm_ \lambda$. By the
edge-of-the-wedge theorem, $\C\setminus (B^+_\lambda
\cup B^-_\lambda) \ni z \mapsto K(z,\lambda)$ is holomorphic for any 
given $\lambda>0$.
\end{proof}

\begin{Lemma}
\label{Lemma:K-neg}
Let $-1<\lambda<0$ and set $B^0_{\lambda} \defas (-\infty,-1+|\lambda|-|\lambda|\log|\lambda|)$. Then 
\begin{equation} 
K(z,\lambda) \defas \lambda W_{-k}\Big(
\frac{1}{\lambda}
 e^{\frac{1+z}{\lambda}}\Big)-1-z\quad
{
	\text{if } 0\leq (2k-2)\pi\abs{\lambda} \leq \ImPart(z) < 2k\pi\abs{\lambda}
	\atop
	\text{or } 2k\pi\abs{\lambda}  \leq \ImPart(z) 
< (2k+2)\pi\abs{\lambda}\leq 0
}
\label{eq:K-Lambert-zneg}%
 \end{equation}
for $k\in \Z\setminus\{0\}$ is holomorphic on $z \in \C\setminus B_{\lambda}^0$.
\end{Lemma}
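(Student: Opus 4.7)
The plan is to follow the same strategy as Proposition~\ref{prop:branches} and Lemma~\ref{Lemma:K-pos}: first locate the preimages in the $z$-plane of the Lambert branch cut $(-\infty,0]$ under $z\mapsto\frac{1}{\lambda}e^{(1+z)/\lambda}$, second check holomorphicity of the branch assignment on each resulting open strip, and third glue the pieces along horizontal boundaries via the edge-of-the-wedge theorem. Since $\lambda<0$ one has $\ImPart((1{+}z)/\lambda) = -\ImPart(z)/|\lambda|$, so $e^{(1+z)/\lambda}$ is positive (resp.\ negative) real precisely when $\ImPart(z)\in 2\pi|\lambda|\Z$ (resp.\ $(2\Z{+}1)\pi|\lambda|$); multiplying by $1/\lambda<0$ flips the sign, hence $\frac{1}{\lambda}e^{(1+z)/\lambda}\in(-\infty,0)$ exactly when $\ImPart(z)\in 2\pi|\lambda|\Z$. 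These are the horizontal lines separating the strips in \eqref{eq:K-Lambert-zneg}, and on each open strip $z\mapsto\frac{1}{\lambda}e^{(1+z)/\lambda}$ is a biholomorphism onto $\C\setminus(-\infty,0]$, where the branch $W_{-k}$ is holomorphic by \cite{Knuth}; so $K(z,\lambda)$ is holomorphic in the interior of every strip.

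Next I would match boundaries. For $k\ge 1$, parametrising $\ImPart(z)=2k\pi|\lambda|\pm\epsilon$ shows that the argument approaches $-t-\iu\delta$ from the lower strip (handled by $W_{-k}$) and $-t+\iu\delta$ from the upper strip (handled by $W_{-k-1}$) as $\delta\to 0^+$. The Knuth Riemann-surface gluing $\lim_{\delta\to 0^+}W_j(-t+\iu\delta)=\lim_{\delta\to 0^+}W_{j+1}(-t-\iu\delta)$, combined with the conjugation symmetry $W_{-j}(\bar w)=\overline{W_j(w)}$, yields $\lim_{\delta\to 0^+}W_{-k}(-t-\iu\delta)=\lim_{\delta\to 0^+}W_{-k-1}(-t+\iu\delta)$ for every $t>0$. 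The edge-of-the-wedge theorem then promotes continuity to holomorphicity across the line. The mirror case $k\leq -1$ is symmetric.

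The delicate boundary is the real axis $\ImPart(z)=0$, across which the assignment jumps from $W_1$ (below) to $W_{-1}$ (above). For real $z$, $u\defas\frac{1}{\lambda}e^{(1+z)/\lambda}$ is a monotone bijection $\R\to(-\infty,0)$ attaining $-\tfrac{1}{e}$ exactly at $z=-1+|\lambda|-|\lambda|\log|\lambda|$. The conjugation identity gives $\lim_{\delta\to 0^+}W_1(u-\iu\delta)=\overline{\lim_{\delta\to 0^+}W_{-1}(u+\iu\delta)}$. When $u\in(-\tfrac{1}{e},0)$ the right-hand limit is the real value $W_{-1}(u)$ (since $W_{-1}$ is real on $[-\tfrac{1}{e},0)$), so the two limits coincide and $K$ extends holomorphically across the real axis. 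When $u\in(-\infty,-\tfrac{1}{e})$, i.e.\ $z\in B^0_\lambda$, the two limits are nontrivial complex conjugates, producing the residual branch cut.

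The main obstacle I anticipate is the bookkeeping required to confirm that Knuth's conventions together with the conjugation symmetry produce \emph{exactly} the assignment \eqref{eq:K-Lambert-zneg}---with the correct pairing of branches at every horizontal crossing and at the real axis---and that no spurious cuts arise in the interior of any strip. Once this is in place, the holomorphic extension to $\C\setminus B^0_\lambda$ follows from a routine edge-of-the-wedge argument as in Proposition~\ref{prop:branches}.
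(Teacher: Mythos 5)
Your proposal is correct and follows essentially the same route as the paper's proof: decompose $\C$ into horizontal strips of width $2\pi\abs{\lambda}$ where each Lambert branch is holomorphic, verify the continuous gluing at the strip boundaries from Knuth's counterclockwise-continuity conventions (exactly as in Proposition~\ref{prop:branches}), and isolate the only discontinuity at the real-axis crossing where the argument falls in $(-\infty,-\tfrac{1}{e}]$, which is precisely $z\in B^0_\lambda$ since $-\tfrac{1}{\abs{\lambda}}e^{-(1+z)/\abs{\lambda}}=-\tfrac{1}{e}$ at $z=-1+\abs{\lambda}-\abs{\lambda}\log\abs{\lambda}$. Your explicit use of the conjugation symmetry $W_{-k}(\bar w)=\overline{W_k(w)}$ to reduce the real-axis matching to the reality of $W_{-1}$ on $(-\tfrac{1}{e},0)$ is a slightly more detailed rendering of the paper's remark that $W_{-1}$ and $W_1$ are only ``separated by $W_0$'' for arguments $\leq -\tfrac{1}{e}$.
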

\begin{proof}
We know from sec.~\ref{sec:hol-lambda} that $K$ selects 
for $z=a\in \R_+$ the branch $W_{-1}$. Equations 
\eqref{eq:K-FEQ} and \eqref{eq:K-Lambert} combine 
to $\ImPart(\frac{1+z}{|\lambda|}+ W(
-\frac{1}{|\lambda|}e^{-(1+z)/|\lambda|})) \in (-\pi,\pi)$. The 
branch assignment \eqref{eq:K-Lambert-zneg} is then a consequence 
of the branch conventions \cite{Knuth} of the Lambert
function. These branches are holomorphic in every open 
strip $2k\pi|\lambda|  <
\ImPart(z) < (2k+2)\pi|\lambda|$ and continuously glue to each other
except when $W_{-1}(w)$ and $W_{1}(w)$ are separated by $W_0$.
These are the points $w=-\frac{1}{|\lambda|}e^{-(1+z)/|\lambda|}
\leq -\frac{1}{e}$ which correspond to $z \in B_{\lambda}^0$. 
\end{proof}

\subsection*{Summary}
The function $I_{\lambda}(a)=K(a,\lambda)-\lambda L(a,\lambda)$ can be
extended precisely to $\lambda \in \Omega_K$ (the region right 
of $\mathcal{C}$), if it is to be defined for all $a\geq 0$. This 
domain contains the real interval $(-1,\infty)$.
For $\lambda\geq 0$ the branch $W_0$ is selected at
$a\geq 0$; it extends to any $z\in \C\setminus (B^+_\lambda\cup
B^-_\lambda)$ according to Lemma \ref{Lemma:K-pos}. 
For $-1<\lambda< 0$ the branch $W_{-1}$ is selected at
$a\geq 0$; it extends to any $z\in \C\setminus B^0$ according to 
Lemma \ref{Lemma:K-neg}.

\begin{Remark}[Strong coupling]
 	\label{rem:large-lambda}%
 	When $\lambda\rightarrow \infty$, eventually $z=\frac{1}{\lambda} e^{(1+a)/\lambda}$ fulfils $\abs{z}<1/e$ and thus the series $W_0(z)=\sum_{n=1}^{\infty} \frac{(-n)^{n-1}}{n!} z^n$ yields a convergent strong-coupling expansion of $I_{\lambda}(a)$. The condition $\abs{z}<1/e$ is then equivalent to
 	\begin{equation*}
 		\lambda > \frac{1+a}{W_0\left( \frac{1+a}{e} \right)}
 		.
 	\end{equation*}
\end{Remark}

\subsection{Holomorphic extension of $N_\lambda(a,b)$}

The domain of holomorphicity in $\lambda$ common to all $a \geq 0$ is 
according to \eqref{eq:Nlambda} the region to the right of the 
envelope $\mathcal{E}$ of complex rays 
\begin{align*}
\{P(t)+a\cdot m(t)\colon t\in \R\}_{a\in \R_+}\;,\quad
P(t) \defas \frac{\frac{1}{2}+\iu t}{\log(\frac{1}{2}-\iu t)} \;,~
m(t) \defas \frac{1}{\log(\frac{1}{2}-\iu t)} \;.
\end{align*}
One finds the following parametrisation for the envelope:
\begin{equation*}
\mathcal{E}(t)= \begin{cases}
	P(t) 
& \text{for } 0\leq |t|\leq t_{\mathcal{E}} \;, \\[0.8ex]
	P(t)+ m(t) \frac{\overline{m(t)} P'(t)-m(t) \overline{P'(t)}}{
	m(t) \overline{m'(t)} -\overline{m(t)} m'(t)}
& \text{for } |t|\geq t_{\mathcal{E}} \;,
\end{cases}
\end{equation*}
% Das ergibt (nach t->t/2)
% ((t^2+1)*arctan(t)-t)*(t+I)/(ln(4/(t^2+1))+2*t*arctan(t))-(1+t^2)/2
where $t_{\mathcal{E}}$ is defined by  
$\ImPart\big(\overline{m(t_{\mathcal{E}})} P'(t_{\mathcal{E}})\big)=0$. 
Setting $t_{\mathcal{E}}= \frac{1}{2} \tan \psi$, this condition
amounts to $\psi^2 + (\log (2\cos \psi))^2 - \psi \sin(2\psi) - 
 \cos(2\psi) \log(2\cos \psi)=0$, evaluated to $\psi\approx 0.861$ and 
$t_{\mathcal{E}} \approx 0.582$. See figure~\ref{fig2}.
\begin{figure}[h]
\begin{picture}(60,52)
\put(-10,-45){\includegraphics[width=8cm]{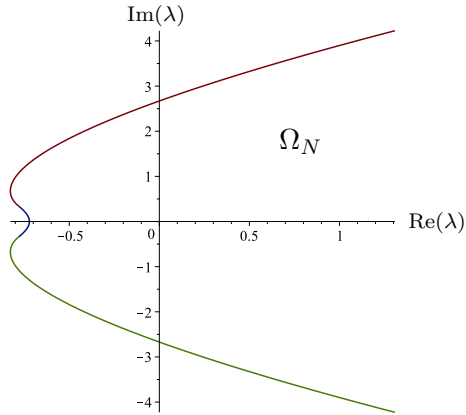}}
\put(51,24.5){\mbox{\tiny$\RePart(\lambda)$}}
\put(14,52){\mbox{\tiny$\ImPart(\lambda)$}}
\put(34,35){\mbox{\small$\Omega_N$}}
\end{picture}
\caption{\label{fig2}The envelope $\mathcal{E}$ of rays parametrised by $a\in \mathbb{R}_+$. The blue part is traced by $a=0$, red and green parts 
by $a>0$. The domain $\Omega_N$ to the right of $\mathcal{E}$ is 
the joint holomorphicity domain of
$N_\lambda(a,b)$ and $G_\lambda(a,b)$ for all $a,b\in \R_+$.}
\end{figure}
The joint domain $\Omega_N$ of holomorphicity
of $N_\lambda(a,b)$ for all $a,b\in \R_+$ is the region to the right
of $\mathcal{E}$. One finds $\Omega_N \subset \Omega_K$ so that 
$\Omega_N$ is also the joint holomorphicity domain of
$\lambda \mapsto G_\lambda(a,b)$ for all $a,b\in \R_+$. This domain
contains the real interval $(-\frac{1}{\log 4},\infty)$. In
particular, the perturbation series of $G_\lambda(a,b)$ has radius of
convergence $\frac{1}{\log 4}$.

The complexification of $N_\lambda(a,b)$ at fixed real
$\lambda >-1/\log 4$ to $a\mapsto z\in \C$ and $b\mapsto w\in \C$ is
according to \eqref{eq:Nlambda} holomorphic on a joint domain
$\Omega_\lambda$ which is the intersection of the half space
$\RePart(z)>-\frac{1}{2}$ with the domain to the right of the critical
curve $\mathcal{N}_\lambda \defas \{-\frac{1}{2}+\iu t+\lambda
\log(\frac{1}{2}+\iu t)\colon t\in \R\}$.

\subsection{A warning}
\label{sec:warning}

Holomorphicity of the final solution $G_\lambda(a,b)$ in
$(-\frac{1}{\log 4},\infty)$ does \emph{not} mean that intermediate
purely real formulae are analytic near $\lambda=0$. First, positivity
of \eqref{Gab-real} requires to take for $\lambda<0$ the
$\arctan$-branch in $[-\pi,0]$.  However, \eqref{eq:J1} and
\eqref{eq:J2} \emph{do not} extend to $-1<\lambda<0$ by merely
replacing $W_0$ by $W_{-1}$ and
$\arctan_{[0,\pi]}$ with $\arctan_{[-\pi,0]}$.  Instead, we get
\begin{Lemma}
\label{Lemma:Jneg}
For any $a\geq 0$ and $-1<\lambda < 0$ and $z \in \C\setminus
{(-\infty,-1]}$ one has
\begin{gather*}
\int_0^\infty \! \frac{\td u}{\pi}\; 
\frac{\mbox{\small{$\displaystyle 
\arctan\displaylimits_{[-\pi,0]} 
\Big(\frac{\lambda\pi}{1+a+u -\lambda \log u}\Big)$}}}{
1+u+z}
= 
\log \left( \frac{(z+\lambda \log(1{+}z)-a)(1+z)}{
\prod_{k=-1}^0 (1{+}z{-}\lambda
  W_{k}(\frac{1}{\lambda} e^{(1+a)/\lambda}))}\right),
\\
\int_0^\infty \! \frac{\td u}{\pi}\; 
\Big(\!
\arctan\displaylimits_{[-\pi,0]}  
\Big(\frac{\lambda\pi}{1{+}a{+}u {-}\lambda \log u}\Big)
-\frac{\lambda\pi}{1{+}u}\Big)
= -1-a+\sum_{k=-1}^0 
\lambda W_k\Big(\frac{1}{\lambda} e^{(1+a)/\lambda}\Big) .
\end{gather*}
\end{Lemma}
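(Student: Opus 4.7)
My plan is to follow the strategy of Lemma~\ref{Lemma:J}, adapted to the richer analytic structure of the integrand for $\lambda<0$. Since $\lambda\pi<0$, the principal branch of $\log(1-\lambda\log(-w)/(1+a+w))$ has imaginary part in $(-\pi,0)$ as $w$ approaches $\R_+$ from above, reproducing exactly $\arctan_{[-\pi,0]}(\lambda\pi/(1+a+u-\lambda\log u))$. This yields the same contour representation as in Lemma~\ref{Lemma:J},
\[
  \text{LHS of the first identity} = \frac{1}{2\pi\iu}\int_{\gamma_\epsilon^+} \frac{\log(1-\lambda\log(-w)/(1+a+w))}{w+1+z}\,\td w + o(\epsilon),
\]
together with an analogous representation for the second identity with $\lambda\log(-w)/(1+w)$ added inside the bracket, exactly as in the paper's derivation of \eqref{eq:J2}.

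The crucial new feature for $-1<\lambda<0$ is that the Lambert argument $x=\frac{1}{\lambda}e^{(1+a)/\lambda}$ lies in $(-1/e,0)$, where both real branches $W_0$ and $W_{-1}$ are available. Solving $1+a+w=\lambda\log(-w)$ via the substitution $-w=\lambda W(x)$ therefore yields \emph{two} real zeros of the inner expression on the negative axis, $w_{(0)}=-\lambda W_0(x)\in(-|\lambda|,0)$ and $w_{(-1)}=-\lambda W_{-1}(x)\in(-\infty,-1-a)$. A direct analysis of the real inequality $1-\lambda\log(-w)/(1+a+w)\leq 0$ via the monotonicity of $h(r)=r-|\lambda|\log r$ (using $|\lambda|<1\leq 1+a$) shows that the branch cuts of the outer logarithm are the two \emph{finite} intervals $[w_{(0)},0)$ near the origin and $(w_{(-1)},-1-a)$ further out, separated by the integrable logarithmic singularity at $w=-1-a$. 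For $\lambda>0$ only a single finite branch cut arose; the extra short cut near the origin is precisely what produces the additional factor $(1+z)/(1+z-\lambda W_0(x))$ in the stated formula.

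I then close $\gamma_\epsilon^+$ by a large circle $\Gamma_R$ and apply Cauchy's theorem. For the first identity the integrand decays like $\log|w|/|w|^2$ at infinity, so $\Gamma_R$ contributes $O(\log R/R)\to 0$. For the second identity the leading $\log|w|/|w|$ behaviours of the two bracketed terms cancel exactly, which is precisely why the $\lambda\log(-w)/(1+w)$ correction is natural, and again $\Gamma_R$ vanishes. The simple pole at $w=-1-z$ contributes $\log(1-\lambda\log(1+z)/(a-z))$ to the first identity; there is no such pole for the second. Each branch cut contributes, via the $\pm 2\pi\iu$ jump of the principal logarithm, an integral $\int \td w/(w+1+z)$ (respectively $\int \td w$) across the cut. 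Using $w_{(k)}+1+z=1+z-\lambda W_k(x)$ and the endpoint relation $-1-a+1+z=z-a$, the short cut contributes $\log((1+z)/(1+z-\lambda W_0(x)))$ and the far cut $\log((z-a)/(1+z-\lambda W_{-1}(x)))$ for the first identity, which combine with the pole residue into exactly the stated product formula. For the second identity the cuts contribute $\lambda W_0(x)$ and $\lambda W_{-1}(x)-1-a$, giving the stated sum. The log singularity at $w=-1-a$ is integrable and contributes nothing in the limit.

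The main obstacle is the careful identification of the two finite cuts and the correct signs of the $\pm 2\pi\iu$ jumps on each, together with the orientation of the deformation of $\gamma_\epsilon^+$ around them. A related subtle point is that the short cut terminates at $w=0$, which is simultaneously the branch point of $\log(-w)$ appearing inside; one must verify that the small arc of $\gamma_\epsilon^+$ near the origin contributes only $o(1)$ as $\epsilon\to 0$, so that the short-cut integral indeed evaluates with $\log(1+z)$ as the upper-endpoint value. Once these local points are settled, the residue and cut computations are parallel to Lemma~\ref{Lemma:J}.
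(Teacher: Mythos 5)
Your proposal is correct and follows the same basic route as the paper's proof: rewrite the integrals as contour integrals of $\log\big(1-\frac{\lambda\log(-w)}{1+a+w}\big)$ over $\gamma_\epsilon^+$ and account for the additional branch cuts that appear on the negative real $w$-axis when $-1<\lambda<0$. The difference is in execution, and your version is the more explicit (and in one respect more accurate) one. The paper only computes the jump across the near cut $[-\lambda W_0(\tfrac1\lambda e^{(1+a)/\lambda}),0]$ explicitly and attributes the remainder to the ``na\"{\i}ve analytic continuation'' of Lemma~\ref{Lemma:J}, whereas you deform onto all singularities directly -- the simple pole at $w=-1-z$ plus both cuts -- and verify that the three logarithms recombine into the stated product; likewise for the second identity the two cut lengths give $\lambda W_0+\lambda W_{-1}-1-a$ without invoking the series representation of $K(a,\lambda)$. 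Moreover, your sign analysis of $1-\frac{\lambda\log(-w)}{1+a+w}$ on $w=-r$ correctly locates the second cut as the \emph{finite} interval between $-\lambda W_{-1}(\tfrac1\lambda e^{(1+a)/\lambda})$ and $-1-a$ (the argument of the outer logarithm is positive, close to $1$, for $w\to-\infty$), while the paper describes it as the semi-infinite ray $(-\infty,-\lambda W_{-1}]$; your finite cut is what makes the deformation argument converge without any cancellation against the large circle, and it reproduces exactly the factor $\frac{z-a}{1+z-\lambda W_{-1}}$ needed in the result. The points you flag as requiring care -- orientation and signs of the $\pm2\pi\iu$ jumps, and the $O(\epsilon\log\log(1/\epsilon))$ contribution of the small arc where the near cut meets the branch point of $\log(-w)$ at $w=0$ -- are indeed the only delicate steps, and they go through as you describe.
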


\begin{proof} As in the proof of Lemma~\ref{Lemma:J} the integrals are
  rewritten as contour integral of the complex logarithm
$\log (1-\frac{\lambda \log (-w)}{1+a+w})$. However, the branch cut is
not only $\R_+$ via $\log(-w)$ but extends to the intervals 
$[-\lambda W_0(\frac{1}{\lambda}e^{(1+a)/\lambda}),0]$ and 
$(-\infty, -\lambda W_{-1}(\frac{1}{\lambda}e^{(1+a)/\lambda})]$ 
on the negative real axis. Therefore, the contour must encircle 
the extended interval 
$[-\lambda W_0(\frac{1}{\lambda}e^{(1+a)/\lambda}),\infty)$. The jump
$2\pi \iu$ of $\log (1-\frac{\lambda \log (-w)}{1+a+w})$ along
$[-\lambda W_0(\frac{1}{\lambda}e^{(1+a)/\lambda}),0]$ contributes the
additional terms compared with the na\"{\i}ve analytic continuation.
\end{proof}
The analogue of Proposition~\ref{prop:Ilambda} is
\begin{Corollary}
	For any $a>0$ and $-1<\lambda<0$ one has 
\begin{align}
&\int_0^\infty  \frac{\td p}{\pi} \bigg( \!
\arctan\displaylimits_{[-\pi,0]} \!\bigg(\frac{\lambda\pi}{a+ 
	\lambda \,W_{-1}\big(\frac{1}{\lambda} e^{\frac{1+p}{\lambda}}\big)
	-\lambda \log \big( \lambda W_{-1}\big( \frac{1}{\lambda} 
        e^{\frac{1+p}{\lambda}} \big)-1\big)}\bigg)
-\frac{\lambda\pi}{1{+}p}\bigg) 
\nonumber
\\
&=  -1-a + \lambda \log a + 
\lambda \,W_{-1}\Big(\frac{1}{\lambda}e^{\frac{1+a}{\lambda}}\Big)
-\lambda \log \Big(\lambda W_{-1}\Big( \frac{1}{\lambda}
        e^{\frac{1+a}{\lambda}}\Big){-}1\Big)
\\
& + \lambda \,W_0\Big(\frac{1}{\lambda}e^{\frac{1+a}{\lambda}}\Big)
-\lambda \log \Big(1-\lambda W_0\Big( \frac{1}{\lambda}
        e^{\frac{1+a}{\lambda}}\Big)\Big).
\nonumber
\end{align}
\end{Corollary}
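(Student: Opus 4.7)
The plan is to adapt the proof of Proposition~\ref{prop:Ilambda} essentially verbatim, substituting Lemma~\ref{Lemma:Jneg} for Lemma~\ref{Lemma:J} and $W_{-1}$ for $W_0$. For $-1<\lambda<0$ the map $u\mapsto p=u+\lambda\log(1+u)$ remains a monotone bijection of $[0,\infty)$ onto itself because $1+\lambda/(1+u)$ is positive exactly when $\lambda>-1$; its inverse $u=\lambda W_{-1}(\tfrac{1}{\lambda}e^{(1+p)/\lambda})-1$ is consistent with the branch selection recorded in Section~\ref{sec:hol-lambda}. Imposing the same cutoff $p\leq \Lambda+\lambda\log(1+\Lambda)$ (equivalently $u\leq\Lambda$) and changing variables turns the left-hand side of the Corollary into
\begin{equation*}
	\int_0^{\Lambda}\!\td u\,\bigg[\frac{\arctanD_{[-\pi,0]}\!\big(\tfrac{\lambda\pi}{1+a+u-\lambda\log u}\big)}{\pi}-\frac{\lambda}{1+u}\bigg] + \frac{\lambda}{\pi}\!\int_0^{\Lambda}\!\frac{\td u}{1+u}\arctanD_{[-\pi,0]}\!\Big(\tfrac{\lambda\pi}{1+a+u-\lambda\log u}\Big),
\end{equation*}
plus a boundary remainder $-\lambda\int_\Lambda^{\Lambda+\lambda\log(1+\Lambda)}\td u/(1+u)=O(\log\Lambda/\Lambda)$ which vanishes as $\Lambda\to\infty$.

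In the limit, the first integral evaluates via the second identity of Lemma~\ref{Lemma:Jneg} to $-1-a+\lambda W_{-1}(\tfrac{1}{\lambda}e^{(1+a)/\lambda})+\lambda W_0(\tfrac{1}{\lambda}e^{(1+a)/\lambda})$, while the second is $\lambda$ times the first identity of Lemma~\ref{Lemma:Jneg} at $z=0$, producing $\lambda\log\bigl(-a/[(1-\lambda W_{-1})(1-\lambda W_0)]\bigr)$. Because $\lambda<0$, $W_{-1}<-1$ and $W_0\in[-1,0]$ on the relevant range, $1-\lambda W_{-1}$ is negative and $1-\lambda W_0$ is positive; the $\iu\pi$ contributions from $\log(-a)=\log a+\iu\pi$ and $\log(1-\lambda W_{-1})=\log(\lambda W_{-1}-1)+\iu\pi$ cancel, leaving the real value $\lambda\log a - \lambda\log(\lambda W_{-1}-1) - \lambda\log(1-\lambda W_0)$. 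Summing the two contributions reproduces exactly the stated right-hand side.

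The main point that needs attention is matching the $\arctan$ branch $[-\pi,0]$ to the contour-integral derivation of Lemma~\ref{Lemma:Jneg}: one should check that $\operatorname{Im}\log\bigl(1-\lambda\log(-w)/(1+a+w)\bigr)$, as $w$ approaches $\R_+$ from the contour side dictated by $\gamma^+_\epsilon$, really equals $\arctan_{[-\pi,0]}(\lambda\pi/(1+a+u-\lambda\log u))$ when $\lambda<0$. This parallels the warning in Section~\ref{sec:warning} that the $\lambda>0$ formulae do not continue naively, and it follows by tracking the sign of $\lambda$ in the argument of the complex logarithm throughout the proof of Lemma~\ref{Lemma:Jneg}. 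Beyond this branch bookkeeping, no genuinely new analytic input is required.
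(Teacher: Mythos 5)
Your proof is correct and follows exactly the route the paper intends: the Corollary is stated as ``the analogue of Proposition~\ref{prop:Ilambda}'', and you reproduce that proof with the substitution $p=u+\lambda\log(1+u)$, $u=\lambda W_{-1}(\tfrac{1}{\lambda}e^{(1+p)/\lambda})-1$, invoking Lemma~\ref{Lemma:Jneg} in place of Lemma~\ref{Lemma:J}. Your sign bookkeeping ($1-\lambda W_{-1}<0$, $1-\lambda W_0>0$, so $-a/(1-\lambda W_{-1})=a/(\lambda W_{-1}-1)>0$) is the only point requiring care beyond the $\lambda>0$ case, and you handle it correctly.
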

This means that the real-analytic continuation of $\tau_a(p)$ does \emph{not} 
solve \eqref{cottauba} for $\lambda<0$. This equation can be rescued
when including a term $\frac{\CO}{\lambda \pi(\CO-b)}h_\lambda(a)$
with 
\[
h_\lambda(a)=\left\{ \begin{array}{cl} 0 &\text{ for }\lambda\geq 0 \\
-\lambda \,W_0\Big(\frac{1}{\lambda}e^{\frac{1+a}{\lambda}}\Big)
+\lambda \log \Big(\lambda W_0\Big( \frac{1}{\lambda}
         e^{\frac{1+a}{\lambda}}\Big){-}1\Big) &
\text{ for }\lambda<0 
\end{array}\right.
\]
on the rhs of \eqref{cottauba-co}. It corresponds to a 
non-trivial solution of the homogeneous Carleman
equation as discussed after \eqref{eq:Carl-hom}. Note that 
$h_\lambda(a)$ is smooth but not analytic in $\lambda=0$. Its Taylor series in
$\lambda=0$ vanishes identically so that it does not interfere with the
perturbative solution of sec.\ \ref{sec:perturbative}.

For the same reasons, the purely real formula
\eqref{Gab-alternative} does not generalise to $\lambda<0$.

\section{Perturbative expansion of $N_\lambda(a,b)$}
\label{sec:finalintegral}

The $\lambda$-expansion of $N_{\lambda}(a,b)$ from \eqref{eq:Nlambda-1} is computable symbolically with {\HyperInt}:
\begin{MapleInput}
ln(1-lambda*ln(-z)/(1+a+z))*ln(1-lambda*ln(1+z+w)/(b-z-w));
eval(diff(%,w),w=0);
series(%,lambda):
coeff(%,lambda,2);
hyperInt(%/(2*I*Pi),z);
X := fibrationBasis(%,[a,b]);
\end{MapleInput}
The integral $X$ over $z$ from $0$ to $\infty$ depends on the signs
$\delta_z$ and $\delta_b$ of the imaginary parts of $z$ and $b$. To
get the half of $\gamma_{\epsilon}^+$ above $\R_+$, we have to set
$\delta_z=1$ and $\delta_b=-1$ (since $b$ lies below $z$). The first
half $\int_{\infty}^0 = - \int_0^{\infty}$ of $\gamma_{\epsilon}^+$
subtracts the conjugate, so
\begin{MapleInput}
Above := eval(X,[delta[z]=1,delta[b]=-1]);
Below := eval(X,[delta[z]=-1,delta[b]=1]);
N[2] := collect(Above-Below,Hlog,factor);
\end{MapleInput}
\begin{MapleMath}
	-\frac{
		\Hlog{a}{-1}
	}{a(1+a+b)}
	-\frac{
		\Hlog{b}{-1}
	}{b(1+a+b)}
	+\frac{
		\Hlog{a}{0, -1}
		+\Hlog{b}{0, -1}
		+\zeta(2)
	}{
		(1+a+b)^2
	} 
\end{MapleMath}
gives $[\lambda^2] N_{\lambda}(a,b) = \frac{\zeta(2)-\Li_2(-a)-\Li_2(-b)}{(1+a+b)^2}-\frac{\log(1+a)}{a(1+a+b)}-\frac{\log(1+b)}{b(1+a+b)}$. These terms 
show up in \eqref{Gab-pert}. 
In fact, we can characterize the emerging polylogarithms very precisely: They belong to the family of multiple polylogarithms studied by Nielsen, \cite{Nielsen:DilogarithmusVerallgemeinerungen,Koelbig:Nielsen},
\begin{equation}
	S_{n,p}(z) = \frac{(-1)^{n+p-1}}{(n-1)! p!}
	\int_0^1 \frac{\log^{n-1}(t) \log^p(1-zt)}{t}\ \td t
	.
	\label{eq:Nielsen}%
\end{equation}
To make this clear, we expand \eqref{eq:Nlambda-1} in the form
\begin{align}
N_\lambda(a,b)=
\frac{\partial}{\partial w}
\sum_{m,n=1}^\infty \! \frac{(-\lambda)^{m+n}}{m!n!}
\frac{\partial^{m-1}}{\partial a^{m-1}}
\frac{\partial^{n-1}}{\partial b^{n-1}}
\int_{\gamma_\epsilon^+} \!
\frac{\td z}{2\pi \iu} 
\frac{(\log(-z))^m(\log(1{+}z{+}w))^n}{(1+a+z)(b-z-w)},
\label{eq:Nloglog}
\end{align}
taken at $w=0$. Pulling out the prefactor $(1+a+b-w)^{-1}$, the decomposition
\begin{equation*}
	\frac{1+a+b-w}{(1+a+z)(b-z-w)}
	= 
		\frac{1+b}{(b-z-w)(1+w+z)}
		-\frac{a-w}{(1+a+z)(1+w+z)} 
\end{equation*}
of the integration kernel completely separates the $a$- and $b$ dependence of the integral (up to the prefactor). The remaining integrals can be transformed into the form \eqref{eq:Nielsen}. In fact, we can compute the generating function
\begin{align}
R_{\alpha,\beta}(a,b;w)
&\defas \frac{1}{\pi} \int_0^\infty 
\td z \; \ImPart \left( 
\frac{(-z-\iu \epsilon)^\alpha (1+z+w+\iu \epsilon)^\beta}{
(1+a+z+\iu \epsilon)(b-z-w-\iu \epsilon)}
\right)
\label{eq:R-diff}
\end{align}
defined so that the coefficient of $\alpha^m \beta^n$ is $m! n!$ 
times the contour integral in \eqref{eq:Nloglog}:
\begin{equation}
	N_{\lambda}(a,b)
	= \sum_{m,n=1}^{\infty} \partial_a^{m-1} \partial_b^{n-1}
	[ \alpha^m \beta^n] \left. \frac{\partial}{\partial w}\right|_{w=0}
	R_{-\lambda \alpha,-\lambda\beta}(a,b;w)
	.
	\label{eq:N-from-R}%
\end{equation}
Resolving the $\iu \epsilon$-descriptions in \eqref{eq:R-diff} 
gives after substitution $z=(1+w)p$
\begin{align}
&R_{\alpha,\beta}(a,b;w)
\nonumber
\\
&=\cos(\alpha \pi)
\frac{(b-w)^\alpha (1+b)^\beta}{(1+a+b-w)}
+\frac{(1+w)^{\alpha+\beta-1}\sin (\pi \alpha)}{\pi(1+a+b-w)} 
(a-w) \int_0^\infty \td p \; 
\frac{p^\alpha (1+p)^{\beta-1}}{(p+\frac{1+a}{1+w})}
\nonumber
\\
&+\frac{(1+w)^{\alpha+\beta-1}\sin (\pi \alpha)}{\pi(1+a+b-w)} 
(1+b) \intbar_0^\infty 
\td p \; 
\frac{p^\alpha (1+p)^{\beta-1}}{(p-\frac{(b-w)}{1+w})}\;.
\label{eq:Rabw}
\end{align}
The integral in the 2nd line is a standard hypergeometric integral. 
Writing $p^\alpha (1+p)^{\beta-1}=\frac{1}{\Gamma(1-\beta)}
G^{1,1}_{1,1}\big(p|\genfrac{}{}{0pt}{}{\alpha+\beta}{\alpha}\big)$, the 
last line of \eqref{eq:Rabw} is the Hilbert transform at
$\frac{b-w}{1+w}$ of a Meijer-G function. This Hilbert transform 
is simply obtained by 
adding a leading 0 and a terminating $\frac{1}{2}$ to both rows of
arguments:
\begin{align*}
\frac{1}{\pi} \intbar_0^\infty \td p \; 
\frac{p^\alpha (1+p)^{\beta-1}}{(p-\frac{(b-w)}{1+w})}
&= \frac{1}{\Gamma(1-\beta)}
G^{2,2}_{3,3}\Big(\frac{b-w}{1+w}\Big| 
\genfrac{}{}{0pt}{}{0,\alpha+\beta;\frac{1}{2}}{
0,\alpha;\frac{1}{2}}\Big)\;.
\end{align*}
The Meijer-G function on the rhs is expanded into a ${}_2F_1$ function and a 
 ${}_1F_0$ function. The latter one cancels the first term on the rhs of 
\eqref{eq:Rabw}, giving 
\begin{align*}
R_{\alpha,\beta}(a,b;w)
&= \frac{(1+w)^{\alpha+\beta}}{(1{+}a{+}b{-}w)} 
\frac{(a-w)}{(1+a)}
\frac{\alpha\Gamma(1{-}\alpha{-}\beta)}{
\Gamma(2{-}\beta)\Gamma(1{-}\alpha)}
{}_2F_1\Big(\genfrac{}{}{0pt}{}{1,1{+}\alpha}{2{-}\beta}\Big| 
\frac{a-w}{1+a}\Big) 
\\
&+\frac{(1+w)^{\alpha+\beta}}{(1{+}a{+}b{-}w)} 
\frac{\Gamma(1{-}\alpha{-}\beta)}{\Gamma(1{-}\alpha)\Gamma(1{-}\beta)}
\frac{1+b}{1+w}
{}_2F_1\Big(\genfrac{}{}{0pt}{}{1,1{-}\alpha{-}\beta}{1{-}\alpha}\Big| 
\frac{w-b}{1+w}\Big) \;.
\end{align*}
A contiguous relation in the first line together with 
fractional transformations of both lines bring this formula 
into the following manifestly symmetric form:
\begin{equation}
\begin{split}
R_{\alpha,\beta}(a,b;w)
&= \frac{1}{(1+a+b-w)} 
\frac{\Gamma(1-\alpha-\beta)}{
\Gamma(1-\alpha)\Gamma(1-\beta)}
\bigg\{
-(1+w)^{\alpha+\beta}
\\ & \qquad \qquad
+(1+w)^{\beta}(1+a)^\alpha
{}_2F_1\Big(\genfrac{}{}{0pt}{}{-\beta,\alpha}{1-\beta}\Big| 
\frac{w-a}{1+w}\Big) 
\\ &\qquad \qquad
+ (1+w)^\alpha(1+b)^\beta 
{}_2F_1\Big(\genfrac{}{}{0pt}{}{-\alpha,\beta}{1-\alpha}\Big| 
\frac{w-b}{1+w}\Big) 
\bigg\}\;.
\label{eq:R-2F1}%
\end{split}
\end{equation}
This hypergeometric function generates the Nielsen polylogarithms, as observed in \cite[Equation~(2.12)]{KoelbigMignacoRemiddi:NielsenNumerical} and \cite[Theorem~6.6]{BorweinBradleyBroadhurstLisonek:SpecialValues}:
\begin{equation}
	{_2F_1} \left( -x, y \atop 1-x \middle| z \right)
	= 1-\sum_{n,p \geq 1} S_{n,p}(z) x^n y^p 
	.
	\label{eq:Nielsen-2F1}%
\end{equation}
We note that the Gamma functions in \eqref{eq:R-2F1} expand into Riemann 
zeta values,
\begin{equation}
\frac{\Gamma(1-\alpha-\beta)}{
\Gamma(1-\alpha)\Gamma(1-\beta)}
= \exp\Big( \sum_{k=2}^\infty 
\big((\alpha+\beta)^k-\alpha^k-\beta^k \big) \frac{\zeta(k)}{k}\Big)\;.
\end{equation}
For the derivative with respect to $w$, note that for $z=(w-a)/(1+w)$, we have
\begin{equation*}
	\left.\frac{\partial}{\partial w} \right|_{w=0} S_{n,p}(z)
	= S_{n,p}'(-a)
	\left.\frac{\partial z}{\partial w} \right|_{w=0} 
	= (1+a) S_{n,p}'(-a)
	= -(1+a) \partial_a S_{n,p}(-a)
	.
\end{equation*}
The contribution to $\partial_w|_{w=0} R$ from the second line 
in \eqref{eq:R-2F1} is then
\begin{equation*}
	\frac{(1+a)^{\alpha}}{1+a+b}
	\left\{ \frac{1}{1+a+b} + \beta - (1+a)\partial_a \right\}
	\left( 1-\sum_{n,p=1}^{\infty} S_{n,p}(-a) \beta^n \alpha^p \right),
\end{equation*}
up to the Gamma prefactor. Hence, we can compute the expansion of
$\partial_w|_{w=0} R$ in terms of zeta values, logarithms $\log(1+a)$
and polylogarithms $S_{n,p}(-a)$ (and those with $a$ replaced by $b$),
with rational functions of $a$ and $b$ as coefficients.

\begin{Remark}
The same strategy applied to \eqref{eq:Nlambda} leads with the integral
representation 
$\frac{1}{2\pi}\int_0^\infty \td t \;(\frac{1}{2}+\iu t)^{-x} 
(\frac{1}{2}-\iu t)^{-y}= \frac{1}{(x+y-1)B(x,y)}$ 
valid for $x+y>1$ of the reciprocal Beta function to 
\begin{align}
\partial_w R_{\alpha,\beta}(a,b;w)
\big|_{w=0}
&=
\frac{\Gamma(2-\alpha-\beta)}{
\Gamma(2-\alpha)\Gamma(2-\beta)}
F_2\Big(\genfrac{}{}{0pt}{}{2-\alpha-\beta;2,2}{2-\beta,
2-\alpha}\Big|-a,-b\Big)
\nonumber
\\
&- 
\frac{\alpha\beta \Gamma(2-\alpha-\beta)}{
\Gamma(2-\alpha)\Gamma(2-\beta)}
F_2\Big(\genfrac{}{}{0pt}{}{2-\alpha-\beta;1,1}{2-\beta,
2-\alpha}\Big|-a,-b\Big)\;.
\end{align}
Here, $F_2\big(\genfrac{}{}{0pt}{}{a;b_1,b_2}{c_1,c_2}\big|x,y\big)$ 
denotes the second Appell hypergeometric function 
in two variables.
\end{Remark}

\section{Discussion}

\label{sec:discussion}

\subsection{Uniqueness} 

The discussion of possible non-trivial
solutions of the homogeneous Carleman equation in connection with
\eqref{eq:Carl-hom} shows that the solution to \eqref{NLIQ}
is not necessarily unique. However, the expansion in $\lambda$ shows that \eqref{NLIQ} uniquely determines the formal power series expansion of all solutions. It follows that our solution in Theorem~\ref{thm:Lambert} the unique \emph{analytic} solution of \eqref{NLIQ}.

The difference between our solution and any other solution must hence be flat.
Note that flat contributions indeed show up in certain formulae 
for negative coupling constants, see sec.~\ref{sec:warning}.

\subsection{Four dimensions} The solution method should also extend to the
$\lambda\phi^{\star 4}$-model on 4-dimensional Moyal space. For finite
$\CO$ this amounts to changing the integration measure in
\eqref{Gab-integral} from $\td p$ to $p\,\td p$.  This creates much more
severe divergences for $\CO\to \infty$ which require subtle
rescaling by a wavefunction renormalisation $Z(\CO)$ and a more
complicated dependence of $\mu^2$ on $\CO$. Whereas
\eqref{Gab-real} already agrees with \cite[Thm.~4.7]{GW:Phi44nonnon},
up to a global factor $a$ from the changed measure and a global
renormalisation constant, an analogue of \eqref{cottauba} was missing
in \cite{GW:Phi44nonnon}. This lack was compensated by 
a symmetry argument which allowed to prove existence of a solution,
but there was no way to obtain an explicit formula. The methods
developed here give hope to achieve such a formula.

\subsection{Convergence}
We recall that \eqref{HyperInt} and our solution in
Theorem~\ref{thm:Lambert} have non-zero radius of convergence, as
expected for integrable quantum field theory.

\subsection{Integrability} Solving a non-linear problem such as \eqref{NLIQ} by
  (generalised) radicals can only be expected if some deep algebraic
  structure is behind.  We have no idea what it is%
\footnote{%
	  We quote from the bibliographical notes of \cite{Knuth}: \emph{``We find
it a remarkable coincidence that the curves defining the branch cuts of the Lambert W
function (which contain the Quadratix of Hippias) can be used 
not only to square the
circle---which, by proving $\pi$ irrational, Lambert went a long way 
towards proving was impossible by compass and straightedge---but
also to trisect a given angle.''} Is the same capability of Lambert-W used
to solve \eqref{NLIQ}?
},
  but we find it
  worthwhile to explore that connection. We remark that the initial
  action \eqref{action-matrix} is closely related to the action
  $S(\Phi)= V\,\tr (E \Phi^2+\tfrac{\lambda}{3}\Phi^3)$ of the
  Kontsevich model \cite{Kontsevich:1992ti}.  This model gives rise to
  solvable $\lambda\Phi^3$-matricial QFT-models in dimension
  $2,4$ and $6$ \cite{Grosse:2016pob, Grosse:2016qmk} which, however, are
  modest from a number-theoretical point of view: In a perturbative
  expansion of correlation functions only $\log(1+a)$ arises and only
  at lowest order, no polylogarithms as in \eqref{Gab-pert}.  The
  $\Phi^4$-model is much richer and closer to true QFT-models.  The
  Kontsevich model relates to infinite-dimensional Lie algebras and to
  the $\tau$-function of the KdV-hierarchy. It generates intersection
  numbers of stable cohomology classes on the moduli space of complex
  curves \cite{Kontsevich:1992ti}. Something similar should exist for
  $\Phi^4$ as well.

\subsection{Simplicity} Perturbative expansions in realistic quantum field theories like the Standard Model also produce (much more complicated) polylogarithms and other transcendental functions; see for example \cite{ABBFMS:splitting,HennSmirnov:Bhabha,Laporta:g-2at4L,Todorov:GraphicalFunctions,PanzerSchnetz:Phi4Coaction}.
  It would be exciting if the tremendous apparent complexity of those series could also be produced by an integral transform of a simpler function.

\section*{Acknowledgements} 

We are grateful to Spencer Bloch and Dirk Kreimer for invitation to the Les
Houches summer school ``Structures in local quantum field theories''
where the decisive results of this paper were obtained. 
Discussions during this school in particular with 
Johannes Bl\"umlein, David Broadhurst and Gerald Dunne contributed
valuable ideas. RW would like to thank Alexander Hock for pointing out
reference \cite{Gakhov} and Harald Grosse for the long-term
collaboration which preceded this work.

%\bibliographystyle{JHEPsortdoi}
%\bibliography{refs}
\providecommand{\href}[2]{#2}\providecommand{\eprintlink}[2]{\href{#1}{#2}}\begingroup\endgroup

\end{document}